\numberwithin{equation}{section}
\newcommand{\dbra}[1]{[\kern-0.15em[ #1 ]\kern-0.15em]}
\newcommand{\dbraco}[1]{[\kern-0.15em[ #1 [\kern-0.15em[}
\newcommand{\dbraoc}[1]{]\kern-0.15em] #1 ]\kern-0.15em]}
\newcommand{\dbraoo}[1]{]\kern-0.15em] #1 [\kern-0.15em[}
\newcommand{\cA}{{\mathcal A}}  
\newcommand{\cB}{{\mathcal B}}
\newcommand{\cO}{{\mathcal O}}
\newcommand{\ccF}{{\mathscr F}}\newcommand{\cF}{{\mathcal F}}
\newcommand{\ccG}{{\mathscr G}}\newcommand{\cG}{{\mathcal G}}
\newcommand{\cP}{{\mathcal P}}
\newcommand{\cE}{{\mathcal E}}
\newcommand{\Ind}{{\mathds 1}}
\newcommand{\ind}[1]{\Ind_{\{#1\}}}
\newcommand{\R}{\mathbb{R}}
\newcommand{\FF}{\mathbb{F}}
\newcommand{\bbF}{\mathbb{F}}  
\newcommand{\bbG}{\mathbb{G}}
\newcommand{\N}{\mathbb{N}}
\newcommand{\E}{\mathbb{E}}
\newcommand{\Q}{\mathbb{Q}}
\newcommand{\T}{\mathbb{T}}
\newcommand{\prob}{\mathbb{Q}} 
\newtheorem{theorem}{Theorem}[section]
\newtheorem{corollary}[theorem]{Corollary}      
\newtheorem{lemma}[theorem]{Lemma}              
\newtheorem{proposition}[theorem]{Proposition}  
\theoremstyle{definition}
\newtheorem{example}[theorem]{Example}
\newtheorem{remark}[theorem]{Remark}
\newtheorem{assumption}[theorem]{Assumption}
\renewcommand{\cF}{\ccF}
\renewcommand{\cG}{\ccG}
\begin{document}

\title{General Dynamic Term Structures under Default Risk}

\author {Claudio Fontana \and Thorsten Schmidt}
\address{Laboratoire de Probabilit\'es et Mod\`eles Al\'eatoires, Universit\'e Paris Diderot, avenue de France, 75205, Paris, France.}
\email{fontana@math.univ-paris-diderot.fr}
\address{Freiburg University, Dep. of Mathematical Stochastics, Eckerstr. 1, 79104 Freiburg, Germany  \hspace*{\fill}\linebreak
Freiburg Research Institute of Advanced Studies, Alberstr. 29, 79104 Freiburg, Germany \hspace*{\fill}\linebreak
University of Strasbourg Institute for Advanced Study, 5 all\'e du G\'en\'eral Rouvillois, 67083 Strasbourg.}
\email{thorsten.schmidt@stochastik.uni-freiburg.de}
\thanks{The authors are thankful to Anna Aksamit, Josef Teichmann, and the participants of the FWZ-Seminar for valuable discussions on the topic of the present paper and to two anonymous referees for valuable suggestions that helped improve the paper.}
\date{\today}

\keywords{Credit risk, HJM, arbitrage, forward rate, default compensator, structural approach, reduced-form approach, large financial market, recovery.}

\begin{abstract} 
We consider the problem of modelling the term structure of defaultable bonds, under minimal assumptions on the default time. In particular, we do not assume the existence of a default intensity and we therefore allow for the possibility of default at predictable times. 
It turns out that this requires the introduction of an additional term in the forward rate approach by Heath, Jarrow and Morton (1992). This term is driven by a random measure encoding information about those times where default can happen with positive probability. 
In this framework, we  derive necessary and sufficient conditions for a reference probability measure to be a local martingale measure for the large financial market of credit risky bonds, also considering general recovery schemes.
\end{abstract}

\maketitle

\section{Introduction} 

The study of  the evolution of the term structure of bond prices in the presence of default risk typically starts from a forward rate model similar to the classical approach of Heath, Jarrow and Morton (HJM) in \cite{HJM}. In this approach,  bond prices are assumed to be absolutely continuous with respect to the lifetime of the bond (maturity). This assumption is typically justified by the argument that, in practice, only a finite number of bonds are liquidly traded and the full term structure is obtained by interpolation, hence is smooth. 

In markets with default risk, however, discontinuities are the rule rather than the exception: the seminal work of Merton \cite{Merton1974} clearly shows such a behavior, as do many other {\em structural models} (see, e.g., \cite{BelangerShreveWong2004,Geske1977,GeskeJohnson84}).
A default event usually occurs in correspondence of a missed payment by a corporate or sovereign entity and, in many cases, the payment dates are publicly known in advance. 
The missed coupon payments by Argentina on a notional of \$29 billion (on July 30, 2014; see \cite{ISDAArgentina2014}) and by Greece on a notional of \euro1.5 billion (on June 30, 2015; see \cite{NYTimes2015}) are prime examples of credit events occurring at predetermined payment dates.
It is therefore natural to expect the term structure of default risky bonds to exhibit discontinuities in correspondence of such payment dates.\footnote{As an illustrative example, the timeline of the payment dates on Greece's debt is publicly available and daily updated at \texttt{http://graphics.wsj.com/greece-debt-timeline}.}

On the other side, {\em reduced-form models} (see  \cite{ArtznerDelbaen95,DuffieSchroderSkiadas96,EJY,JarrowLandoTurnbull97,Lando94} for some of the first works in this direction) are less ambitious about the precise mechanism leading to default and neglect this phenomenon.  
Reduced-form models generally assume the existence of a {\em default intensity}, thus implying that the probability of the default event occurring at any predictable time vanishes. 
Accordingly, reduced-form HJM-type models for defaultable term structures typically postulate that, prior to default, bond prices are absolutely continuous with respect to maturity, i.e., under the assumption of zero recovery, credit risky bond prices $P(t,T)$ are described by
\begin{equation} \label{HJM:intro}
P(t,T) = \ind{\tau > t} \exp\bigg( - \int_t^T f(t,u) du \bigg),
\end{equation}
with $\tau$ denoting the random default time and $(f(t,T))_{0\leq t\leq T}$ an instantaneous forward rate.
This approach has been studied in numerous works and up to a great level of generality, beginning with the first works \cite{DuffieSingleton99,JarrowTurnbull1995,Schoenbucher:CRDerivatives,Schoenbucher98}  and extended in various directions in \cite{EbGr,EbOz,SchmidtOezkan,TSchmidt_InfiniteFactors} (see  \cite[Chapter 13]{BieleckiRutkowski2002} for an overview of the relevant literature). 

It turns out that, assuming absence of arbitrage, the presence of predictable times at which the default event can occur with strictly positive probability is incompatible with an absolutely continuous term structure of the form \eqref{HJM:intro}. This fact, already pointed out in 1998 in \cite{Schoenbucher98}, has motivated more general approaches such as \cite{BelangerShreveWong2004} and \cite{GehmlichSchmidt2016MF} (see Section~\ref{sec:rel-literature} for an overview of the related literature). 
In particular, in the recent paper \cite{GehmlichSchmidt2016MF}, the classical reduced-form HJM approach is extended by allowing the {\em default compensator} 
to have an absolutely continuous part, corresponding to a default intensity, as well as a discontinuous part with a finite number of jumps. The presence of jumps allows the default event to occur with strictly positive probability at the predictable jump times, which in \cite{GehmlichSchmidt2016MF} are assumed to be revealed in advance in the market. In this context, in order to exclude arbitrage possibilities, the term structure equation \eqref{HJM:intro} has to be extended by introducing discontinuities in correspondence of those times.

In the present paper, we introduce a general framework for the modelling of defaultable term structures under minimal assumptions, going significantly beyond the intensity-based approach and generalizing the setting of \cite{GehmlichSchmidt2016MF}.
More specifically, we refrain from making \emph{any} assumption on the default time $\tau$ as well as on the default compensator, allowing in particular the default event to occur with strictly positive probability at predictable times. 
To the best of our knowledge, previous approaches to the modelling of defaultable term structures have always imposed some assumptions on $\tau$.
A natural and general way to represent the term structure of credit risky bonds, also allowing for discontinuities, is to extend \eqref{HJM:intro} to the following specification:
\begin{equation}	\label{HJM:intro2}
P(t,T) = \ind{\tau > t} \exp\bigg( - \int_t^T f(t,u) du - \int_{(t,T]}g(t,u)\mu_t(du)\bigg),
\end{equation}
where $(\mu_t(du))_{t\geq0}$ is a measure-valued process with possibly singular and jump parts and where $(f(t,T))_{0\leq t\leq T}$ and $(g(t,T))_{0\leq t\leq T}$ are two random fields representing instantaneous forward rates.
The additional term $\int_{(t,T]}g(t,u)\mu_t(du)$ can be interpreted as the impact of the information received up to date $t$ about possible ``risky dates'' (i.e., periods at which the default event can occur with strictly positive probability) in the remaining lifetime $(t,T]$ of the bond.  
We refer to Section~\ref{sec:easy_case} for a simple illustration of the term structure specification \eqref{HJM:intro2}.

In this general setting, we obtain necessary and sufficient conditions for a reference probability measure $\Q$ to be a local martingale measure for the infinite-dimensional financial market consisting of all credit risky bonds, thereby ensuring absence of arbitrage in a sense to be precisely specified below. 
Furthermore, we also study the extension of \eqref{HJM:intro2} to the case of a general recovery process over successive credit events. 
In overall terms, the present paper can be regarded as a general HJM-type framework bridging the gap between intensity-based and structural models.
Moreover, despite the level of generality, our HJM-type conditions admit a clear economic interpretation and can be further simplified in several special cases of practical interest, notably in the case where the process $(\mu_t(du))_{t\geq0}$ is generated by an integer-valued random measure.

The importance of allowing for jumps at predictable times is widely acknowledged in the financial literature. This is due to the fact that jumps in prices typically occur in correspondence of macroeconomic announcements (see, e.g., \cite{Johannes}) and macroeconomic announcements are released at scheduled (predictable) dates. 
In this direction, an econometric model allowing for jumps in correspondence of the meeting dates of the Federal Open Market Committee has been developed and tested in \cite{Piazzesi}. It is shown that the introduction of policy-related jumps improves bond pricing  and allows to generate realistic volatility patterns.
This is further substantiated by the analysis of \cite{KimWright} and  \cite{FlemingRemolona}, where it is shown that macroeconomic announcements have a particularly strong effect on maturities of one to five years.
Recent political events like the Brexit and the election of the American president in 2016 have highlighted the significance of discontinuities occurring at scheduled dates in financial markets.
In this perspective, the present paper contributes to the financial literature by providing for the first time a general theory of defaultable term structure modelling in the presence of jumps occurring at scheduled dates.

The paper is structured as follows. Section \ref{sec:tau} contains a description of the setting and the main technical assumptions and presents a general decomposition of the default compensator process. The main results of the paper are presented in Section \ref{sec:main_results}, first in the case of zero recovery at default and then for a general recovery process. Special cases and examples are also discussed, together with relations to the literature (see Section \ref{sec:rel-literature}). Section \ref{sec:proof} contains the proofs of all our results. 

\section{General defaultable term structure models}\label{sec:tau}

\subsection{Setting}	\label{sec:proba_setting}
 
Let $(\Omega, \cA, \prob)$ be a probability space endowed with a filtration $\bbF=(\cF_t)_{0\leq t\leq \T}$ satisfying the usual conditions (i.e., $\bbF$ is right-continuous and, if $A\subseteq B \in \cA$ and $\prob(B)=0$, then $A \in \cF_0$), with $\T<+\infty$ denoting a final time horizon.\footnote{{The infinite time horizon case can be dealt with in a similar way and leads to analogous results, provided that $\mu$ is a random measure on $\R^2_+$ satisfying a suitable version of Assumption \ref{ass:rnd_meas}.}} 
We assume that the filtered probability space $(\Omega, \cA, \bbF, \prob)$ is sufficiently rich to support an $n$-dimensional Brownian motion $W=(W_t)_{0\leq t\leq \T}$ and an optional non-negative random measure $\mu(ds,du)$ on $[0,\T]\times[0,\T]$.
Throughout the paper, the probability measure $\prob$ will represent a reference probability measure. We follow the notation of \cite{JacodShiryaev} and refer to this work for details on stochastic processes which are not laid out here. 

\subsection{The default time}

We consider an abstract economy containing an entity (e.g., a company or a sovereign) which may default at the random {\em default time} $\tau$. The filtration $\bbF$ is meant to represent all information publicly available in the market. The default event is publicly observable,  which implies that the random time $\tau$ is an $\bbF$-stopping time.
We define the associated \emph{default indicator process} $H=(H_t)_{0\leq t\leq \T}$ by $H_t := \ind{\tau\leq t}$, for $t\in[0,\T]$.
We will also make use of the \emph{survival process} $1-H$.
The process $H=\Ind_{\dbra{\tau,\T}}$ is $\bbF$-adapted, bounded, right-continuous and increasing on $[0,\T]$. Hence, by the Doob-Meyer decomposition (see, e.g., \cite[Theorem I.3.15]{JacodShiryaev}), there exists a unique predictable, integrable and increasing process $H^p=(H^p_t)_{0\leq t\leq \T}$ with $H^p_0=0$, called the {\em default compensator} (or dual predictable projection of $H$), such that the process $H-H^p$ is a uniformly integrable martingale on $(\Omega,\bbF,\prob)$. Note also that $H^p_t=H^p_{\tau\wedge t}$, for all $t\in[0,\T]$.

Apart from the minimal assumption of being an $\bbF$-stopping time, we do not introduce any further assumption on $\tau$. In this general setting, the default compensator $H^p$ is not necessarily absolutely continuous (i.e., a {\em default intensity} may fail to exist) and may also contain both singular and jump parts, as shown in the following lemma (proofs are given in Section \ref{sec:proof}).

\begin{lemma}	\label{lem:dec_comp}
The default compensator $H^p$ admits the unique decomposition
\begin{equation}	\label{eq:dec_comp}
H^p_t = \int_0^t h_s ds + \lambda_t + \sum_{0<s\leq t}\Delta H^p_s, 
\qquad\text{ for all } 0\leq t \leq \T,
\end{equation}
where $(h_t)_{0\leq t \leq \T}$ is a non-negative predictable process such that $\int_0^{\T}h_sds<+\infty$ a.s. and $(\lambda_t)_{0\leq t\leq \T}$ is an increasing continuous process with $\lambda_0=0$ such that $d\lambda_s(\omega)\perp ds$, for a.a. $\omega\in\Omega$.
\end{lemma}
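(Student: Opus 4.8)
The plan is to read \eqref{eq:dec_comp} as the combination of two classical decompositions applied to the increasing process $H^p$: first the splitting of $H^p$ into its continuous and purely discontinuous parts, and then the pathwise Lebesgue decomposition of the continuous part into an absolutely continuous piece (with respect to Lebesgue measure $ds$) and a singular piece $\lambda$. Recall from the discussion preceding the lemma that $H^p$ is predictable, increasing, integrable and satisfies $H^p_0=0$. The only genuine difficulty lies in performing the Lebesgue decomposition, which is \emph{a priori} defined only $\omega$ by $\omega$, in such a way that the resulting density $h$ is a \emph{predictable} process; once this is achieved, every remaining claim follows routinely.

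First I would isolate the jump term. Writing
\[
H^{p,d}_t := \sum_{0<s\leq t}\Delta H^p_s,
\qquad
H^{p,c}_t := H^p_t - H^{p,d}_t,
\]
I note that, $H^p$ being increasing and integrable, one has $H^{p,d}_t\leq H^p_{\T}<+\infty$ a.s., so the jump sum converges, and $H^{p,c}$ is continuous and increasing with $H^{p,c}_0=0$. Since $H^p$ is predictable, its jumps occur at predictable times and the purely discontinuous part $H^{p,d}$ is again predictable, whence $H^{p,c}$ is a continuous, adapted---hence predictable---increasing process (see, e.g., \cite{JacodShiryaev}).

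Next I would treat the continuous part. For a.a. $\omega$, the map $t\mapsto H^{p,c}_t(\omega)$ is continuous and increasing, so its Lebesgue--Stieltjes measure admits the Lebesgue decomposition $dH^{p,c}_s(\omega)=h_s(\omega)\,ds+d\lambda_s(\omega)$ with $d\lambda_s(\omega)\perp ds$. To obtain a predictable version of the density, I would set
\[
h_t := \liminf_{n\to\infty}\, n\Big(H^{p,c}_t - H^{p,c}_{(t-1/n)\vee 0}\Big),
\]
which is non-negative and predictable, being a $\liminf$ of predictable processes (each difference quotient is continuous and adapted, hence predictable), and which by the Lebesgue differentiation theorem coincides, for a.a. $\omega$ and Lebesgue-a.e. $t$, with the Radon--Nikodym derivative of the absolutely continuous part of $dH^{p,c}_\cdot(\omega)$. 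Setting $\lambda_t := H^{p,c}_t-\int_0^t h_s\,ds$ then yields a continuous increasing process with $\lambda_0=0$ and $d\lambda_s(\omega)\perp ds$ for a.a. $\omega$, while $\int_0^{\T}h_s\,ds\leq H^{p,c}_{\T}\leq H^p_{\T}<+\infty$ a.s. gives the required integrability. Assembling the three pieces produces \eqref{eq:dec_comp}.

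Finally, uniqueness follows step by step: the jump part is determined by the jumps of $H^p$, hence so is $H^{p,c}$, and the Lebesgue decomposition of the continuous part into its absolutely continuous and singular components is unique, so $\int_0^{\cdot}h_s\,ds$ and $\lambda$ are uniquely determined and $h$ is unique up to a $ds\otimes d\prob$-null set. I expect the construction of the predictable density $h$ to be the main obstacle: the pathwise Lebesgue differentiation theorem yields $h$ only for each fixed $\omega$, and the delicate point is to glue these pathwise derivatives into a single predictable process, which is precisely what the $\liminf$ regularization above is designed to accomplish.
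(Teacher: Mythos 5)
Your proof is correct and follows essentially the same route as the paper: split off the jump part of the predictable finite-variation process $H^p$, then perform the Lebesgue decomposition of the continuous remainder while ensuring the density is predictable. The only difference is that the paper outsources this last step to \cite[Theorem 2.1]{DelbaenSchachermayer1995}, whereas you construct the predictable density directly as $\liminf_{n}n\bigl(H^{p,c}_t-H^{p,c}_{(t-1/n)\vee 0}\bigr)$ together with the Lebesgue differentiation theorem --- which is in substance the standard proof of the cited result, so nothing is lost.
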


We denote by $\{\Delta H^p\neq0\}=\bigcup_{i\in\N}\dbra{U_i}$ the thin set of the jump times of the default compensator $H^p$, where $(U_i)_{i\in\N}$ is a family of predictable times (see \cite[Proposition~I.2.24]{JacodShiryaev}). 
By \cite[\textsection~I.3.21]{JacodShiryaev}, it holds that
\[
\Q(\tau = U_i \leq \T) = \E[\Delta H_{U_i}] = \E[\Delta H^p_{U_i}] >0,
\qquad\text{ for all }i\in\N,
\] 
meaning that the default event has a strictly positive probability of occurrence in correspondence of the predictable dates $(U_i)_{i\in\N}$. The classical \emph{intensity-based approach} can be obtained as a special case by letting $\lambda=\Delta H^p=0$ in decomposition \eqref{eq:dec_comp}.
Typical examples where the continuous singular part $\lambda$ is non-null are provided by last passage times (see, e.g., \cite[Section 4]{EJY}).

\subsection{The term structure of credit risky bonds}	\label{sec:term_str}

A {\em credit risky bond} with maturity $T\in[0,\T]$ is a contingent claim promising to pay one unit of currency at maturity $T$, provided that the defaultable entity does not default before date $T$. 
We denote by $P(t,T)$ the price at date $t$ of a credit risky bond with maturity $T$, for all $0\leq t\leq T\leq \T$.
As a first step, we restrict our attention to the {\em zero-recovery} case, meaning that we assume that the credit risky bond becomes worthless as soon as the default event occurs, i.e., $P(t,T)=0$ if $H_t=1$, for all $0\leq t\leq T\leq \T$ (see Section~\ref{sec:gen_rec} for the analysis of general recovery schemes).

The family of stochastic processes $\{(P(t,T))_{0 \le t \le T}$; $T\in[0,\T]\}$ describes the evolution of the \emph{term structure} $T \mapsto P(t,T)$ over time.
Following the extended HJM-framework suggested in  \cite{GehmlichSchmidt2016MF}, we assume that the term structure of credit risky bonds is of the form
\begin{align}\label{PtT}
  P(t,T) = (1-H_t) \exp\bigg( -\int_t^T f(t,u) du - \int_{(t,T]} g(t,u) \mu_t(du)\bigg), 
  \quad \text{ for all } 0 \le t \le T \le \T,
  \end{align}
corresponding to equation \eqref{HJM:intro2} in the introduction.
Here $\mu(du)=(\mu_t(du))_{0\leq t\leq \T}$ is the measure-valued process defined by $\mu_t(du):=\mu([0,t]\times du)$, for $t\in[0,\T]$, with $\mu(ds,du)$ being the random measure introduced in Section~\ref{sec:proba_setting}.
The processes $f$ and $g$ are assumed to be of the  form\footnote{We want to point out that our results can be extended to the case where $f$ and $g$ are more general semimartingale random fields.
Since our main goal consists in studying defaultable term structures driven by general random measures, we prefer to let $f$ and $g$ be of the simple form \eqref{f1}-\eqref{g1}, in order not to obscure the presentation by too many technical issues.}
\begin{align}	\label{f1}
f(t,T) &= f(0,T) + \int_0^t a(s,T)ds + \int_0^t b(s,T) dW_s,\\
g(t,T) &= g(0,T) + \int_0^t \alpha(s,T)ds + \int_0^t \beta(s,T) dW_s,  
\label{g1}
\end{align}
for all $0\leq t\leq T\leq \T$.
The precise technical assumptions on the random measure $\mu$ as well as on the processes appearing in \eqref{f1}-\eqref{g1} will be given in Section \ref{sec:prelim} below. For the moment, let us briefly comment on the interpretation of the term structure equation \eqref{PtT}.

\begin{remark}[On the role of $g$ and $\mu$]	\label{rem:interpretation}
In comparison with the classical HJM framework applied to credit risk (see, e.g., \cite{DuffieSingleton99,JarrowTurnbull1995,Schoenbucher:CRDerivatives,Schoenbucher98}), the novelty of the term structure equation \eqref{PtT} consists in the presence of the random measure $\mu$ and the associated forward rate $g$. 
The random measure $\mu$ encodes the information received over time about possible ``risky dates'' or ``risky periods'' where, on the basis of the available information, the default event is perceived to be more likely to happen (explicit examples will be provided below).
More specifically, the first argument of $\mu$ represents as usual the running time, while the second argument of $\mu$ identifies the possible risky dates and periods.
Hence, the integral with respect to $\mu_t(du)$ appearing in \eqref{PtT} represents the effect of all the information received up to date $t$ concerning the likelihood of default in the remaining lifetime $(t,T]$ of the bond.  
The assumption that $\mu$ is an optional random measure simply captures the fact that this information about future risky dates is publicly available, but may suddenly arrive in the market (since $\mu$ is not necessarily predictable). 
As will be shown below, absence of arbitrage will imply a precise relationship between the default compensator $H^p$ and the random measure $\mu$.

The forward rate $g$ decodes the impact of this information on the term structure. In some cases it is possible to represent $\int_t^T f(t,u) du +\int_{(t,T]} g(t,u) \mu_t(du)$ by a single term of the form
\begin{align} \label{PtTonlywithf} 
\int_{(t,T]} \tilde f(t,u) \tilde \mu_t(du),
\end{align}
for example when $\mu$ is deterministic, see Section \ref{sec:generalizedMerton}. However, this may not always be feasible or convenient, see Example \ref{ex:doublystochastic}. In this article we decided to cover the general case \eqref{PtT}, while term structure models based on \eqref{PtTonlywithf} clearly allow for a simpler mathematical treatment. \hfill $\diamond$
\end{remark}

The following example illustrates the modelling of bad news which may lead to discontinuities in the term structure. As pointed out in \cite{GehmlichSchmidt2016MF,JiaoLi2015},  the  failure of \euro 1.5 billion of Greece on a scheduled debt repayment to the International Monetary fund as well as Argentina's missed coupon payment on \$29 billion debt are prominent examples of such cases.\footnote{See the announcements in \cite{ISDAArgentina2014} and \cite{ReutersArgentina2014}, as well as \cite{NYTimes2015}.}

\begin{example}[Sovereign credit with surprising bad news]
\label{ex:doublystochastic}
Consider a credit from a country in the best rating class. Under normal circumstances, this could be interpreted as no default risk in the considered time horizon (i.e., $\tau = +\infty$). However, it might be the case that the country is hit by an unexpected event, which could be a catastrophe, a market crash or other unthought risks.
Assume that news about this risk arrive at a random time $S$. The next expected payment of the credit is due at some random time $U>S$ and we denote the probability that the payment will be missed by $p\in[0,1]$. Hence, 
\[
\tau = \begin{cases} U \quad &\text{ with probability }p;\\
+\infty \quad &\text{ with probability }1-p.
\end{cases}
\]
Let the filtration $\bbG=(\cG_t)_{0\leq t\leq \T}$ be generated by the process $(\ind{S \le t} (1+U))_{0\leq t\leq \T}$, properly augmented.
The filtration $\bbF=(\cF_t)_{0\leq t\leq \T}$ is given by the progressive enlargement of $\bbG$ with $\tau$, i.e.,
\[ 
\cF_t = \bigcap_{s>t}(\cG_s \vee \sigma(\tau\wedge s)),
\qquad\text{ for all } 0\leq t\leq \T.
\]  
Then, on $\{t < S\}$, no additional information is available and, hence, $\tau=+\infty$ with probability $(1-p)$. Therefore, for all $A\in\cB([0,\T])\bigcup\{+\infty\}$,
\[ 
\ind{t<S} \Q(\tau \in A| \cG_t) = \ind{t<S} \Big( p \, \Q(U \in A|\cG_t) + (1-p) \delta_\infty(A)\Big)
= \ind{t<S} \Big( p \, \Q(U \in A) + (1-p) \delta_\infty(A)\Big).
\]
Otherwise, on $\{t \ge S\}$, the risky date $U$ is $\cG_t$-measurable, so that
\[
\ind{t \ge S} \Q(\tau \in A| \cG_t)=\ind{t \ge S} \Big( p \, \delta_{U}(A) + (1-p) \delta_\infty(A) \Big),
\]
with $\delta_a$ denoting the Dirac measure in correspondence of point $a$.
This example can be included in our framework by letting
$\mu (ds,du) =  \Ind_{\dbraco{S,+\infty}}(ds) \,\delta_{U}(du)$.
Assume, for simplicity, that the random variable $U$ has a density, so that $\Q(U>T|U>t)$ can be written as an integral with respect to the Lebesgue measure. 
Then, credit risky bond prices following \eqref{PtT} turn out to be of the  HJM form: $P(t,T)=\ind{\tau > t} e^{-\int_t^T f(t,u) du}$ for $t<S$, and
\[ 
P(t,T) = \ind{\tau > t} \exp\bigg(-\int_t^T f(t,u) du - g(t,U) \ind{t < U \le T} \bigg),
\]
for $t\in[S,T]$. 
On the other hand, if the random variable $U$ is discrete, one may consider the generalized Merton model studied in Corollary \ref{cor:Merton}. 
\hfill $\diamond$
\end{example}

\subsection{Technical assumptions and preliminaries}	\label{sec:prelim}

The following assumptions are needed for the analysis of the term structure model, and we begin with  assumptions on the random measure $\mu$. 

\begin{assumption}	\label{ass:rnd_meas}
The random measure $\mu(ds,du)$ is a non-negative optional random measure in the sense of \cite[Definition~II.1.3]{JacodShiryaev} on $[0,\T]\times[0,\T]$ satisfying the following properties:
\begin{enumerate}
\item[(i)] 
$\mu(\omega;ds,du)=\ind{s<u}\mu(\omega;ds,du)$, for all $(s,u)\in[0,\T]\times[0,\T]$ and $\omega\in\Omega$;
\item[(ii)] 
there exists a sequence $(\sigma_n)_{n\in\N}$ of stopping times increasing a.s. to infinity and such that $\E[\mu_{\sigma_n}([0,\T])]<+\infty$ a.s. for every $n\in\N$.
\end{enumerate}
\end{assumption}

According to the interpretation given in Remark~\ref{rem:interpretation}, part (i) of Assumption \ref{ass:rnd_meas} represents the fact that the new information received at date $s$ only concerns the likelihood of default in the future (and not in the past). In view of \eqref{PtT}, this assumption comes without loss of generality.
Part (ii) ensures that the random measure $\mu$ is predictably $\sigma$-finite, in the sense of \cite[Definition II.1.6]{JacodShiryaev}, and that the random variable $\mu_t([0,\T])$ is a.s. finite, for all $t\in[0,\T]$.
Part (ii) of Assumption \ref{ass:rnd_meas} is equivalent to requiring that the increasing process $(\mu_t([0,\T]))_{0\leq t\leq \T}$ is locally integrable (see e.g. \cite[Remark 3.9]{Jacod}).
Apart from Assumption \ref{ass:rnd_meas}, the random measure $\mu$ is allowed to be general. The following lemma presents a first consequence of Assumption \ref{ass:rnd_meas}. We define the process $\bar{\mu}=(\bar{\mu}_t)_{0\leq t\leq \T}$ by
$$ \bar{\mu}_t:=\mu\bigl([0,\T]\times[0,t]\bigr), \qquad \text{ for all } t\in[0,\T]. $$

\begin{lemma}	\label{lem:predictability}
Suppose that Assumption \ref{ass:rnd_meas} holds. 
Then $\bar{\mu}$ is a predictable and increasing process, admitting the unique decomposition
\begin{equation}	\label{eq:dec_pred_proc}
\bar{\mu}_t = \int_0^tm_sds + \nu_t + \sum_{0<s\leq t}\Delta\bar{\mu}_s,
\qquad \text{ for all }0\leq t \leq \T,
\end{equation}
where  $(m_t)_{0\leq t \leq \T}$ is a  non-negative predictable process such that $\int_0^{\T}m_sds<+\infty$ a.s. and $(\nu_t)_{0\leq t\leq \T}$ is an increasing continuous process with $\nu_0=0$ such that $d\nu_s(\omega)\perp ds$, for almost all $\omega\in\Omega$.
\end{lemma}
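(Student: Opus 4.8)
The plan is to prove the two assertions in turn: first that $\bar\mu$ is a predictable and increasing process, and then that it admits the decomposition \eqref{eq:dec_pred_proc}. The second part will run entirely parallel to the proof of Lemma \ref{lem:dec_comp}, so the genuine work lies in the first part, and there specifically in proving \emph{predictability}. Throughout I would first reduce to a finite measure by localising along the sequence $(\sigma_n)_{n\in\N}$ furnished by part (ii) of Assumption \ref{ass:rnd_meas}: on each stochastic interval the total mass is integrable, hence $\bar\mu_t<+\infty$ a.s.\ and the standard continuity properties of measures are available. Monotonicity of $\bar\mu$ is then immediate, since $t\mapsto[0,\T]\times[0,t]$ is increasing and $\mu$ is non-negative; right-continuity and the existence of left limits follow from continuity from above and from below of $\mu$, which also give $\bar\mu_{t-}=\mu([0,\T]\times[0,t))$, so that $\bar\mu$ is increasing and right-continuous with left limits.

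The crux of the matter is the identity
\begin{equation*}
\bar\mu_t=\mu\bigl([0,t)\times[0,t]\bigr)=\mu_{t-}([0,t]),
\qquad\text{for all }0\le t\le\T,
\end{equation*}
which is a direct consequence of the support condition (i): since $\mu$ charges only $\{s<u\}$, every portion of mass contributing to $\{u\le t\}$ has first coordinate $s<u\le t$, hence $s<t$. Because $\mu$ is optional, $\mu_r(\cdot)=\mu([0,r]\times\cdot)$ is $\cF_r$-measurable for each fixed $r$, so evaluating it on the fixed set $[0,r]$ shows $\bar\mu_r\in\cF_r$ and thus that $\bar\mu$ is adapted; likewise the left-limit measure $\mu_{t-}(\cdot)$ is $\cF_{t-}$-measurable, and the identity above then yields $\bar\mu_t\in\cF_{t-}$ for \emph{every} $t$. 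An adapted, increasing process that is right-continuous with left limits and whose values are all $\cF_{t-}$-measurable is predictable; I would justify this through Dol\'eans' characterisation, verifying that $\bar\mu$ is natural, i.e.\ $\E[\int_0^\T\Delta M_s\,d\bar\mu_s]=0$ for every bounded martingale $M$, the point being that the jumps of $\bar\mu$ are foreseeable (the mass sitting on the line $\{u=t\}$ is already $\cF_{t-}$-measurable). This is precisely the step I expect to be the main obstacle: upgrading the pointwise $\cF_{t-}$-measurability to genuine predictability of the whole process, equivalently verifying that the jump times of $\bar\mu$ are predictable with $\cF_{t-}$-measurable jump sizes. It is here that the support condition (i) does the essential work, shifting all the relevant information strictly into the past.

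Once predictability of $\bar\mu$ is secured, the decomposition \eqref{eq:dec_pred_proc} follows exactly as in the proof of Lemma \ref{lem:dec_comp}. Applying the Lebesgue decomposition pathwise to the random measure $d\bar\mu_s(\omega)$ on $[0,\T]$ relative to Lebesgue measure splits off an absolutely continuous part $\int_0^t m_s\,ds$, whose non-negative density $m$ can be taken predictable, being a measurable limit of difference quotients of the predictable process $\bar\mu$; the purely atomic part is $\sum_{0<s\le t}\Delta\bar\mu_s$, carried by a thin set of predictable times (exactly as for the jump times $(U_i)_{i\in\N}$ of $H^p$, since a predictable increasing process has predictable jump times with $\cF_{-}$-measurable jumps); and the remaining continuous singular part is the increasing continuous process $\nu$ with $\nu_0=0$ and $d\nu_s\perp ds$. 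Uniqueness of the three components follows from the uniqueness of the Lebesgue decomposition together with the uniqueness of the splitting of the singular part into its continuous and atomic pieces, completing the proof.
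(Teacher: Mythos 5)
Your reduction to the identity $\bar\mu_t=\mu\bigl([0,t)\times[0,t]\bigr)$ via the support condition (i) is exactly the right starting point, and your treatment of the decomposition \eqref{eq:dec_pred_proc} matches the paper's (which simply repeats the argument of Lemma \ref{lem:dec_comp}). But the step you yourself flag as the main obstacle is where the argument breaks: the principle ``an adapted, increasing, c\`adl\`ag process whose value at each time $t$ is $\cF_{t-}$-measurable is predictable'' is false. A Poisson process $N$ is a counterexample: for each \emph{fixed} $t$ the event $\{N_t\geq k\}=\{T_k\leq t\}$ differs from $\{T_k<t\}\in\cF_{t-}$ only by the null set $\{T_k=t\}$, so under the usual conditions $N_t$ is $\cF_{t-}$-measurable for every $t$; yet $N$ is not predictable and not natural, since $\E\bigl[\int_0^{\T}\Delta M_s\,dN_s\bigr]\neq 0$ for the bounded martingale obtained by stopping and compensating $N$. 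Pointwise-in-$t$ measurability with respect to $\cF_{t-}$ is a statement about each time separately, whereas predictability is a joint measurability property on $\Omega\times[0,\T]$; the former does not imply the latter. The same defect undermines your proposed verification of naturality: knowing that $\Delta\bar\mu_t$ is $\cF_{t-}$-measurable for each fixed $t$ (again true for the Poisson process, where $\Delta N_t=0$ a.s.\ for fixed $t$) does not yield $\E\bigl[\int_0^{\T}\Delta M_s\,d\bar\mu_s\bigr]=0$.

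The paper closes this gap by a joint-measurability construction rather than a pointwise one. It writes $\bar\mu_t=\int_0^t\int_0^{\T}\ind{u\leq t}\mu(ds,du)$ and, for a bounded measurable $\vartheta$ on $[0,\T]^2$, introduces the family $\Theta(t,v)=\int_0^t\int_0^{\T}\vartheta(v,u)\mu(ds,du)$. For product functions $\vartheta(t,u)=p(t)q(u)$ the diagonal $t\mapsto\Theta(t-,t)$ is manifestly predictable (a deterministic function times the left limit of an optional increasing process, hence adapted and left-continuous), and a monotone class argument extends predictability of $t\mapsto\Theta(t-,t)$ to all bounded measurable $\vartheta$, in particular to $\vartheta(t,u)=\ind{u\leq t}$. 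Condition (i) of Assumption \ref{ass:rnd_meas} then gives $\bar\mu_t=\mu([0,t)\times[0,t])=\Theta(t-,t)$ for this choice of $\vartheta$. So the support condition is used exactly where you place it, but the upgrade to predictability must go through a monotone class argument on the integrand (or an equivalent joint-measurability device), not through the pointwise $\cF_{t-}$-measurability of $\bar\mu_t$.
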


The random variable $\bar{\mu}_t$ measures the existence of risky dates in the period $[0,t]$, on the basis of all available information over $[0,\T]$, compare Remark~\ref{rem:interpretation}.
In a similar way, the quantity $\Delta\bar{\mu}_t=\mu([0,\T]\times\{t\})$ encodes whether time $t$ is perceived as a risky date, on the basis of all available information.
As we shall see in Theorem \ref{thm:dc}, the absence of arbitrage implies a precise relationship between the terms appearing in the  decompositions \eqref{eq:dec_comp} and \eqref{eq:dec_pred_proc}.

The following mild technical assumptions ensure that all (stochastic) integrals are well-defined and that we can apply suitable versions of the (stochastic) Fubini theorem. In the following, we denote by $\cO$ ($\cP$, resp.) the optional (predictable, resp.) $\sigma$-field on $(\Omega,\cA,\bbF)$.

\pagebreak

\begin{assumption}	\label{ass:processes}
The following conditions hold a.s.:
\begin{enumerate}[(i)]
\item
The {\em initial forward curves} $T\mapsto f(\omega;0,T)$ and $T\mapsto g(\omega;0,T)$ are $\cF_0\otimes\cB([0,\T])$-measurable, real-valued, continuous and integrable on $[0,\T]$:
\[
\int_0^{\T}|f(0,u)|du<+\infty
\quad\text{and}\quad
\int_0^{\T}|g(0,u)|\mu_t(du) < +\infty,
\quad\text{ for all }t\in[0,\T];
\]
\item
the \emph{drift processes} $a(\omega;s,u)$ and $\alpha(\omega;s,u)$ are $\mathcal{O}\otimes\mathcal{B}([0,\T])$-measurable and real-valued,  $a(\omega;s,u)=0$ and $\alpha(\omega;s,u)=0$ for all $0\leq u<s\leq \T$, the maps $u\mapsto a(\omega;s,u)$ and $u\mapsto\alpha(\omega;s,u)$ are differentiable and satisfy 
\begin{gather*}
\int_0^{\T}\int_0^{\T}\bigl(|a(s,u)|+|\partial_u a(s,u)|\bigr)ds\,du < +\infty,	\\
\int_0^{\T}\int_0^{\T}|\partial_u\alpha(s,u)|ds\,du < +\infty
\quad\text{ and }\quad
\int_0^{\T}\int_0^{\T}|\alpha(s,u)|ds\mu_t(du) < +\infty,
\quad\text{ for all }t\in[0,\T];
\end{gather*}
\item
the \emph{volatility processes} $b(\omega;s,u)$ and $\beta(\omega;s,u)$ are $\mathcal{O}\otimes\mathcal{B}([0,\T])$-measurable and $\R^n$-valued, $b(\omega;s,u)=0$ and $\beta(\omega;s,u)=0$ for all $0\leq u<s\leq \T$, the maps $u\mapsto b(\omega;s,u)$ and $u\mapsto\beta(\omega;s,u)$ are differentiable and satisfy 
\begin{gather*}
\int_0^{\T}\int_0^{\T}\bigl(\|b(s,u)\|+\|\partial_ub(s,u)\|\bigr)^2ds\,du < +\infty\\
\int_0^{\T}\int_0^{\T}\|\partial_u\beta(s,u)\|^2ds\,du < +\infty
\quad\text{ and }\quad
\int_0^{\T}\int_0^{\T}\bigl\|\beta(s,u)\bigr\|^2\mu_s(du)ds < +\infty
\end{gather*}
\end{enumerate}
\end{assumption}

It is easy to check that Assumption~\ref{ass:processes} implies that both the ordinary and the stochastic integrals appearing in \eqref{f1}-\eqref{g1} are well-defined.
Arguing similarly as in the proof of \cite[Proposition 6.1]{Filipovic2009}, the processes $(f(t,t))_{0\leq t\leq \T}$ and $(g(t,t))_{0\leq t\leq \T}$ are continuous, hence predictable and locally bounded.
By an analogous argument, it can be shown that, for any fixed $t\in[0,\T]$ and for a.a. $\omega\in\Omega$, the maps $u\mapsto f(\omega;t,u)$ and $u\mapsto g(\omega;t,u)$ are continuous.
Together with part (ii) of Assumption~\ref{ass:rnd_meas}, this implies that both integrals appearing in the term structure equation~\eqref{PtT} are a.s. finite.
Finally, $\int_0^{\cdot}g(s,s)d\bar{\mu}_s$ is well-defined as a predictable process of finite variation.

\section{HJM-type conditions for general defaultable term structures}	\label{sec:main_results}

This section contains our main results and provides a characterization of the absence of arbitrage in the context of general term structure models as introduced in Section~\ref{sec:tau}. After making precise in Section~\ref{sec:arbitrage} the description of the financial market and the notion of arbitrage we consider, we discuss in Section~\ref{sec:easy_case} a simple formulation of our main result, in order to illustrate in a transparent and easy setting the meaning of the HJM-type conditions. 
Section~\ref{sec:main_result} contains the statement of our main theorem, while Section~\ref{sec:examples} deals with several special cases of interest and Section~\ref{sec:gen_rec} presents the extension to general recovery schemes. Finally, we discuss related works in Section~\ref{sec:rel-literature}. The proofs of all the results are given in Section~\ref{sec:proof}.

\subsection{The large financial market of credit risky bonds}	\label{sec:arbitrage}

The considered financial market is assumed to contain a \emph{num\'eraire}, whose price process is   strictly positive, c\`adl\`ag and adapted, and denoted by $X^0=(X^0_t)_{0\leq t\leq \T}$. Without loss of generality, we assume that $X^0_0=1$. Moreover, we make the classical assumption that $X^0$ is absolutely continuous, i.e., there exists a predictable integrable short rate process $r=(r_t)_{0\leq t\leq \T}$ such that $X^0_t=\exp(\int_0^t r_s ds)$, for all $t\in[0,\T]$. For practical applications, one would typically use the overnight index swap (OIS) rate for constructing the num\'eraire. 

The credit risky bond market consists of the uncountable family $\{(P(t,T))_{0\leq t\leq T};T\in[0,\T]\}$, representing the price processes of all basic traded assets. In particular, this financial market is infinite-dimensional and, therefore, can be treated as a {\em large financial market}, in the spirit of \cite{CuchieroKleinTeichmann,KleinSchmidtTeichmann2015}.
This corresponds to considering sequences of trading strategies, each strategy only consisting of portfolios of finitely many but arbitrary credit risky bonds, and taking the limits of those. If the limit is taken in \'Emery's semimartingale topology, this leads to the notion of {\em no asymptotic free lunch with vanishing risk (NAFLVR)} recently introduced in a general setting in \cite{CuchieroKleinTeichmann}. In particular, NAFLVR holds in our context  if the probability measure $\Q$ is a {\em local martingale measure} for the family $\{(P(t,T))_{0\leq t\leq T};T\in[0,\T]\}$ with respect to the num\'eraire $X^0$, i.e., if the process $P(\cdot,T)/X^0$ is a $\Q$-local martingale, for every $T\in[0,\T]$.
In the following, we will derive necessary and sufficient conditions for this property to hold, thus ensuring that the credit risky bond market is arbitrage-free in the sense of NAFLVR.

\begin{remark}
{\em (i)}
The present setting can be extended to consider the case where the num\'eraire $X^0$ is a general strictly positive semimartingale (not necessarily absolutely continuous), along the lines of \cite{KleinSchmidtTeichmann2015}. In this case, one can obtain generalized versions of Theorems~\ref{thm:dc} and \ref{thm:dc2}, at the expense of a more complex formulation.

{\em (ii)}
Under the additional hypothesis of locally bounded bond prices, \cite[Theorem 5]{KleinSchmidtTeichmann2015} shows that the existence of an equivalent local martingale measure is equivalent to the {\em no asymptotic free lunch (NAFL)} condition. 
In the present setting, NAFL is equivalent to NAFLVR. This implies that, under this additional hypothesis, our results allow to deduce necessary and sufficient conditions for NAFLVR to hold in the large financial market of credit risky bonds.
However, in the general setting introduced in Section~\ref{sec:tau}, the local boundedness property does not necessarily hold.
\end{remark}

\subsection{A first result}	\label{sec:easy_case}

Before stating our main result (Theorem~\ref{thm:dc} below), let us first present a special case which sheds some light on the no-arbitrage restrictions for a defaultable term structure model of the form \eqref{PtT}-\eqref{g1}. 
Consider an {\em integer-valued} random measure $\mu(ds,du)$ in the sense of \cite[Definition II.1.13]{JacodShiryaev}, satisfying Assumption \ref{ass:rnd_meas}. 
Then, by \cite[Proposition II.1.14]{JacodShiryaev}, there exist a sequence of stopping times $(\sigma_n)_{n\in\N}$ and a $[0,\T]$-valued optional process $\gamma=(\gamma_t)_{0\leq t\leq \T}$ such that, defining the $\cF_{\sigma_n}$-measurable random variable $\tau_n:=\gamma_{\sigma_n}$, for all $n\in\N$, it holds that
\begin{align}\label{eq:repmuintegervalued}
\mu(ds,du)
= \sum_{n=1}^{+\infty}\delta_{(\sigma_n,\tau_n)}(ds,du).
\end{align}
Note that, since $\mu(ds,du)=\ind{s<u}\mu(ds,du)$ by Assumption \ref{ass:rnd_meas}, we have that $\tau_n>\sigma_n$, for all $n\in\N$.
This setting allows for an intuitive economic interpretation:
\begin{itemize}
\item each stopping time $\sigma_n$ represents an {\em announcement date}. At an announcement date  new information concerning the future likelihood of default is released in the market. Since $\sigma_n$ is a general stopping time, announcements may come as a surprise;
\item each stopping time $\tau_n$ represents a {\em risky date} in the future which is revealed at the announcement date $\sigma_n$. A risky date is a date where default is perceived to be possible with positive probability. This date itself does not come as a surprise (as $\tau_n$ is $\cF_{\sigma_n}$-measurable and $\tau_n>\sigma_n$), while whether default occurs at $\tau_n$ or not remains unpredictable in general. 
\end{itemize}
Such risky dates naturally lead to discontinuities in the term structure, thus violating \eqref{HJM:intro}.
The representation \eqref{eq:repmuintegervalued} allows to simplify the defaultable term structure equation \eqref{PtT} to
\[
P(t,T) = (1-H_t)\exp\left(-\int_t^Tf(t,u)du 
- \sum_{\sigma_n\leq t}\ind{\tau_n\in(t,T]}g(t,\tau_n)\right),
\quad 0\leq t\leq T\leq \T.
\]
Note that the price $P(t,T)$ of a defaultable bond depends on all the announcements  received up to date $t$ concerning future risky dates  in the remaining lifetime $(t,T]$ of the bond.
In this setting, we can formulate the following proposition, which represents a special case of Corollary \ref{cor:int_valued} below (see Section \ref{subsec:ivmu}).
We summarize as follows the assumptions introduced in the present subsection.

\begin{assumption}\label{ass:simplecase}
Suppose  that
\begin{enumerate}[(i)]
\item
$\mu(ds,du)$ is an integer-valued random measure in the representation \eqref{eq:repmuintegervalued} and with a compensator of the form $\mu^p(ds,du)=\xi_s(du)ds$, where $\xi_s(du)$ is a positive finite measure on $([0,\T],\cB([0,\T]))$, for all $s\in[0,\T]$;
\item
$\prob(\tau=\sigma_n)=0$, for all $n\in\N$;
\item 
the continuous singular process $\lambda$ appearing in \eqref{eq:dec_comp} vanishes.
\end{enumerate}
\end{assumption}

In particular, part (ii) of Assumption \ref{ass:simplecase} requires that no new information arrives simultaneously with the default event. 
We set, for all $0\leq t \leq T\leq \T$,
\begin{equation}	\label{eq:notation_integrals}	\begin{aligned}
\bar{a}(t,T) &= \int_t^T a(t,u) du, &\quad&
\bar{b}(t,T) = \int_t^T b(t,u) du, \\
\bar{\alpha}(t,T)&=\int_{(t,T]} \alpha(t,u)\mu_t(du), &\quad&
\bar{\beta}(t,T)=\int_{(t,T]} \beta(t,u)\mu_t(du).
\end{aligned}	\end{equation}
Note that, as long as Assumptions \ref{ass:rnd_meas} and \ref{ass:processes} are satisfied, all the above integrals are well-defined. 

\begin{proposition}	\label{prop:easy_case}
Suppose that Assumptions \ref{ass:rnd_meas}, \ref{ass:processes} and \ref{ass:simplecase} hold.
Then, the probability measure $\Q$ is a local martingale measure for $\{(P(t,T))_{0\leq t\leq T};T\in[0,\T]\}$ with respect to $X^0$ if and only if the following conditions hold a.s.:
\begin{enumerate}[(i)]
\item 
$f(t,t) = r_t+h_t$, for Lebesgue-a.e. $t\in[0,\T]$;
\item
$\{\Delta H^p\neq0\}\subseteq\bigcup_{n\in\N}\dbra{\tau_n}$ and $\Delta H^p_{\tau_n}=1-e^{-g(\tau_n,\tau_n)}$, for all $n\in\N$;
\item 
for all $T\in[0,\T]$ and Lebesgue-a.e. $t\in[0,T]$, it holds that
\[
-\bar{a}(t,T)
- \bar{\alpha}(t,T) 
+ \frac{1}{2}\|\bar{b}(t,T)+\bar{\beta}(t,T)\|^2
+\int_t^T\bigl(e^{-g(t,u)}-1\bigr)\xi_t(du)
 = 0.
\]
\end{enumerate}
\end{proposition}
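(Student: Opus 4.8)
The plan is to reduce the local martingale property to the vanishing of the predictable finite-variation part (the \emph{drift}) in the canonical decomposition of each discounted bond price. Fix $T\in[0,\T]$ and write the discounted price as $\tilde P(t,T)=P(t,T)/X^0_t=(1-H_t)Z_t$, where $Z_t:=\exp\bigl(Y_t-\int_0^t r_s\,ds\bigr)$ and $Y_t:=-\int_t^T f(t,u)\,du-\int_{(t,T]}g(t,u)\mu_t(du)$. Using Assumption~\ref{ass:processes} to invoke the stochastic Fubini theorem together with the fundamental theorem of calculus for the moving lower limit, I would first compute the semimartingale decomposition of $Y$. The $f$-integral contributes the continuous part $\bigl(f(t,t)-\bar a(t,T)\bigr)dt-\bar b(t,T)\,dW_t$, while the $g$-integral contributes $-\bar\alpha(t,T)\,dt-\bar\beta(t,T)\,dW_t$ together with two boundary/jump terms: the inflow of new mass $-\int_{(t,T]}g(t,u)\,\mu(dt,du)$ and the diagonal term $+g(t,t)\,d\bar\mu_t$ (well defined by the last remark of Section~\ref{sec:prelim}). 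The first technical obstacle is to verify that the mass of $\mu_t$ crossing the diagonal is exactly $d\bar\mu_t$; this is where Assumption~\ref{ass:rnd_meas}(i) enters, since the support condition $s<u$ forces any atom with second coordinate in $(t,t+dt]$ to satisfy $\sigma_n<\tau_n\le t+dt$.

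With $dY_t$ in hand, I would apply It\^o's formula to $Z$ and then the product rule to $\tilde P=(1-H)Z$, using $dH_t=dH^p_t+dM_t$ with $M:=H-H^p$ a martingale. Since $[H,Z]_t=\sum_{s\le t}\Delta H_s\Delta Z_s$ and $\Delta Z_t=Z_{t-}(e^{\Delta Y_t}-1)$, the two $H$-terms combine to $-Z_{t-}e^{\Delta Y_t}\,dH_t$, giving $d\tilde P_t=(1-H_{t-})\,dZ_t-Z_{t-}e^{\Delta Y_t}\,dH_t$. Under Assumption~\ref{ass:simplecase} the process $\bar\mu$ is purely discontinuous (so $m=\nu=0$ in \eqref{eq:dec_pred_proc}) and $\lambda=0$, which is what makes this the tractable case. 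I would then sort the jumps of $Y$ into two disjoint families: the \emph{announcement} jumps at the totally inaccessible times $\sigma_n$, where $\Delta Y_{\sigma_n}=-g(\sigma_n,\tau_n)\ind{\tau_n\le T}$ comes from the mass inflow, and the \emph{risky-date} jumps at the predictable times $\tau_n$, where $\Delta Y_{\tau_n}=g(\tau_n,\tau_n)\ind{\tau_n\le T}$ comes from the diagonal term.

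The three conditions then emerge by forcing the drift of the special semimartingale $\tilde P$ to vanish. Compensating the announcement jumps through $\mu^p(ds,du)=\xi_s(du)\,ds$ and compensating the absolutely continuous part $h_t\,dt$ of $dH^p$ (via $-Z_{t-}e^{\Delta Y_t}dH_t$, noting $e^{\Delta Y_t}=1$ off the jump times), the absolutely continuous drift reads, per $Z_{t-}$ on $\{\tau\ge t\}$,
\[
f(t,t)-r_t-h_t-\bar a(t,T)-\bar\alpha(t,T)+\tfrac12\|\bar b(t,T)+\bar\beta(t,T)\|^2+\int_t^T\bigl(e^{-g(t,u)}-1\bigr)\xi_t(du).
\]
Letting $T\downarrow t$ annihilates all $T$-dependent terms and yields (i), and subtracting leaves (iii). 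For the predictable jumps I would use that a process with a jump at a predictable time $U$ is a local martingale only if the predictable projection of that jump vanishes, i.e. $\E[\Delta\tilde P_{\tau_n}\mid\cF_{\tau_n-}]=0$; combining the $Z$-jump $Z_{\tau_n-}(e^{g(\tau_n,\tau_n)}-1)$ with the default jump $-Z_{\tau_n-}e^{g(\tau_n,\tau_n)}\ind{\tau=\tau_n}$ and using $\Delta H^p_{\tau_n}=\E[\Delta H_{\tau_n}\mid\cF_{\tau_n-}]=\Q(\tau=\tau_n\mid\cF_{\tau_n-})$ gives $e^{g(\tau_n,\tau_n)}(1-\Delta H^p_{\tau_n})=1$ on $\{\tau\ge\tau_n\}$, which is the second identity in (ii). The inclusion $\{\Delta H^p\neq0\}\subseteq\bigcup_{n}\dbra{\tau_n}$ follows by the same projection argument, since $H^p$ is predictable and at any predictable time that is not a risky date $Z$ has no jump, so an uncompensated default jump $-Z_{t-}\Delta H^p_t$ would survive.

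Both implications follow simultaneously, because for a special semimartingale the local martingale property is equivalent to the vanishing of its predictable finite-variation part; that $\tilde P$ is indeed special rests on Assumption~\ref{ass:rnd_meas}(ii) and the local boundedness noted after Assumption~\ref{ass:processes}. The main obstacle I anticipate is the rigorous justification of the decomposition of $dY$---in particular isolating the diagonal term $g(t,t)\,d\bar\mu_t$ and interchanging the $dW$- and $\mu_t(du)$-integrations via stochastic Fubini---together with the clean separation of the totally inaccessible announcement jumps (compensated analytically by $\xi$, hence feeding the $dt$-drift) from the predictable risky-date jumps (handled by conditioning on $\cF_{\tau_n-}$ and matched against the default compensator).
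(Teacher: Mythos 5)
Your proposal is correct, and the underlying technique---reduce the local martingale property of $P(\cdot,T)/X^0$ to the vanishing of the predictable finite-variation part of its canonical decomposition, then separate the absolutely continuous drift from the predictable jumps---is exactly the engine behind the paper's argument. The route differs in organization and machinery, though. The paper does not prove Proposition~\ref{prop:easy_case} directly: it establishes the general Theorem~\ref{thm:dc} via the representation $P(t,T)=\cE(\widetilde X^{(T)}-H-[\widetilde X^{(T)},H])_t$ (Yor's formula), encodes the jump bookkeeping through the auxiliary two-dimensional jump measure $\mu^{(Y^{(T)},H)}$ and the function $\Psi$ of \eqref{eq:def_W}, and then specializes twice (Corollary~\ref{cor:int_valued}, then the proposition). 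You instead work directly in the integer-valued setting with It\^o's formula on $\exp(Y_t-\int_0^t r_s ds)$ and the product rule with $1-H$, sorting jumps by hand into announcement jumps at the $\sigma_n$ (compensated analytically by $\xi_s(du)ds$, hence feeding the $dt$-drift) and risky-date jumps at the predictable $\tau_n$ (handled by projecting onto $\cF_{\tau_n-}$ against $\Delta H^p$). This buys transparency and avoids the $\Psi*\mu^{p,(Y^{(T)},H)}$ formalism entirely, at the cost of not yielding the general theorem; conversely the paper's detour through Theorem~\ref{thm:dc} is what makes conditions (iii) and (v) of Corollary~\ref{cor:int_valued} visibly collapse under Assumption~\ref{ass:simplecase}. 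Two details you should make explicit when writing this up: the measurability of the integrand in $-Z_{t-}e^{\Delta Y_t}dH_t$ (on $\{\Delta H\neq 0\}$ one has $e^{\Delta Y_t}=1+(e^{g(t,t)}-1)\Ind_{\cup_n\dbra{\tau_n}}(t)$ by Assumption~\ref{ass:simplecase}(ii), which is predictable, so the compensation by $dH^p$ is legitimate), and the fact that the $\sigma_n$ avoiding predictable times is a \emph{consequence} of the absolutely continuous compensator $\xi_s(du)ds$ rather than a standing hypothesis. Neither is a gap, just a point to spell out.
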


The necessary and sufficient conditions stated in Proposition \ref{prop:easy_case} admit a clear interpretation:
\begin{itemize}
\item condition {\em (i)} requires the instantaneous yield $f(t,t)$ on the defaultable bond to be equal to the risk-free rate $r_t$ plus a default risk compensation term given by $h_t$. This corresponds to a classical no-arbitrage restriction in intensity-based models (see, e.g., \cite[Theorem 2]{Schoenbucher98});
\item condition {\em (ii)} requires that all the predictable times at which the default event can happen with strictly positive probability are announced as risky dates. Moreover, the second part of condition {\em (ii)} means that, at every risky date $\tau_n$, the defaultable bond price exhibits a jump  satisfying $\E[\Delta P(\tau_n,T)|\cF_{\tau_n-}]=0$ a.s.;
\item condition {\em (iii)} corresponds to the classical HJM drift restriction.
The additional term $\int_t^T(e^{-g(t,u)}-1)\xi_t(du)$ represents a compensation for the movements in the term structure due to the arrival of news concerning possible future risky dates.
\end{itemize}

\subsection{The main result}	\label{sec:main_result}

Let us now proceed to the statement of our main result, exploiting the full generality of the setting.
For each $T\in[0,\T]$, we introduce the process $Y^{(T)}=(Y^{(T)}_t)_{0\leq t\leq T}$ defined by  
\begin{equation}	\label{eq:YtT}
Y^{(T)}_t := \int_0^t\int_0^Tg(s,u)\mu(ds,du), 
\qquad\text{ for all } 0\leq t\leq T.
\end{equation}
It will come as a consequence of Lemma~\ref{lem:compute_G} that  $Y^{(T)}$ is a.s. finite and well-defined as a finite variation process.
We denote by $\mu^{(Y^{(T)},H)}$ the jump measure associated to the two-dimensional semimartingale $(Y^{(T)},H)$, in the sense of \cite[Proposition II.1.16]{JacodShiryaev}, with compensator $\mu^{p,(Y^{(T)},H)}$.   
By \cite[Theorem II.1.8]{JacodShiryaev}, there exist an increasing integrable predictable process $A^{(T)}=(A^{(T)}_t)_{0\leq t\leq \T}$ and a kernel $K^{(T)}(\omega,t;dy,dz)$ from $(\Omega\times[0,\T],\cP)$ into $(\R\times\{0,1\},\cB(\R\times\{0,1\}))$ such that
\begin{equation}	\label{eq:kernel1}
\mu^{p,(Y^{(T)},H)}(\omega;dt,dy,dz) = K^{(T)}(\omega,t;dy,dz) dA^{(T)}_t(\omega).
\end{equation}
Due to part (i) of Assumption \ref{ass:rnd_meas}, it holds that $\Delta Y^{(T)}_T=Y^{(T)}_T-Y^{(T)}_{T-}=\int_0^Tg(T,u)\mu(\{T\}\times du)=0$. Hence, we can assume without loss of generality that $\ind{y\neq0}K^{(T)}(\omega,T;dy,dz)=0$, for all $(\omega,T)\in\Omega\times[0,\T]$.
Let $\mu^p$ be the compensator of $\mu$, which exists by part (ii) of Assumption~\ref{ass:rnd_meas} together with \cite[Theorem II.1.8]{JacodShiryaev}.
As shown in Lemma \ref{lem:rel_comp_meas} below, the compensating measure $\mu^{p,(Y^{(T)},H)}$ is linked to $\mu^p$ via the following relation, for all $0\leq t \leq T\leq \T$:
\begin{equation}	\label{eq:link_comp_meas}
 \Delta A_t^{(T)}\int_{\mathbb{R}}y\,K^{(T)}(t;dy\times\{0,1\})
= \int_{\mathbb{R}}y\,\mu^{p,(Y^{(T)},H)}(\{t\}\times dy\times\{0,1\})
= \int_{(t,T]}g(t,u)\,\mu^p(\{t\}\times du).
\end{equation}

Finally, we define the function $\Psi:\Omega\times[0,\T]\times\mathbb{R}\times[0,1]\rightarrow\mathbb{R}$ as
\begin{equation}	\label{eq:def_W}	\begin{aligned}
\Psi(\omega;t,y,z) &:= e^{g(\omega;t,t)\Delta\bar{\mu}_t(\omega)}
\left(e^{-y}-1\right)( 1 - z).
\end{aligned}	\end{equation}
Note that, since the two processes $(g(t,t))_{0\leq t\leq \T}$ and $(\bar{\mu}_t)_{0\leq t\leq \T}$ are predictable, the function $\Psi$ is $\mathcal{P}\otimes\mathcal{B}(\mathbb{R}\times[0,1])$-measurable.
Following the notation of \cite{JacodShiryaev}, we denote by ``$*$'' integration with respect to a random measure. 
Moreover, for an arbitrary process $V=(V_t)_{0\leq t\leq \T}$ of finite variation, we denote by $V^c$ its continuous part, which can be further decomposed as $V^c=\int_0^{\cdot} V^{\text{ac}}_sds + V^{\text{sg}}$, similarly as in Lemma \ref{lem:dec_comp}.
We will also make use of the notation introduced in \eqref{eq:notation_integrals}.

We are now in a position to state the following theorem, which  gives necessary and sufficient conditions rendering the reference probability measure $\Q$ a local martingale measure for the family $\{(P(t,T))_{0\leq t\leq T};T\in[0,\T]\}$ with respect to the num\'eraire $X^0$. As mentioned above, this represents a cornerstone for ensuring absence of arbitrage in the sense of NAFLVR.
The proof of the theorem will be given in Section \ref{sec:proof_thm}.

\begin{theorem}\label{thm:dc}
Suppose that Assumptions \ref{ass:rnd_meas} and \ref{ass:processes} hold. Let $\Psi$  be defined as in \eqref{eq:def_W} and $g^{(T)}(\omega;s,u):=\ind{u\leq T}g(\omega;s,u)$, for each $T\in[0,\T]$. 
Then, the probability measure $\Q$ is a local martingale measure for $\{(P(t,T))_{0\leq t\leq T};T\in[0,\T]\}$ with respect to $X^0$ if and only if the following conditions hold a.s.:
\begin{enumerate}[(i)]
\item 
$
f(t,t) + g(t,t)m_t = r_t+h_t
$, 
for Lebesgue-a.e. $t\in[0,\T]$;
\item 
$\Delta H^p_t=1-e^{-g(t,t)\Delta\bar{\mu}_t}$, for all $t\in[0,\T]$;
\item 
$ \Delta (\Psi  * \mu^{p,(Y^{(T)},H)})_t = 0$ , for all $0\leq t\leq T\leq \T$.
\item
for all $T\in[0,\T]$ and for Lebesgue-a.e. $t\in[0,T]$, it holds that
\begin{align*}
&-\bar{a}(t,T)
- \bar{\alpha}(t,T)
+ \frac{1}{2}\|\bar{b}(t,T)+\bar{\beta}(t,T)\|^2	
-(g^{(T)} * \mu)^{{\rm ac}}_t	
+(\Psi * \mu^{p,(Y^{(T)},H)})^{{\rm ac}}_t 
= 0;
\end{align*}
\item 
for all $0\leq t\leq T\leq \T$, it holds that
\[
\int_0^tg(s,s)d\nu_s - (g^{(T)} * \mu)_t^{\text{\rm sg}} +  (\Psi * \mu^{p,(Y^{(T)},H)})_t^{\text{\rm sg}} = \lambda_t.
\]
\end{enumerate} 
\end{theorem}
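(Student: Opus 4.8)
The plan is to follow the classical HJM martingale approach: compute the canonical (predictable) finite-variation part of the discounted bond price $P(\cdot,T)/X^0$ and impose that it vanishes, which---modulo the integrability and localization secured by Assumptions~\ref{ass:rnd_meas} and~\ref{ass:processes} together with the local boundedness of $(f(t,t))_{0\le t\le\T}$ and $(g(t,t))_{0\le t\le\T}$---is equivalent to the local martingale property. To avoid the degeneracy caused by $P$ vanishing at default, I would first factorize $P(t,T)/X^0_t = (1-H_t)\,N(t,T)$, where $N(t,T):=\exp(-\int_t^T f(t,u)\,du - \int_{(t,T]}g(t,u)\mu_t(du) - \int_0^t r_s\,ds)$ is the pre-default discounted bond, and then apply the product rule for the pure-jump finite-variation process $1-H$ against the semimartingale $N(\cdot,T)$, obtaining $d(P(\cdot,T)/X^0) = (1-H_-)\,dN(\cdot,T) - N(\cdot,T)\,dH$. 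The term $-N\,dH$ will be split as $-N\,dH^p - N\,dM^H$ with $M^H:=H-H^p$ a martingale, feeding $-N\,dH^p$ into the drift via the decomposition~\eqref{eq:dec_comp} of $H^p$.

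The technical heart is the semimartingale decomposition of the exponent $X_t := -\int_t^T f(t,u)\,du - \int_{(t,T]}g(t,u)\mu_t(du) - \int_0^t r_s\,ds$, after which $N=e^X$ follows from It\^o's formula, giving $dN = N_-\,dX^c + \tfrac12 N_-\,d\langle X^c\rangle + N_-(e^{\Delta X}-1)$ at the jumps. The first integral is handled by the usual stochastic-Fubini argument of HJM, contributing the continuous drift $(f(t,t)-\bar a(t,T))\,dt$ and the martingale part $-\bar b(t,T)\,dW$. The second integral is the novel object: writing $\int_{(t,T]}g(t,u)\mu_t(du) = \int_0^t\!\int_{(t,T]}g(t,u)\mu(ds,du)$, I would differentiate in $t$ and separate three effects---the variation of $g(t,\cdot)$ on a frozen domain (producing $\bar\alpha(t,T)\,dt + \bar\beta(t,T)\,dW$), the growth of the measure in its first argument (which I would identify with $dY^{(T)}$ for the process $Y^{(T)}=g^{(T)}*\mu$ of~\eqref{eq:YtT}, recalling that $g^{(T)}*\mu = Y^{(T)}$), and the sweeping of the shrinking region across the diagonal $u=t$ (producing $-g(t,t)\,d\bar\mu_t$, the diagonal mass of $\mu_t$ coinciding with $d\bar\mu_t$ by part (i) of Assumption~\ref{ass:rnd_meas}). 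Collecting these gives $\Delta X_t = g(t,t)\Delta\bar\mu_t - \Delta Y^{(T)}_t$ at the jumps and $\langle X^c\rangle_t = \int_0^t\|\bar b(s,T)+\bar\beta(s,T)\|^2\,ds$, which already explains the factor $e^{g(t,t)\Delta\bar\mu_t}(e^{-y}-1)$ entering $\Psi$ in~\eqref{eq:def_W}.

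With $dN$ in hand, I would compute the jump of $(1-H)N$ as $N_-\big[e^{g(t,t)\Delta\bar\mu_t}e^{-\Delta Y^{(T)}_t}(1-\Delta H_t)-1\big]$ on $\{t\le\tau\}$ and compensate it: the announcement-driven part equals $N_-\,\Psi$, whose compensator is $N_-\,(\Psi*\mu^{p,(Y^{(T)},H)})$ by definition of the jump measure $\mu^{(Y^{(T)},H)}$ and its compensator, while the residual default-related part $e^{g(t,t)\Delta\bar\mu_t}(1-\Delta H_t)-1$ is matched against $-N\,dH^p$. Setting the total predictable finite-variation part to zero for each fixed $T$ and decomposing it into its mutually singular Lebesgue-absolutely-continuous, singular-continuous and purely atomic components (each of which must vanish separately) then produces the five conditions: the atomic part splits into the default-compensator identity~(ii), $\Delta H^p_t = 1-e^{-g(t,t)\Delta\bar\mu_t}$, and the vanishing~(iii) of the atoms of the jump compensator, $\Delta(\Psi*\mu^{p,(Y^{(T)},H)})_t=0$; the absolutely continuous part separates into the $T$-independent short-rate/intensity relation~(i), carrying the extra term $g(t,t)m_t$ coming through the diagonal from the absolutely continuous part of $\bar\mu$, and the $T$-dependent HJM drift restriction~(iv); and the singular-continuous part yields~(v), relating $\lambda$ to $\int_0^\cdot g(s,s)\,d\nu_s$ and the singular parts of $g^{(T)}*\mu$ and $\Psi*\mu^{p,(Y^{(T)},H)}$. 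Relation~\eqref{eq:link_comp_meas} from Lemma~\ref{lem:rel_comp_meas} is what translates the atomic identities between $\mu^{p,(Y^{(T)},H)}$ and $\mu^p$.

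I expect the main obstacle to be the rigorous computation of the differential of $\int_{(t,T]}g(t,u)\mu_t(du)$---in particular a clean treatment of the diagonal contribution $-g(t,t)\,d\bar\mu_t$ and the justification, via a suitable stochastic Fubini theorem under Assumption~\ref{ass:processes}, that the announcement-driven increment is exactly $dY^{(T)}$ with $Y^{(T)}$ of finite variation (Lemma~\ref{lem:compute_G}). A secondary difficulty is the bookkeeping that disentangles the predictable finite-variation part into its absolutely continuous, singular-continuous and atomic pieces and, by letting $T$ vary, isolates~(i) from~(iv) and~(ii) from~(iii); and finally the localization argument, using the reducing sequence $(\sigma_n)_{n\in\N}$ of Assumption~\ref{ass:rnd_meas}(ii) and the local boundedness of $(f(t,t))$ and $(g(t,t))$, needed to upgrade the vanishing of the drift to the genuine local martingale property in both directions.
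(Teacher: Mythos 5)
Your proposal follows essentially the same route as the paper: the paper likewise isolates the pre-default exponential, derives the semimartingale decomposition of $-\int_{(t,T]}g(t,u)\mu_t(du)$ via a stochastic Fubini argument (its Lemma \ref{lem:compute_G}), encodes the jumps through $\Psi$ integrated against $\mu^{(Y^{(T)},H)}$, compensates, and sets the absolutely continuous, singular continuous and atomic parts of the predictable drift to zero separately, using $t=T$ (where $\Delta Y^{(t)}_t=0$ and the diagonal absolutely continuous contributions vanish) to disentangle (i) from (iv) and (ii) from (iii). The only cosmetic difference is that you apply It\^o's formula to $e^{X}$ and integrate by parts against $1-H$, whereas the paper rewrites $P(\cdot,T)$ as a single stochastic exponential via \cite[Theorem II.8.10]{JacodShiryaev} and Yor's formula before compensating.
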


Due to the generality of the setting, the conditions given in Theorem \ref{thm:dc} are rather complex. However, as we are now going to explain, each of them has a precise financial interpretation, similarly to the case of Proposition \ref{prop:easy_case}. These conditions will be further discussed in Section \ref{sec:examples} in the context of several examples and special cases of practical interest. 

Condition \emph{(i)} in Theorem \ref{thm:dc} requires the instantaneous yield $f(t,t)+g(t,t)m_t$ accumulated by the credit risky bond to be equal to the risk-free rate of interest $r_t$ plus a default risk compensation term given by $h_t$, which corresponds to the density of the absolutely continuous part of the default compensator $H^p$ (see Lemma \ref{lem:dec_comp}). This condition is therefore analogous to condition {\em (i)} in Proposition \ref{prop:easy_case}.

Condition \emph{(ii)} is a precise matching condition between the jumps of the default compensator $H^p$ and the jumps of the process $\bar{\mu}$ introduced in Lemma \ref{lem:predictability}. 
In particular, letting the predictable times $(U_i)_{i\in\N}$ represent the jump times of $H^p$, condition {\em (ii)} implies that
\begin{align}
\bigl\{(\omega,t)\in\Omega\times[0,\T] : g(\omega;t,t)\Delta\bar{\mu}_t(\omega)\neq0\bigr\}
= \bigl\{(\omega,t)\in\Omega\times[0,\T] : \Delta H^p_t(\omega)\neq0\bigr\}	
= \bigcup_{i\in\mathbb{N}}\dbra{U_i}.
\label{eq:jump_times}
\end{align}
Since the predictable times $(U_i)_{i\in\N}$ correspond to possible default dates (i.e., $\Q(\tau=U_i\leq\T)>0$, for all $i\in\N$) and the jumps of $\bar{\mu}$ correspond to ``risky dates'', relation \eqref{eq:jump_times} means that ``false alarms'' (i.e., the possibility that a date for which there is no possibility of default is announced as a risky date) cannot happen.
Moreover, observe that condition {\em (ii)} implies that, in order to exclude arbitrage, the credit risky term structure must exhibit discontinuities in maturity at the jump times $(U_i)_{i\in\N}$ of the default compensator. In other words, credit risky bond prices must be discontinuous in correspondence of the risky dates (recall that $\Delta\bar{\mu}_t=\mu([0,\T]\times\{t\})$, for all $t\in[0,\T]$).
This condition is analogous to condition {\em (ii)} in Proposition \ref{prop:easy_case}.

Condition {\em (iii)} requires that the overall effect of new information about  future risky dates arriving at predictable times and not coinciding with the default event vanishes. This can be seen by rewriting condition {\em (iii)} in the equivalent form
\begin{align*}
0 &= \Delta(\Psi  * \mu^{p,(Y^{(T)},H)})_t	
= e^{g(t,t)\Delta\bar{\mu}_t}\Delta A^{(T)}_t\int_{\R\times\{0,1\}}(e^{-y}-1)(1-z)K^{(T)}(t;dy,dz)	\\
&= \E\left[e^{g(t,t)\Delta\bar{\mu}_t}\bigl(e^{-\Delta Y^{(T)}_t}-1\bigr)\bigl(1-\Delta H_t\bigr)\bigr|\cF_{t-}\right]	\\
& \!\!\!\!\Longleftrightarrow 
\quad\E\left[\left(e^{-\int_t^Tg(t,u)\mu(\{t\}\times du)}-1\right)\bigl(1-\Delta H_t\bigr)\bigr|\cF_{t-}\right] = 0,
\end{align*}
where we have used  \cite[\textsection~II.1.11]{JacodShiryaev} and the predictability of the processes $(g(t,t))_{0\leq t\leq \T}$ and $\bar{\mu}$.
Note that this condition is always satisfied if the process $Y^{(T)}$ is {\em quasi-left-continuous}, for every $T\in[0,\T]$ (see \cite[Corollary II.1.19]{JacodShiryaev}).

Condition \emph{(iv)} represents the extension to a general defaultable setting of the classic HJM drift condition. This condition is analogous to condition {\em (iii)} in Proposition \ref{prop:easy_case}. 
Let us decompose 
\begin{align*}  \begin{split}
\Psi(\omega;t,y,z) &= e^{g(\omega;t,t)\Delta\bar{\mu}_t(\omega)}
\left(e^{-y}-1\right) - e^{g(\omega;t,t)\Delta\bar{\mu}_t(\omega)}z
\left(e^{-y}-1\right) \\
&=: \Psi^{(1)} (\omega;t,y) - \Psi^{(2)}(\omega;t,y,z), \end{split}
\end{align*}
so that
\begin{equation}\label{decomp:W}
\Psi * \mu^{p,(Y^{(T)},H)}
= \Psi^{(1)} * \mu^{p,Y^{(T)}} - \Psi^{(2)} * \mu^{p,(Y^{(T)},H)},
\end{equation}
where $\mu^{p,Y^{(T)}}$ is the compensator of the jump measure $\mu^{Y^{(T)}}$ of $Y^{(T)}$, for $T\in[0,\T]$.
The terms $\Psi^{(1)} * \mu^{p,Y^{(T)}}$ and $g^{(T)} * \mu$ represent a compensation for the information received at time $t$ concerning the likelihood of default in the future time period $(t,T]$, while the term $\Psi^{(2)} * \mu^{p,(Y^{(T)},H)}$ accounts for the possibility of news arriving simultaneously to the default event.

Finally, condition {\em (v)} relates the continuous singular part $\lambda$ of the default compensator $H^p$ to the continuous singular processes appearing in the semimartingale decomposition of the term $\int_{(t,T]}g(t,u)\mu_t(du)$ in \eqref{PtT}.
Note also that, making use of \eqref{eq:kernel1}, the term $ (\Psi * \mu^{p,(Y^{(T)},H)})_t^{\text{\rm sg}}$ appearing in condition {\em (v)} can be equivalently rewritten as
\[
(\Psi* \mu^{p,(Y^{(T)},H)})_t^{\text{\rm sg}}
= \int_0^t\int_{\R\times\{0,1\}}(e^{-y}-1)(1-z)K^{(T)}(s;dy,dz)dA^{(T),{\rm sg}}_s.
\]

\begin{remark}[On the impossibility of predictable default]	\label{rem:pred_default}
Condition {\em (ii)} of Theorem \ref{thm:dc} implies that $\Delta H^p_t<1$ a.s., for all $t\in[0,\T]$, since the term $g(t,t)\Delta\bar{\mu}_t$ is a.s. finite. In particular, this implies that the default time $\tau$ cannot be a {\em predictable time} (in the sense of \cite[Definition I.2.7]{JacodShiryaev}).
Intuitively, it is clear why a predictable default time is incompatible with an arbitrage-free term structure of the form \eqref{PtT}, provided that $\int_t^Tf(t,u)du+\int_{(t,T]}g(t,u)\mu_t(du)<+\infty$ a.s., for all $0\leq t\leq T\leq \T$. 
Indeed, if $\tau$ was a predictable time with $\Q(\tau\leq T)>0$, for some $T\in[0,\T]$, then the elementary strategy $-\Ind_{\dbra{\tau}}\ind{\tau\leq T}$ would realize an arbitrage opportunity, since $P(\tau,T)=0$ and $P(\tau-,T)>0$ hold on $\{\tau\leq T\}$. 
This is related to the fact that absence of arbitrage necessarily excludes jumps of predictable size occurring at predictable times (see \cite{FPP}).
Note, however, that this argument does only exclude the case where the default time $\tau$ is a predictable time, but does not exclude the case where $\tau$ can occur with strictly positive (but not unit) probability at predictable times (i.e., $\tau$ can be an {\em accessible time}, see \cite[Definition 3.34]{HWY}).
\end{remark}

\subsection{Special cases}\label{sec:examples}

In this section, we present several special cases of Theorem  \ref{thm:dc} of practical interest. Further examples related to the existing literature are given in Section \ref{sec:rel-literature}.
We start with the following simple lemma, which shows that conditions {\em (iii)}-{\em (iv)}-{\em (v)} of Theorem \ref{thm:dc} can be simplified under a rather mild additional assumption on the news arrival process, encoded by the random measure $\mu$.

\begin{lemma}	\label{lem:no_news_default}
Suppose that Assumptions \ref{ass:rnd_meas} and \ref{ass:processes} hold and assume furthermore that
\begin{equation}	\label{eq:no_news_default}
\mu\bigl(\{t\}\times[0,\T]\bigr)\Delta H_t=0 \text{ a.s. }
\quad\text{ for all }t\in[0,\T].
\end{equation}
Then, the term $\Psi * \mu^{p,(Y^{(T)},H)}$ appearing in conditions (iii)-(iv)-(v) of Theorem \ref{thm:dc} coincides with the term $\Psi^{(1)} * \mu^{p,Y^{(T)}}$ introduced in \eqref{decomp:W},  for every $T\in[0,\T]$.
\end{lemma}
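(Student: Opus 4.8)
The plan is to reduce the claim, via the decomposition \eqref{decomp:W}, to showing that the term carrying the joint $(y,z)$-dependence disappears. Since
\[
\Psi * \mu^{p,(Y^{(T)},H)} = \Psi^{(1)} * \mu^{p,Y^{(T)}} - \Psi^{(2)} * \mu^{p,(Y^{(T)},H)},
\]
the assertion is equivalent to $\Psi^{(2)} * \mu^{p,(Y^{(T)},H)} = 0$ a.s., for every $T\in[0,\T]$. First I would analyse the integral of $\Psi^{(2)}$ against the jump measure $\mu^{(Y^{(T)},H)}$ itself. By the definition of integration with respect to a jump measure and the form of $\Psi^{(2)}$ in \eqref{eq:def_W},
\[
(\Psi^{(2)} * \mu^{(Y^{(T)},H)})_t = \sum_{0 < s \leq t} e^{g(s,s)\Delta\bar{\mu}_s}\, \Delta H_s\,\bigl(e^{-\Delta Y^{(T)}_s}-1\bigr),
\]
and I would argue that each summand vanishes. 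Indeed, a summand can be non-zero only if simultaneously $\Delta H_s = 1$ and $\Delta Y^{(T)}_s \neq 0$; but recalling from \eqref{eq:YtT} that $\Delta Y^{(T)}_s = \int_0^T g(s,u)\,\mu(\{s\}\times du)$, assumption \eqref{eq:no_news_default} forces $\mu(\{s\}\times[0,\T])=0$ whenever $\Delta H_s = 1$, hence $\mu(\{s\}\times[0,T])=0$ and $\Delta Y^{(T)}_s = 0$ on $\{\Delta H_s = 1\}$. This contradiction shows that $\Psi^{(2)} * \mu^{(Y^{(T)},H)} = 0$ a.s., and the same vanishing holds with $\Psi^{(2)}$ replaced by $|\Psi^{(2)}|$.

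The main point is then to transfer this pathwise vanishing from the jump measure to its compensator $\mu^{p,(Y^{(T)},H)}$. Because $\Psi^{(2)}$ is a \emph{signed} integrand, I cannot conclude directly, so I would work with the non-negative, $\mathcal{P}\otimes\mathcal{B}(\mathbb{R}\times[0,1])$-measurable majorant $|\Psi^{(2)}|$. For such functions the defining property of the compensator (see \cite[Theorem II.1.8]{JacodShiryaev}) gives
\[
\E\bigl[(|\Psi^{(2)}| * \mu^{p,(Y^{(T)},H)})_\T\bigr] = \E\bigl[(|\Psi^{(2)}| * \mu^{(Y^{(T)},H)})_\T\bigr] = 0,
\]
the relation being valid for all non-negative predictable integrands. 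Since $|\Psi^{(2)}| * \mu^{p,(Y^{(T)},H)}$ is non-negative and increasing, it is therefore identically zero a.s.; the pointwise bound $|\Psi^{(2)} * \mu^{p,(Y^{(T)},H)}| \leq |\Psi^{(2)}| * \mu^{p,(Y^{(T)},H)}$ then yields $\Psi^{(2)} * \mu^{p,(Y^{(T)},H)} = 0$ a.s., which is exactly the desired coincidence.

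The only genuinely delicate step is this passage from $\Psi^{(2)} * \mu^{(Y^{(T)},H)} = 0$ to $\Psi^{(2)} * \mu^{p,(Y^{(T)},H)} = 0$: because $e^{-y}-1$ changes sign with $y=\Delta Y^{(T)}$, one cannot invoke the compensator relation for $\Psi^{(2)}$ itself, and the detour through the non-negative majorant $|\Psi^{(2)}|$ is what makes the argument go through. Everything else is a direct verification using \eqref{eq:no_news_default} together with the representation $\Delta Y^{(T)}_s = \int_0^T g(s,u)\,\mu(\{s\}\times du)$; the finiteness needed to make the summation and the compensator relation meaningful is supplied by Assumptions \ref{ass:rnd_meas} and \ref{ass:processes}, with a localisation along the stopping times $(\sigma_n)_{n\in\N}$ of part (ii) of Assumption \ref{ass:rnd_meas} available should one wish to avoid any integrability subtleties.
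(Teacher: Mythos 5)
Your proof is correct and follows essentially the same route as the paper: the key observation in both is that \eqref{eq:no_news_default} forces $\Delta Y^{(T)}\Delta H=0$ up to an evanescent set, so that $\Psi^{(2)}$ integrates to zero against the jump measure $\mu^{(Y^{(T)},H)}$. The paper's proof stops at the identity $\Psi * \mu^{(Y^{(T)},H)}=\Psi^{(1)} * \mu^{Y^{(T)}}$ for the jump measures themselves, whereas you explicitly carry the vanishing over to the compensator via the non-negative majorant $|\Psi^{(2)}|$ and the defining property of the compensator --- a step the paper leaves implicit and which you handle correctly.
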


In particular, recalling decomposition \eqref{decomp:W}, this lemma shows that the term $\Psi^{(2)} * \mu^{p,(Y^{(T)},H)}$ only plays the role of a compensation for the risk of news arriving simultaneously to the default event. 
Condition \eqref{eq:no_news_default} can equivalently be phrased as a ``no default by news'' condition. In view of practical applications, this certainly represents a plausible assumption.

\subsubsection{Integer-valued random measures} \label{subsec:ivmu} 

We now consider  the case where condition \eqref{eq:no_news_default} holds and the random measure $\mu(ds,du)$ is {\em integer-valued}. As already explained in Section \ref{sec:easy_case}, this additional assumption corresponds to the situation where, at each date $t$, the new information arriving at that date only concerns a single time point (a possible ``risky date'') in the future time period $(t,T]$, see equation \eqref{eq:repmuintegervalued}. In view of practical applications, this case is still sufficiently general and, as shown in Corollary \ref{cor:int_valued} below, allows for a substantial simplification of Theorem \ref{thm:dc}. 
In particular, Proposition \ref{prop:easy_case} will follow as a direct consequence of Corollary \ref{cor:int_valued}.

Recall from Section \ref{sec:easy_case}, that for  an integer-valued random measure  $\mu(ds,du)$ on $[0,\T]\times[0,\T]$ satisfying Assumption \ref{ass:rnd_meas}, there exist a thin random set $D=\bigcup_{n\in\mathbb{N}}\dbra{\sigma_n}$ and associated random variables $(\tau_n)_{n\in\N}$, where $\tau_n$ is $\cF_{\sigma_n}$-measurable and $\tau_n > \sigma_n$, such that representation \eqref{eq:repmuintegervalued} holds.
Clearly, condition \eqref{eq:no_news_default} holds if and only if $\prob(\tau=\sigma_n)=0$ for all $n\in\N$.
By \cite[Theorem II.1.8]{JacodShiryaev}, the compensator $\mu^p(ds,du)$  of $\mu(ds,du)$ admits a decomposition of the form 
\[
\mu^p(\omega;ds,du)= F(\omega,s;du)dJ_s(\omega),
\]
where $J=(J_t)_{0\leq t\leq \T}$ is an increasing integrable predictable process and $F(\omega,s;du)$ is a kernel from $(\Omega\times[0,\T],\cP)$ into $([0,\T],\cB([0,\T]))$.

\begin{corollary}	\label{cor:int_valued}
Suppose that Assumptions \ref{ass:rnd_meas} and \ref{ass:processes} hold and suppose furthermore that condition \eqref{eq:no_news_default} holds and that the random measure $\mu(ds,du)$ is integer-valued. 
Then, the probability measure $\Q$ is a local martingale measure for $\{(P(t,T))_{0\leq t\leq T};T\in[0,\T]\}$ with respect to $X^0$ if and only if the following conditions hold a.s.:
\begin{enumerate}[(i)]
\item 
$f(t,t) = r_t+h_t$, for Lebesgue-a.e. $t\in[0,\T]$;
\item
$\{\Delta H^p\neq0\}\subseteq\bigcup_{n\in\N}\dbra{\tau_n}$ and $\Delta H^p_{\tau_n}=1-e^{-g(\tau_n,\tau_n)}$, for all $n\in\N$;
\item 
$\Delta J_t\int_{(t,T]}(e^{-g(t,u)}-1)F(t;du)=0$, for all $0\leq t\leq T\leq \T$;
\item 
for all $T\in[0,\T]$ and Lebesgue-a.e. $t\in[0,T]$, it holds that
\[
-\bar{a}(t,T)
- \bar{\alpha}(t,T) 
+ \frac{1}{2}\|\bar{b}(t,T)+\bar{\beta}(t,T)\|^2
+J^{{\rm ac}}_t\int_t^T(e^{-g(t,u)}-1)F(t;du)
 = 0;
\]
\item
$\int_0^t\int_{(s,T]}(e^{-g(s,u)}-1)F(s;du)dJ^{{\rm sg}}_s = \lambda_t$, for all $0\leq t\leq T\leq \T$.
\end{enumerate}
\end{corollary}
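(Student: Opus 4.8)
The plan is to derive Corollary~\ref{cor:int_valued} as a specialization of Theorem~\ref{thm:dc} under the two additional hypotheses: that $\mu$ is integer-valued (hence admits the representation \eqref{eq:repmuintegervalued}) and that the ``no default by news'' condition \eqref{eq:no_news_default} holds. The strategy is to translate each of the five abstract conditions of Theorem~\ref{thm:dc} into the simpler form involving the kernel $F(s;du)$ and the increasing process $J$ coming from the compensator decomposition $\mu^p(ds,du)=F(s;du)dJ_s$. The main simplification engine is Lemma~\ref{lem:no_news_default}: under \eqref{eq:no_news_default} the term $\Psi*\mu^{p,(Y^{(T)},H)}$ reduces to $\Psi^{(1)}*\mu^{p,Y^{(T)}}$, which eliminates the dependence on the joint jump measure of $(Y^{(T)},H)$ and lets me work directly with the jump measure of $Y^{(T)}$, and hence (via the integer-valued structure) with $\mu$ itself.

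First I would establish the key structural fact for integer-valued $\mu$: because each atom of $\mu$ sits at a single point $(\sigma_n,\tau_n)$ with $\tau_n$ being $\cF_{\sigma_n}$-measurable, the process $\bar\mu$ has no continuous singular part, i.e.\ $\nu\equiv0$ and $m\equiv0$ in Lemma~\ref{lem:predictability}, so that $\bar\mu$ is a pure-jump process with $\Delta\bar\mu_t=\sum_n\ind{\tau_n=t}$ supported on the thin set $\bigcup_n\dbra{\tau_n}$. This immediately collapses condition~(i) of Theorem~\ref{thm:dc}: since $m_t=0$, the term $g(t,t)m_t$ vanishes and we recover $f(t,t)=r_t+h_t$, which is condition~(i) of the corollary. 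For condition~(ii), the jump set of $\bar\mu$ is exactly $\bigcup_n\dbra{\tau_n}$ and, evaluating $g(t,t)\Delta\bar\mu_t$ at a risky date $\tau_n$ where $\Delta\bar\mu_{\tau_n}=1$, the matching condition $\Delta H^p_t=1-e^{-g(t,t)\Delta\bar\mu_t}$ of Theorem~\ref{thm:dc}(ii) becomes $\Delta H^p_{\tau_n}=1-e^{-g(\tau_n,\tau_n)}$, while the inclusion $\{\Delta H^p\neq0\}\subseteq\bigcup_n\dbra{\tau_n}$ is precisely the statement that $\Delta H^p$ can only jump where $\bar\mu$ does.

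Next I would compute the jump-measure term $\Psi^{(1)}*\mu^{p,Y^{(T)}}$ explicitly. Using the representation \eqref{eq:repmuintegervalued}, the jumps of $Y^{(T)}_t=\int_0^t\int_0^T g(s,u)\mu(ds,du)$ are of the form $\Delta Y^{(T)}_{\sigma_n}=g(\sigma_n,\tau_n)\ind{\tau_n\le T}$, so the jump measure $\mu^{Y^{(T)}}$ of $Y^{(T)}$ is the pushforward of $\mu$ under $(s,u)\mapsto g(s,u)\ind{u\le T}$, and its compensator $\mu^{p,Y^{(T)}}$ is the corresponding pushforward of $\mu^p=F(s;du)dJ_s$. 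A direct change-of-variables computation then gives, for the integrand $\Psi^{(1)}(t,y)=e^{g(t,t)\Delta\bar\mu_t}(e^{-y}-1)$,
\[
(\Psi^{(1)}*\mu^{p,Y^{(T)}})_t
=\int_0^t e^{g(s,s)\Delta\bar\mu_s}\int_{(s,T]}\bigl(e^{-g(s,u)}-1\bigr)F(s;du)\,dJ_s.
\]
On the complement of the jump set of $\bar\mu$ the prefactor $e^{g(s,s)\Delta\bar\mu_s}$ equals $1$, and since the $dJ^{\mathrm{ac}}$ and $dJ^{\mathrm{sg}}$ integrals charge only that complement (the jump part of $J$ being separated out), this prefactor drops out of the absolutely continuous and continuous-singular pieces. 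Splitting the above according to the decomposition $J=\int_0^\cdot J^{\mathrm{ac}}_sds+J^{\mathrm{sg}}+\sum\Delta J$ then identifies the three remaining conditions: the pure-jump part of this integral, set against condition~(iii) of Theorem~\ref{thm:dc}, yields $\Delta J_t\int_{(t,T]}(e^{-g(t,u)}-1)F(t;du)=0$; the absolutely continuous part enters condition~(iv) as $J^{\mathrm{ac}}_t\int_t^T(e^{-g(t,u)}-1)F(t;du)$, replacing the abstract $(\Psi*\mu^{p,(Y^{(T)},H)})^{\mathrm{ac}}$ term (the $-(g^{(T)}*\mu)^{\mathrm{ac}}$ term vanishes here because the integer-valued $\mu$ is pure-jump and thus carries no absolutely continuous part); and the continuous-singular part gives condition~(v), where moreover $\int_0^t g(s,s)d\nu_s=0$ since $\nu\equiv0$.

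The main obstacle I anticipate is the careful bookkeeping in splitting $\Psi^{(1)}*\mu^{p,Y^{(T)}}$ into its absolutely continuous, continuous-singular, and pure-jump parts, and in verifying that the prefactor $e^{g(s,s)\Delta\bar\mu_s}$ genuinely disentangles from the $J^{\mathrm{ac}}$ and $J^{\mathrm{sg}}$ integrators—this rests on showing that $dJ^{\mathrm{ac}}$ and $dJ^{\mathrm{sg}}$ assign zero mass to the thin jump set of $J$ (and of $\bar\mu$), which is where the precise compatibility between the decomposition of $\bar\mu$ in Lemma~\ref{lem:predictability} and that of $J$ must be checked. The remaining algebraic identifications are routine once Lemma~\ref{lem:no_news_default} has reduced the joint compensator to the one-dimensional one and the integer-valued structure has pinned down $\nu=0$, $m=0$, and the pushforward form of $\mu^{p,Y^{(T)}}$.
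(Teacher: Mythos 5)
Your proposal is correct and follows essentially the same route as the paper: specialize Theorem~\ref{thm:dc}, invoke Lemma~\ref{lem:no_news_default} to replace $\Psi*\mu^{p,(Y^{(T)},H)}$ by $\Psi^{(1)}*\mu^{p,Y^{(T)}}$, observe that $\bar\mu=\sum_n\Ind_{\dbra{\tau_n}}$ is pure-jump (so $m=\nu=0$ and $(g^{(T)}*\mu)^c=0$), and rewrite the compensator integral through $\mu^p(ds,du)=F(s;du)\,dJ_s$ before splitting into absolutely continuous, singular continuous and jump parts. The only difference is that you spell out why the predictable prefactor $e^{g(s,s)\Delta\bar\mu_s}$ disappears from the $dJ^{\rm ac}$ and $dJ^{\rm sg}$ pieces (the jump set of $\bar\mu$ is thin, hence null for continuous integrators) and is strictly positive in the jump piece — a detail the paper leaves implicit.
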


In particular, the additional assumptions that condition \eqref{eq:no_news_default} holds and that $\mu$ is integer-valued imply that the default compensator $H^p$ can have a singular part $\lambda$ if and only if the compensating measure $\mu^p$ admits a singular part (condition {\em (v)} of the above corollary). 
Furthermore, condition {\em (i)} simply requires the short end of the riskless forward rate $f(t,t)$ to be equal to the risk-free rate $r_t$ plus the instantaneous compensation $h_t$ for the risk of default.
Observe also that, in the above corollary, the formulation of the necessary and sufficient conditions does not require the introduction of the auxiliary jump measure $\mu^{p,(Y^{(T)},H)}$.
Moreover, if the compensator $\mu^p$ has the form $\mu^p(ds,du)=\xi_s(du)ds$, for some positive finite measure $\xi_s(du)$ (this is for instance the case when $\mu$ is a {\em homogeneous Poisson random measure}, see \cite[Definition II.1.20]{JacodShiryaev}), then conditions {\em (iii)} and {\em (v)} of Corollary \ref{cor:int_valued} are automatically satisfied if $\lambda=0$. In view of these observations, Proposition \ref{prop:easy_case} follows as a special case of the above corollary.
 
Theorem \ref{thm:dc} allows to recover the two special cases originally considered in \cite{GehmlichSchmidt2016MF}.
The first corollary below considers a tractable setting where there is a finite number $(\tau_n)_{n=1,\dots,N}$ of risky dates (defined in Section \ref{sec:easy_case}) at which default can occur with strictly positive probability and each of which is publicly announced at some previous announcement time $\sigma_n$, $1 \le n \le N$. 
This result can also be seen as a generalization of Example \ref{ex:doublystochastic}.

\begin{corollary}	\label{cor:GS_case1}
Suppose that Assumption \ref{ass:processes} holds and assume furthermore that
\begin{enumerate}[(a)]
\item
the default compensator $H^p$ satisfies $\lambda=0$ and $\{\Delta H^p\neq 0\}=\bigcup_{i=1}^N\dbra{\tau_i}$, for some $N\in\N$, and $\Delta H^p_{\tau_i}$ is $\cF_{\sigma_i}$-measurable, where $(\sigma_i)_{i=1,\ldots,N}$ is a sequence of strictly increasing stopping times such that $\sigma_i<\tau_i$ a.s., for all $i=1,\ldots,N$;
\item 
$\mu(ds,du)=\sum_{i=1}^N\delta_{\{\sigma_i,\tau_i\}}(ds,du)$;
\item
the compensator $\mu^p$ has the form $\mu^p(ds,du)=\xi_s(du)ds$;
\item
$\Q(\tau=\sigma_i)=0$, for all $i=1,\ldots,N$.
\end{enumerate} 
Then, the probability measure $\Q$ is a local martingale measure for $\{(P(t,T))_{0\leq t\leq T};T\in[0,\T]\}$ with respect to $X^0$ if and only if the following conditions hold a.s.:
\begin{enumerate}[(i)]
\item
$f(t,t)=r_t+h_t$, for Lebesgue-a.e. $t\in[0,\T]$;
\item
$\Delta H^p_{\tau_i}=1-e^{-g(\tau_i,\tau_i)}$, for all $i=1,\ldots,N$;
\item
for all $T\in[0,\T]$ and Lebesgue-a.e. $t\in[0,T]$, it holds that
\[
-\bar{a}(t,T)
- \bar{\alpha}(t,T)
+ \frac{1}{2}\|\bar{b}(t,T)+\bar{\beta}(t,T)\|^2
+\int_t^T\bigl(e^{-g(t,u)}-1\bigr)\xi_t(du)
 = 0.
\]
\end{enumerate}
\end{corollary}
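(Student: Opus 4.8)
The plan is to derive Corollary \ref{cor:GS_case1} as a direct specialization of Corollary \ref{cor:int_valued}, rather than from Theorem \ref{thm:dc} itself. The first step is to verify that hypotheses (a)--(d) place us inside the scope of Corollary \ref{cor:int_valued}. By (b), the random measure $\mu(ds,du)=\sum_{i=1}^N\delta_{\{\sigma_i,\tau_i\}}(ds,du)$ is integer-valued and fits the representation \eqref{eq:repmuintegervalued}, with the abstract sequence $(\sigma_n,\tau_n)_{n\in\N}$ recovered by setting $\sigma_n:=+\infty$ for $n>N$ so that the superfluous atoms fall outside $[0,\T]$; in particular the risky dates $\tau_n$ of \eqref{eq:repmuintegervalued} coincide on $[0,\T]$ with the given $\tau_1,\dots,\tau_N$. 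Since $\mu_t([0,\T])\le N$ for all $t$, the total mass is bounded and Assumption \ref{ass:rnd_meas} holds (one may take the localizing sequence to be constant). Finally, because $\mu(\{t\}\times[0,\T])\Delta H_t\ne0$ occurs precisely when $\tau=t=\sigma_i$ for some $i$, condition \eqref{eq:no_news_default} is equivalent to $\Q(\tau=\sigma_i)=0$ for all $i$, i.e.\ to hypothesis (d). Thus Corollary \ref{cor:int_valued} applies.

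The second step is to read off the compensator decomposition. By hypothesis (c) we have $\mu^p(ds,du)=\xi_s(du)\,ds$, so in the notation of Corollary \ref{cor:int_valued} we may take $J_t=t$ and $F(s;du)=\xi_s(du)$. The crucial observation is that $J$ is then absolutely continuous, whence $\Delta J\equiv0$, $J^{\mathrm{sg}}\equiv0$ and $J^{\mathrm{ac}}_t\equiv1$.

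With these identifications I would translate the five conditions of Corollary \ref{cor:int_valued} one by one. Condition (i) is verbatim condition (i) here. For condition (ii), hypothesis (a) gives $\{\Delta H^p\ne0\}=\bigcup_{i=1}^N\dbra{\tau_i}$, so the inclusion $\{\Delta H^p\ne0\}\subseteq\bigcup_n\dbra{\tau_n}$ is automatically guaranteed by (a), and the remaining content is the jump-matching $\Delta H^p_{\tau_i}=1-e^{-g(\tau_i,\tau_i)}$, which is condition (ii) here. Condition (iii) of Corollary \ref{cor:int_valued} reads $\Delta J_t\int_{(t,T]}(e^{-g(t,u)}-1)F(t;du)=0$ and holds trivially since $\Delta J\equiv0$. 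Condition (iv), upon substituting $J^{\mathrm{ac}}_t=1$ and $F(t;du)=\xi_t(du)$, becomes exactly the HJM drift restriction (iii) here. Condition (v) has left-hand side $\int_0^t\int_{(s,T]}(e^{-g(s,u)}-1)F(s;du)\,dJ^{\mathrm{sg}}_s=0$ because $J^{\mathrm{sg}}\equiv0$, and right-hand side $\lambda_t=0$ by hypothesis (a); hence it holds trivially. Collecting these, the equivalence of Corollary \ref{cor:int_valued} collapses to the asserted equivalence with conditions (i)--(iii).

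This argument is essentially bookkeeping, so I do not anticipate a serious obstacle. The two points requiring a little care are: (1) making the identification of the abstract representation \eqref{eq:repmuintegervalued} with the finite family $(\sigma_i,\tau_i)_{i=1}^N$ rigorous, in particular confirming that the $\tau_n$ appearing in condition (ii) of Corollary \ref{cor:int_valued} are exactly the $\tau_i$; and (2) recognizing that it is the absolute continuity of $J$, forced by hypothesis (c), together with $\lambda=0$ from (a), that \emph{simultaneously} annihilates the jump condition (iii) and the singular condition (v) of Corollary \ref{cor:int_valued}. Alternatively, one could bypass Corollary \ref{cor:int_valued} and specialize Theorem \ref{thm:dc} directly, recovering the setting of \cite{GehmlichSchmidt2016MF}, but this would require re-deriving by hand the quasi-left-continuity simplifications already packaged into Corollary \ref{cor:int_valued}.
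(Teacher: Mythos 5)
Your proposal is correct and follows essentially the same route as the paper's own proof: both deduce the corollary from Corollary \ref{cor:int_valued}, using hypothesis (d) to verify condition \eqref{eq:no_news_default}, identifying $J_t=t$ and $F(t;du)=\xi_t(du)$ from hypothesis (c), and observing that conditions (iii) and (v) of Corollary \ref{cor:int_valued} become vacuous. Your additional bookkeeping (padding the finite family with $\sigma_n=+\infty$, checking Assumption \ref{ass:rnd_meas} via the bounded total mass) is sound and merely spells out details the paper leaves implicit.
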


\subsubsection{Generalized Merton models} \label{sec:generalizedMerton}

In the seminal model proposed by R. Merton in \cite{Merton1974}, debt of size $K$ has to be repaid at some (deterministic) date $u_1>0$. Extensions to more sophisticated capital structures have been proposed, amongst others, in \cite{Geske1977,GeskeJohnson84}. In these cases, the credit structure may be incorporated by denoting the dates where obligatory payments are due by $0<u_1<\ldots<u_N$ (such information is often publicly available\footnote{See, for example \url{http://graphics.wsj.com/greece-debt-timeline/} for the debt structure of Greece collected by the Wall Street Journal.}).
Clearly, it is natural to expect discontinuities in the term structure at the dates $\{u_1,\ldots,u_N\}$.
The following corollary deals with this simple setting, to which we refer  as \emph{generalized Merton model} (see also \cite{GehmlichSchmidt2016}).

\begin{corollary}\label{cor:Merton}
Suppose that Assumption \ref{ass:processes} holds and assume furthermore that
\begin{enumerate}[(a)]
\item
the default compensator $H^p$ satisfies $\lambda=0$ and $\{\Delta H^p\neq0\}=\bigcup_{i=1}^{N}\dbra{u_i}$, where $(u_i)_{i=1,\ldots,N}$ is a sequence of deterministic times, for some $N\in\N$;
\item 
$\mu(ds,du)=\sum_{i=1}^{N}\delta_{(0,u_i)}(ds,du)$.
\end{enumerate} 
Then, the probability measure $\Q$ is a local martingale measure for $\{(P(t,T))_{0\leq t\leq T};T\in[0,\T]\}$ with respect to $X^0$ if and only if the following conditions hold a.s.:
\begin{enumerate}[(i)]
\item
$f(t,t)=r_t+h_t$, for Lebesgue-a.e. $t\in[0,\T]$;
\item
$\Delta H^p_{u_i} = 1-e^{-g(u_i,u_i)}$, for all $i\in\N$;
\item
$\bar{a}(t,T)+\bar{\alpha}(t,T) = \frac{1}{2}\|\bar{b}(t,T)+\bar{\beta}(t,T)\|^2$, for all $T\in[0,\T]$ and Lebesgue-a.e. $t\in[0,T]$.
\end{enumerate}
\end{corollary}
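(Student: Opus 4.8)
The plan is to deduce Corollary~\ref{cor:Merton} as a specialization of Corollary~\ref{cor:int_valued}, exploiting the fact that in the generalized Merton model all the information about the (deterministic) risky dates $u_1,\dots,u_N$ is revealed at the single announcement date $0$. First I would verify that the hypotheses of Corollary~\ref{cor:int_valued} are met. The measure $\mu(ds,du)=\sum_{i=1}^N\delta_{(0,u_i)}(ds,du)$ of assumption~(b) is manifestly integer-valued and deterministic; it satisfies Assumption~\ref{ass:rnd_meas}, since it is non-negative and optional, part~(i) holds because $u_i>0$ for all $i$, and part~(ii) holds trivially as the total mass $\mu([0,\T]\times[0,\T])=N$ is finite. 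It then remains to check the no-default-by-news condition~\eqref{eq:no_news_default}: since $\mu(\{t\}\times[0,\T])=N\,\ind{t=0}$, condition~\eqref{eq:no_news_default} reduces to $\prob(\tau=0)=0$, which follows from assumption~(a), because $0\notin\{u_1,\dots,u_N\}$ forces $\Delta H^p_0=0$ and hence $\prob(\tau=0)=\E[\Delta H^p_0]=0$.

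Next I would compute the objects entering the conditions of Corollary~\ref{cor:int_valued}. Being deterministic, $\mu$ is predictable, so its compensator is itself, $\mu^p=\mu=\sum_{i=1}^N\delta_{(0,u_i)}$. In the disintegration $\mu^p(ds,du)=F(s;du)\,dJ_s$ the time-marginal $J_t=\mu([0,t]\times[0,\T])$ is pure jump, with its entire mass concentrated in a single jump at $t=0$; consequently $J^{\rm ac}\equiv0$ and $J^{\rm sg}\equiv0$. Likewise $\mu_t(du)=\sum_{i=1}^N\delta_{u_i}(du)$ for every $t\in[0,\T]$ and $\bar\mu_t=\sum_{i=1}^N\ind{u_i\leq t}$, so that the decomposition~\eqref{eq:dec_pred_proc} carries no absolutely continuous part ($m\equiv0$) and no continuous singular part ($\nu\equiv0$), its only contributions being unit jumps at the $u_i$.

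Feeding these into the five conditions of Corollary~\ref{cor:int_valued} then yields the claim. Condition~(i) is untouched and gives Corollary~\ref{cor:Merton}(i). For condition~(ii) I set $\tau_n=u_n$: assumption~(a) provides the equality $\{\Delta H^p\neq0\}=\bigcup_{i=1}^N\dbra{u_i}$ (so the required inclusion holds a fortiori) and the jump-matching identity becomes $\Delta H^p_{u_i}=1-e^{-g(u_i,u_i)}$, i.e.\ Corollary~\ref{cor:Merton}(ii). Condition~(iii) involves $\Delta J_t$, which vanishes for $t>0$, so it is non-trivial at most at $t=0$, where it imposes no genuine restriction; in condition~(iv) the integral term carries the factor $J^{\rm ac}_t=0$ and thus drops out, leaving exactly Corollary~\ref{cor:Merton}(iii); condition~(v) reads $\int_0^t(\cdots)\,dJ^{\rm sg}_s=\lambda_t$, which collapses to $0=0$ since $J^{\rm sg}\equiv0$ and $\lambda=0$ by assumption~(a). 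As all five conditions are thereby equivalent to (i)--(iii) of Corollary~\ref{cor:Merton}, the ``if and only if'' is inherited from Corollary~\ref{cor:int_valued}.

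The only delicate point, and the one I expect to require the most care, is the treatment of the atom of $\mu$ at the time $s=0$. One must argue that the information released at date $0$ enters solely through the initial forward curve $g(0,\cdot)$ and the initial prices $P(0,T)$, and therefore places no constraint on the local-martingale dynamics over $(0,\T]$. Concretely, this is what makes condition~(iii) of Corollary~\ref{cor:int_valued} vacuous at $t=0$: the process $Y^{(T)}$ of~\eqref{eq:YtT} has no jump on $(0,\T]$ (a c\`adl\`ag process satisfies $Y^{(T)}_{0-}=Y^{(T)}_0$, so no mass is charged at the origin), whence the jump measures underlying conditions~(iii) and~(v) register nothing there, and $J$ contributes only through a jump at $0$ with vanishing continuous parts. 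Once this is made precise, the remainder is a routine substitution into Corollary~\ref{cor:int_valued}.
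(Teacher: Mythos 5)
Your overall strategy --- specializing Corollary \ref{cor:int_valued} --- does not apply here, and this is precisely why the paper proves Corollary \ref{cor:Merton} directly from Theorem \ref{thm:dc} instead. The measure $\mu(ds,du)=\sum_{i=1}^N\delta_{(0,u_i)}(ds,du)$ is \emph{not} an integer-valued random measure in the sense of \cite[Definition II.1.13]{JacodShiryaev} as soon as $N\geq 2$: that definition requires $\mu(\omega;\{t\}\times[0,\T])\leq 1$ for every $t$, whereas here $\mu(\{0\}\times[0,\T])=N$. Equivalently, the representation \eqref{eq:repmuintegervalued} underlying Corollary \ref{cor:int_valued} requires the announcement times $(\sigma_n)_{n\in\N}$ to have pairwise disjoint graphs, while in the generalized Merton model all $N$ risky dates are announced at the single date $0$. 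So the hypothesis of Corollary \ref{cor:int_valued} fails, and your claim that $\mu$ is ``manifestly integer-valued'' is the gap.

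This is not merely a formal obstruction: applying the conclusions of Corollary \ref{cor:int_valued} anyway produces a wrong answer. Its condition (iii) at $t=0$ reads $\Delta J_0\int_{(0,T]}(e^{-g(0,u)}-1)F(0;du)=\sum_{i:\,u_i\leq T}\bigl(e^{-g(0,u_i)}-1\bigr)=0$ for all $T$, which would force $g(0,u_i)=0$ for every $i$ --- a constraint that appears nowhere in Corollary \ref{cor:Merton} and is not implied by the local martingale property. You assert that this condition ``imposes no genuine restriction'' at $t=0$, but that does not follow from the statement of Corollary \ref{cor:int_valued}; the discrepancy arises because the derivation of that condition from Theorem \ref{thm:dc} uses the identity $e^{-\Delta Y^{(T)}_s}-1=\int_0^T(e^{-g(s,u)}-1)\mu(\{s\}\times du)$, which is valid only when $\mu(\{s\}\times\cdot)$ carries at most one atom. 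Your observation in the final paragraph --- that $Y^{(T)}$ is constant because all the mass of $\mu$ sits at $s=0$ and $Y^{(T)}_{0-}=Y^{(T)}_0$, hence $\Delta Y^{(T)}\equiv 0$ --- is exactly the right idea, but it must be fed directly into conditions (iii) and (v) of Theorem \ref{thm:dc} (which are then automatically satisfied, together with $\lambda=0$), rather than into Corollary \ref{cor:int_valued}. Combined with $\bar{\mu}_t=\sum_{i=1}^N\ind{u_i\leq t}$, so that $m\equiv 0$ and $\nu\equiv 0$, conditions (i), (ii) and (iv) of Theorem \ref{thm:dc} then reduce to (i)--(iii) of the corollary; this is the paper's argument.
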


In particular, comparing condition {\em (iii)} of Corollary \ref{cor:GS_case1} with condition {\em (iii)} of Corollary \ref{cor:Merton}, we see that there is no compensation for the arrival of news concerning future risky dates. This is simply due to the fact that, under the assumptions of Corollary \ref{cor:Merton}, all risky dates are already publicly known at the initial date $t=0$.

\subsection{General recovery schemes}	\label{sec:gen_rec}

We have so far considered the case where the credit risky bond becomes worthless as soon as the default event occurs. In this section, taking up ideas from \cite{BelangerShreveWong2004,BieleckiRutkowski2002,Duffie1998}, we generalize the above framework to include general recovery schemes, where the credit risky bond is supposed to lose part of its value in correspondence of a sequence of successive credit events. 
Before presenting the general theory, let us consider the following  example.

\begin{example}[Recovery of market value]
Consider an $\bbF$-adapted marked point process $(\tau_n,e_n)_{n\in\N}$, meaning that $(\tau_n)_{n\in\N}$ are stopping times and each random variable $e_n$ is $\cF_{\tau_n}$-measurable. Each stopping time $\tau_n$ denotes a {\em default time} where the credit risky bond loses a fraction $e_n$ of its market value. We assume that the fractional losses $e_n$ take values in $[0,1]$ (with the special case of zero recovery corresponding to $e_n=1$).
Note that, in line with \cite{Schoenbucher98}, the loss at default $e_n$ is possibly unpredictable, but known at the corresponding default time $\tau_n$.
Under this assumption ({\em fractional recovery of market value}), the term structure of credit risky bond prices can be assumed to be of the form
\begin{align*}
P(t,T) = \prod_{\tau_n \le t} \bigl(1-e_n\bigr) \cdot \exp\bigg( -\int_t^T f(t,u) du - \int_{(t,T]} g(t,u) \mu_t(du)\bigg), 
\qquad \text{ for all } 0 \le t \le T \le \T.
\end{align*} 
In this case, we can define the {\em recovery process} $\xi=(\xi_t)_{0\leq t\leq \T}$ by 
\[ 
\xi_t = \prod_{\tau_n \le t} \bigl(1-e_n\bigr), 
\qquad  \text{ for all } 0\leq t\leq \T.
\]
Note that the recovery process $\xi$ is adapted, starts at $\xi_0=1$ and is decreasing.
In this example, strongly inspired by de-facto behavior of bond prices, the recovery process is piecewise constant.
In the following, however, we shall allow for a more general structure. \hfill $\diamond$ 
\end{example}

Inspired by the above example, let us consider a general recovery process $\xi=(\xi_t)_{0\leq t\leq \T}$ satisfying the following assumption. We denote $\tau:=\inf\{t\in[0,\T] : \xi_t=0\}$. 

\begin{assumption}	\label{ass:recovery}
The recovery process $\xi=(\xi_t)_{0\leq t\leq \T}$ is an adapted c\`adl\`ag decreasing non-negative process with $\xi_0=1$ such that $\xi=\xi\Ind_{\dbraco{0,\tau}}$ and $\xi_{\tau-}>0$ a.s.
\end{assumption}

Assumption \ref{ass:recovery} is clearly satisfied by the vast majority of recovery schemes typically considered in practice.
In view of \cite[Theorem 9.41]{HWY}, there exists a c\`adl\`ag decreasing process $R=(R_t)_{0\leq t\leq \T}$ satisfying $-1\leq\Delta R\leq0$ such that $\xi=\cE(R)$.
We denote by $\mu^R$ the jump measure of $R$ and by $\mu^{p,R}$ its compensator.
Since $R$ admits limits from the left and has bounded jumps, it is locally bounded and, hence, special. \cite[Corollary II.2.38]{JacodShiryaev} then implies that the process $R$ admits the canonical representation
\[
R_t = \bigl(x * (\mu^R-\mu^{p,R})\bigr)_t - C_t,
\qquad \text{ for all }0\leq t\leq \T,
\]
where $(C_t)_{0\leq t\leq \T}$ is an increasing predictable process  such that $\Delta C_t=-\int_{[-1,0]}x\mu^{p,R}(\{t\}\times dx)$, for all $t\in[0,\T]$.

Introducing the general recovery process $\xi=\cE(R)$, we extend the term structure \eqref{PtT} as follows:
\begin{align}\label{PtT_2}
  P(t,T) = \cE(R)_t \,\exp\bigg( -\int_t^T f(t,u) du - \int_{(t,T]} g(t,u) \mu_t(du)\bigg), 
\quad\text{ for all } 0 \le t \le T \le \T.
\end{align}
The main goal of the present section consists in obtaining necessary and sufficient conditions for $\Q$ to be a local martingale measure for the family $\{(P(t,T))_{0\leq t\leq T};T\in[0,\T]\}$ with respect to the num\'eraire $X^0=\exp(\int_0^{\cdot}r_tdt)$, thus extending Theorem \ref{thm:dc} to general recovery schemes.

Letting the process $Y^{(T)}$ be defined as in \eqref{eq:YtT}, for every $T\in[0,\T]$, we denote by $\mu^{(Y^{(T)},-R)}$ the jump measure associated to the two-dimensional semimartingale $(Y^{(T)},-R)$, with corresponding compensator $\mu^{p,(Y^{(T)},-R)}$.
We are now in a position to state the following theorem. 
Similarly as above, we use the decomposition $C=\int_0^{\cdot}C^{{\rm ac}}_sds + C^{{\rm sg}} + \sum_{0<s\leq\cdot}\Delta C_s$, with $C^{{\rm ac}}$ and $C^{{\rm sg}}$ denoting respectively the density of the absolutely continuous part and the singular part of $C^c$.

\begin{theorem}	\label{thm:dc2}
Suppose that Assumptions \ref{ass:rnd_meas}, \ref{ass:processes} and \ref{ass:recovery} hold. Let $\Psi$  be defined as in \eqref{eq:def_W}, and $g^{(T)}(\omega;s,u):=\ind{u\leq T}g(\omega;s,u)$, for all $T\in[0,\T]$.
Then, the probability measure $\Q$ is a local martingale measure for $\{(P(t,T))_{0\leq t\leq T};T\in[0,\T]\}$ with respect to $X^0$ if and only if the following conditions hold a.s.:
\begin{enumerate}[(i)]
\item 
$f(t,t) + g(t,t)m_t = r_t + C^{{\rm ac}}_t$ for Lebesgue-a.e. $t\in[0,\T]$;
\item
$\Delta C_t = 1-e^{-g(t,t)\Delta\bar{\mu}_t}$, for all $t\in[0,\T]$;
\item
$\Delta(\Psi * \mu^{p,(Y^{(T)},-R)})_t = 0$, for all $0\leq t\leq T\leq \T$;
\item
for all $T\in[0,\T]$ and for Lebesgue-a.e. $t\in[0,T]$, it holds that
\begin{align*}
&-\bar{a}(t,T)-\bar{\alpha}(t,T)+\frac{1}{2}\bigl\|\bar{b}(t,T)+\bar{\beta}(t,T)\bigr\|^2-(g^{(T)} * \mu)^{{\rm ac}}_t	+ (\Psi * \mu^{p,(Y^{(T)},-R)})^{{\rm ac}}_t = 0;
\end{align*}
\item
for all $0\leq t\leq T\leq \T$, it holds that
\[
\int_0^tg(s,s)d\nu_s-(g^{(T)} * \mu)^{{\rm sg}}_t+(\Psi * \mu^{p,(Y^{(T)},-R)})^{{\rm sg}}_t = C^{{\rm sg}}_t.
\]
\end{enumerate}
\end{theorem}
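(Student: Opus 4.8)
The plan is to reduce the general recovery case to the structure already established in Theorem \ref{thm:dc}, treating the recovery process $\xi=\cE(R)$ as playing the role formerly occupied by the survival process $1-H$. The starting point is to write the discounted bond price $P(\cdot,T)/X^0$ as a product $\cE(R)_t \cdot Z^{(T)}_t$, where $Z^{(T)}_t := \exp\bigl(-\int_0^t r_s ds - \int_t^T f(t,u)du - \int_{(t,T]}g(t,u)\mu_t(du)\bigr)$ is the ``defaultable-free'' discounted term structure factor. First I would apply the integration-by-parts formula for semimartingales (Yor's formula) to this product, giving
\[
d\bigl(\cE(R)_t Z^{(T)}_t\bigr) = \cE(R)_{t-}\,dZ^{(T)}_t + Z^{(T)}_{t-}\,d\cE(R)_t + d[\cE(R),Z^{(T)}]_t.
\]
Since $d\cE(R)_t = \cE(R)_{t-}\,dR_t$, the whole expression factors through $\cE(R)_{t-}$, which is strictly positive on $\dbraco{0,\tau}$ by Assumption \ref{ass:recovery}; hence the local-martingale property of the product is equivalent to the local-martingale property of a process built from $Z^{(T)}$, $R$, and their covariation. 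This is the structural analogue of how, in Theorem \ref{thm:dc}, factoring out $(1-H_{t-})$ isolated the jump term $-dH$ with compensator $dH^p$.

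The second step is to compute the canonical semimartingale decomposition of $Z^{(T)}$ exactly as in the proof of Theorem \ref{thm:dc}: apply the stochastic Fubini theorem to interchange the time integrals in $\int_t^T f(t,u)du$ and $\int_{(t,T]}g(t,u)\mu_t(du)$ using Assumption \ref{ass:processes}, then apply Itô's formula to the exponential. This produces a drift coming from $-\bar a(t,T)-\bar\alpha(t,T)$, an Itô correction $\frac12\|\bar b(t,T)+\bar\beta(t,T)\|^2$, the diffusion term $-(\bar b(t,T)+\bar\beta(t,T))\,dW_t$, the short-rate and short-end-of-curve terms $(r_t - f(t,t)-g(t,t)m_t)\,dt$ together with the singular contribution $-\int_0^\cdot g(s,s)d\nu_s$, and a jump/random-measure part governed by $g^{(T)}*\mu$ and the exponential compensator $\Psi*\mu^{p,(Y^{(T)},H)}$ — here with $H$ replaced throughout by $-R$, so that $\mu^{p,(Y^{(T)},H)}$ becomes $\mu^{p,(Y^{(T)},-R)}$ and the auxiliary quantities $A^{(T)}$, $K^{(T)}$ are defined relative to the pair $(Y^{(T)},-R)$. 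The covariation term $d[\cE(R),Z^{(T)}]$ contributes precisely the interaction between the jumps of $R$ and the jumps of $Y^{(T)}$, which is what $\Psi*\mu^{p,(Y^{(T)},-R)}$ is designed to compensate; the definition \eqref{eq:def_W} of $\Psi$ already encodes the factor $e^{g(t,t)\Delta\bar\mu_t}$ needed to match jump sizes.

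The third and decisive step is to collect the finite-variation part of the resulting special semimartingale and impose that it vanish. I would use the uniqueness of the canonical decomposition to split this finite-variation drift into its mutually singular pieces — the absolutely continuous part (in $dt$), the continuous singular part (carried by $d\nu$ and $dA^{(T),\mathrm{sg}}$ and now also by $dC^{\mathrm{sg}}$), and the purely discontinuous part (the jumps). Setting each piece to zero separately yields conditions \emph{(i)}, \emph{(iv)}, \emph{(v)} from the continuous-in-$t$, singular, and drift-matching components respectively, while the jump component splits further: matching the predictable jumps of $\cE(R)$ against the exponential jump of $Z^{(T)}$ gives the jump-size identity \emph{(ii)} $\Delta C_t = 1-e^{-g(t,t)\Delta\bar\mu_t}$ (the recovery analogue of the default-compensator jump condition), and the requirement that the \emph{compensated} jump martingale have no residual drift at predictable times gives \emph{(iii)}. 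The main obstacle, as in Theorem \ref{thm:dc}, is the careful bookkeeping of the random-measure terms: one must verify that $Y^{(T)}$ is well-defined of finite variation (via Lemma \ref{lem:compute_G}), justify that $\Psi*\mu^{p,(Y^{(T)},-R)}$ is well-defined and of locally integrable variation so that it can legitimately serve as a compensator, and confirm the relation \eqref{eq:link_comp_meas} linking $\mu^{p,(Y^{(T)},-R)}$ back to $\mu^p$ (the exact analogue of Lemma \ref{lem:rel_comp_meas}) so that the abstract kernel conditions can be re-expressed in the concrete forward-rate terms appearing in \emph{(ii)}–\emph{(v)}. Since $R$ replaces $H$ but enjoys the same structural properties — locally bounded, special, with $-1\le\Delta R\le 0$ and canonical representation $R=x*(\mu^R-\mu^{p,R})-C$ — each lemma used in the proof of Theorem \ref{thm:dc} transfers verbatim with $H^p$ replaced by $C$, $\lambda$ by $C^{\mathrm{sg}}$, $h$ by $C^{\mathrm{ac}}$, and $\Delta H^p$ by $\Delta C$, which is exactly the substitution visible in the statement.
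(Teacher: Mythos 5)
Your proposal is correct and follows essentially the same route as the paper: the paper writes $P(\cdot,T)=\cE\bigl(\widetilde{X}^{(T)}+R+[\widetilde{X}^{(T)},R]\bigr)$ via Yor's formula and then kills the predictable finite-variation part, which is exactly your integration-by-parts on $\cE(R)\cdot Z^{(T)}$ after factoring out $\cE(R)_{-}$, with the same substitutions $H\mapsto -R$, $H^p\mapsto C$, $h\mapsto C^{\rm ac}$, $\lambda\mapsto C^{\rm sg}$ and the same ac/singular/jump splitting (including the evaluation at $t=T$, where $\Delta Y^{(T)}_T=0$, to decouple \emph{(ii)} from \emph{(iii)}). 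The only blemish is the mislabelled correspondence ``\emph{(i)}, \emph{(iv)}, \emph{(v)} from the continuous-in-$t$, singular, and drift-matching components'' --- \emph{(i)} and \emph{(iv)} both come from the absolutely continuous part (short end versus general $T$) and \emph{(v)} from the singular part --- but this does not affect the substance of the argument.
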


The interpretation of the five necessary and sufficient conditions stated in the above theorem is analogous to the case of Theorem \ref{thm:dc}.

\subsection{Related literature}\label{sec:rel-literature}

As mentioned in the introduction, the two classical approaches to credit risk modelling are the \emph{structural} approach, starting with Merton \cite{Merton1974} and its extensions \cite{Geske1977,GeskeJohnson84}, and the \emph{reduced-form} approach, introduced in early works of Jarrow, Lando and Turnbull \cite{JarrowTurnbull1995,Lando94} and in Artzner and Delbaen \cite{ArtznerDelbaen95}. 
It was a long time that these approaches co-existed in the literature but no model was bridging them (see however  \cite{DuffieLando2001, JarrowProtter, FreySchmidt2009, FreySchmidt2012} for information-based models connecting structural and reduced-form models). 
In more recent years, some of the features of structural and reduced-form models have been combined in {\em hybrid} models, as considered for instance in \cite{CPS,CarrLinetsky,MCL}. In particular, in hybrid models the default compensator has an absolutely continuous part and a discontinuous part, thus showing once more the importance of considering the possibility that default occurs in correspondence of predictable times. 
The model considered in \cite{Andreasen} also shows a similar behavior.
However, no general theory of term structure modelling was available for hybrid models so far. The present paper intends to fill this gap.
In the remaining part of this section, we discuss in detail the relation with the works \cite{BelangerShreveWong2004} and \cite{JiaoLi2015,JiaoLi2016}, which are especially related to our framework.


\subsubsection{The relation to B\'elanger, Shreve and Wong (2004)} 

The remarkable paper \cite{BelangerShreveWong2004} considers a first-passage-time model over a random boundary for the default time and points towards an extension of the reduced-form approach beyond intensity-based models. The framework may be seen as a structural approach where the debt level is random and we give a short account. 
In \cite{BelangerShreveWong2004}, the authors consider a filtration $\bbG$, given by the augmented filtration generated by a Brownian motion $W$. Additionally, there is a c\`adl\`ag non-decreasing $\bbG$-predictable process $(\Lambda_t)_{0\leq t\leq \T}$ and the default time $\tau$ is defined as
$$ \tau := \inf\bigl\{t \in[0,\T]: \Lambda_t \ge \Theta \bigr\}, $$
where $\Theta$ is a strictly positive random variable independent of $\bbG$. The filtration $\bbF$ is then defined as the progressive enlargement of $\bbG$ with respect to $\tau$.
Depending on the choice of the process $(\Lambda_t)_{0\leq t\leq \T}$, it is shown that the default compensator $H^p$ may contain jumps as well as a singular continuous part, thus exploiting the generality of  decomposition \eqref{eq:dec_comp} and going beyond classical intensity-based models. 
However, the HJM approach to the modelling of defaultable term structures is only considered  in \cite[Section 5]{BelangerShreveWong2004}  in an intensity-based setting (i.e., assuming that the default compensator $H^p$ is absolutely continuous). 

\subsubsection{The relation to Jiao and Li (2015)}

More recently, extensions of the intensity-based approach have been pursued via methods of enlargements of filtrations, see \cite{ElKarouiJeanblancJiao2010,Karoui2014conditional}. This approach has been extended in \cite{JiaoLi2015} to the case where the default compensator exhibits discontinuities. 
Starting from a background filtration $\bbG$, \cite{JiaoLi2015} consider a finite family $\{\tau_1,\ldots,\tau_n\}$ of $\bbG$-stopping times, which can be chosen strictly increasing without loss of generality. The filtration $\bbF$ is constructed as the progressive enlargement of $\bbG$ with respect to the default time $\tau$.
Letting $(\alpha_t)_{0\leq t\leq \T}$ be a $\bbG$-optional process taking values in the space of measurable functions on $\R_+$, \cite{JiaoLi2015} propose the following \emph{generalized density hypothesis}:
\[
\E\Bigg[ \ind{\tau < +\infty} h(\tau) \prod_{i=1}^n \ind{\tau \neq \tau_i}\biggr|\cG_t\Bigg] = \int_{\R_+} h(u) \alpha_t(u) \eta(du) \qquad \text{a.s. for all }0\leq t\leq \T,  
\]
for any bounded measurable function $h(\cdot)$, where $\eta$ is assumed to be a non-atomic $\sigma$-finite Borel measure on $\R_+$. 
In \cite[Section 3]{JiaoLi2015}, the default compensator $H^p$ is computed under this hypothesis. It is shown that $H^p$ contains an integral with respect to the measure $\eta$, which may not necessarily be absolutely continuous and, in addition, $H^p$ depends on the $\bbG$-compensators of the processes $\Ind_{\dbraco{\tau_i,+\infty}}$ which are allowed to be fully general and may therefore exhibit a jumping behavior. Clearly, this specification can be covered by the general decomposition \eqref{eq:dec_comp}.

As an example, \cite{JiaoLi2015} consider the case $\tau:=\tau_1 \wedge E$, where $E$ is exponentially distributed and $\tau_1$ is the first passage time of a Brownian motion at the level $a<0$. In this case it, follows that 
\begin{align} \label{eq:Jiao1} 
H^p_t = \int_0^{t} h_s ds + \ind{\tau_1 \le t} \Delta H_{\tau_1}^p, 
\qquad \text{ for all }0 \le t \le \tau.
\end{align}
Our results can be applied to this setting and permit to describe the general class of arbitrage-free term structure models compatible with this structure of the default compensator.

The approach of \cite{JiaoLi2015} has been recently extended to the context of sovereign default risk in \cite{JiaoLi2016}. Consider a sequence of increasing thresholds $0<a_1<a_2<\ldots<a_n$ and denote by $(\tau_i)_{i=1,\ldots,n}$ the (increasing) first-passage times of a Brownian motion of these levels. Let $E'$ be an independent exponentially distributed random variable. The times $\tau_i$ represent critical political events where the sovereign seeks financial aid to avoid immediate default. If $\tau_i > E'$, this attempt was not successful and default occurs. Furthermore, \cite{JiaoLi2016} consider an additional doubly-stochastic random time with an intensity and let the default time $\tau$ be the minimum of such times. In summary, the authors show that $H^p_t = \int_0^th_s ds + \sum_{i=1}^n \ind{\tau_i \le t} \Delta H_{\tau_i}^p$, for all $0 \le t \le \tau$. The authors also study the case where the Brownian motion is replaced by a diffusive Markov process and obtain explicit formulas for  geometric Brownian motion and for the CEV process.

\section{Proofs}
\label{sec:proof}

This section contains the proofs of all our results. After giving the proof of  two technical lemmata stated in Section \ref{sec:tau}, we present the proof of Theorem \ref{thm:dc}, while the proofs of the results stated in Section \ref{sec:examples} are given in Section \ref{sec:proofs_cor}. Finally, Section \ref{sec:proof_recovery} contains the proof of Theorem \ref{thm:dc2}.

\subsection{Proofs of the results of Section~\ref{sec:tau}}

\begin{proof}[Proof of Lemma~\ref{lem:dec_comp}]
Since $H^p$ is a predictable process of finite variation, it can be decomposed as $H^p=(H^p)^c+\sum_{0<s\leq \cdot}\Delta H^p_s$, where $(H^p)^c$ is an increasing continuous process. 
\cite[Theorem 2.1]{DelbaenSchachermayer1995} then yields the existence of an integrable predictable process $(h_t)_{0\leq t\leq \T}$ such that
\[ 
(H^p)^c_t = \int_0^t h_s ds + \int_0^t \Ind_N(s) d(H^p)^c_s,
\qquad \text{ for all }0\leq t\leq \T,
\]
where $N$ is a predictable subset of $\Omega\times[0,\T]$ such that the sections $N_{\omega}:=\{t\in[0,\T]:(\omega,t)\in N\}$ have Lebesgue measure zero, for a.a. $\omega\in\Omega$.
The result  follows by letting $\lambda:=\int_0^{\cdot}\Ind_N(s) d(H^p)^c_s$.
\end{proof}

\begin{proof}[Proof of Lemma~\ref{lem:predictability}]
Note first that, due to Assumption \eqref{ass:rnd_meas}, it holds that, for every $t\in[0,\T]$,
\[
\bar{\mu}_t 
= \mu\bigl([0,\T]\times[0,t]\bigr)
= \mu\bigl([0,t]\times[0,t]\bigr)
= \int_0^t\int_0^{\T}\ind{u\leq t}\mu(ds,du).
\]
For any measurable bounded function $\vartheta:[0,\T]\times[0,\T]\rightarrow\R_+$, the process $(\Theta(t,v))_{0\leq t\leq \T}$ defined by $\Theta(t,v):=\int_0^t\int_0^{\T}\vartheta(v,u)\mu(ds,du)$ is optional and increasing, for every $v\in[0,\T]$, since $\mu(ds,du)$ is a non-negative optional random measure. Being increasing, $(\Theta(t,v))_{0\leq t\leq \T}$ admits limits from the left, so that the process $(\Theta(t-,v))_{0\leq t\leq \T}$ is adapted and left-continuous, hence predictable, for every $v\in[0,\T]$. 
In turn, this implies that the processes $(\Theta(t,t))_{0\leq t\leq \T}$ and $(\Theta(t-,t))_{0\leq t\leq \T}$ are optional and predictable, respectively. Indeed, this is obvious for functions of the form $\vartheta(t,u)=p(t)q(u)$, with $p,q:[0,\T]\rightarrow\R_+$ bounded and measurable 
and the general case follows by a monotone class argument.
Letting $\vartheta(t,u)=\ind{u\leq t}$, this shows that the process $(\bar{\mu}_t)_{0\leq t\leq \T}$ is optional and increasing. Moreover, due to part (i) of Assumption \eqref{ass:rnd_meas}, it holds that 
\[
\bar{\mu}_t = \mu\bigl([0,t]\times[0,t]\bigr) = \mu\bigl([0,t)\times[0,t]\bigr)=\Theta(t-,t),
\] 
thus showing the predictability of $\bar{\mu}$.
The right-continuity of $\bar{\mu}$ follows by the upper semicontinuity of the measure. Decomposition \eqref{eq:dec_pred_proc} can be obtained by the same arguments used in the proof of Lemma~\ref{lem:dec_comp}.
\end{proof}

\subsection{Proof of Theorem~\ref{thm:dc}}	\label{sec:proof_thm}

Since the proof of Theorem~\ref{thm:dc} requires several intermediate steps, let us first give an outline of the main ideas involved. The starting point consists in representing the pre-default price (i.e., on the set $\{H=0\}$) of a credit risky bond as an exponential of a semimartingale admitting an explicit decomposition into a predictable finite variation part, a continuous local martingale part and an integral with respect to the random measure $\mu$. As a second step, we conveniently transform the ordinary exponential into a stochastic exponential. The desired local martingale property of (discounted) credit risky bond prices will then be equivalent to the local martingale property of the process defining the stochastic exponential. By computing the canonical decomposition of the latter, the local martingale property will hold if and only if all predictable finite variation terms vanish. This will lead to the conditions stated in Theorem \ref{thm:dc}. 

We start by rewriting the defaultable bond price $P(t,T)$ in the following form:
\begin{equation}	\label{eq:bond_prod}
P(t,T) = (1-H_t)F(t,T)G(t,T),
\end{equation}
where 
\[
F(t,T):=\exp\left(-\int_t^Tf(t,u)du\right)
\qquad\text{ and }\qquad
G(t,T):=\exp\left(-\int_{(t,T]}g(t,u)\mu_t(du)\right),
\]
for all $0\leq t\leq T\leq \T$.
By Assumption \ref{ass:processes} and following the original arguments of \cite{HJM}, the term $F(t,T)$ admits the representation
\begin{equation}	\label{eq:F}
F(t,T) = \exp\left(\int_0^tf(s,s)ds-\int_0^t\bar{a}(s,T)ds-\int_0^t\bar{b}(s,T)dW_s\right),
\end{equation}
see, e.g., \cite[Lemma 6.1]{Filipovic2009} (note that, in comparison to this work, we rely on a slightly weaker assumption on the volatility process $b$ for the application of the stochastic Fubini theorem by virtue of  \cite[Theorem IV.65]{Protter} or \cite[Proposition A.2]{BMKR}). 

The next lemma, which extends \cite[Lemma 2.3]{GehmlichSchmidt2016MF} to the present general setting, derives a representation analogous to \eqref{eq:F} for the term $G(t,T)$.

\begin{lemma}	\label{lem:compute_G}
Suppose that Assumptions \ref{ass:rnd_meas} and \ref{ass:processes} hold.  
Then, for each $T\in[0,\T]$, the process $(\log G(t,T))_{0\leq t\leq T}$ is a semimartingale admitting the decomposition
\begin{equation}	\label{eq:semimg_repr_G}
\log G(t,T) = \int_0^tg(s,s)\,d\bar{\mu}_s - \int_0^t\bar{\alpha}(s,T)ds - \int_0^t\bar{\beta}(s,T)dW_s -  \int_0^t\int_{(s,T]} g(s,u)\mu(ds,du).
\end{equation}
\end{lemma}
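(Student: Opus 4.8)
The plan is to compute, for fixed maturity $T$, the dynamics in the running time $t$ of the single quantity $\log G(t,T)=-\int_{(t,T]}g(t,u)\,\mu_t(du)$ and to identify the result with the right-hand side of \eqref{eq:semimg_repr_G}. First I would substitute the forward-rate dynamics \eqref{g1} into this expression and interchange the order of the resulting $ds$- and $dW$-integrals with the integration against $\mu(ds,du)$ by means of a stochastic Fubini theorem. This is the measure-valued analogue of the classical HJM manipulation that produces representation \eqref{eq:F} for $F(t,T)$, and it extends \cite[Lemma 2.3]{GehmlichSchmidt2016MF}, where $\mu$ was a finite sum of Dirac masses, to a general optional random measure.

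The computation is naturally organised around the three distinct ways in which $t$ enters $\int_{(t,T]}g(t,u)\,\mu_t(du)=\iint_{\{0\le s\le t<u\le T\}}g(t,u)\,\mu(ds,du)$. The variation of the integrand $g(t,u)$ at the current time, namely $\alpha(t,u)\,dt+\beta(t,u)\,dW_t$, integrated over the current domain and measure, yields by definition \eqref{eq:notation_integrals} the terms $\int_0^t\bar{\alpha}(s,T)\,ds$ and $\int_0^t\bar{\beta}(s,T)\,dW_s$. The accumulation of new mass of $\mu$ over $[0,t]$ contributes $\int_0^t\int_{(s,T]}g(s,u)\,\mu(ds,du)$, with the integrand evaluated at the creation time $s$. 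Finally, the motion of the lower endpoint of the domain $(t,T]$ sweeps out the mass of $\mu$ whose second coordinate $u$ is crossed by $t$; since $\mu$ is carried by $\{s<u\}$ by Assumption \ref{ass:rnd_meas}(i), such mass has already been created by the time $t$ reaches $u$, so the rate at which it leaves $(t,T]$ is governed by the increments of $\bar{\mu}_t=\mu([0,\T]\times[0,t])$ weighted by $g(t,t)$, producing $\int_0^t g(s,s)\,d\bar{\mu}_s$. Collecting the three contributions with the correct signs and recalling $\log G=-\int_{(t,T]}g(t,\cdot)\,\mu_t$ gives \eqref{eq:semimg_repr_G}; the a.s.\ finiteness of $\log G(t,T)$, and hence of $Y^{(T)}$, follows from Assumption \ref{ass:processes} together with the a.s.\ finiteness of $\mu_t([0,\T])$ guaranteed by Assumption \ref{ass:rnd_meas}(ii).

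The main obstacle is the rigorous justification of the Fubini interchange for the stochastic-integral contribution, where one must swap $\int_0^{\,\cdot}(\cdot)\,dW$ with integration against the \emph{optional} random measure $\mu$, which is in general neither predictable nor a product measure. I would handle this by invoking a stochastic Fubini theorem under the integrability hypotheses of Assumption \ref{ass:processes} --- in particular $\int_0^{\T}\int_0^{\T}\|\beta(s,u)\|^2\,\mu_s(du)\,ds<+\infty$ for the $dW$-term and $\int_0^{\T}\int_0^{\T}|\alpha(s,u)|\,ds\,\mu_t(du)<+\infty$ for the drift --- together with the predictable $\sigma$-finiteness of $\mu$ from Assumption \ref{ass:rnd_meas}(ii), as provided by \cite[Theorem IV.65]{Protter} or \cite[Proposition A.2]{BMKR}. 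A secondary technical point is the careful separation of the boundary and new-mass contributions near the diagonal $\{s=u\}$, where Assumption \ref{ass:rnd_meas}(i) prevents any double counting, and the verification that the continuous, singular and jump parts of $\bar{\mu}$ (cf.\ \eqref{eq:dec_pred_proc}) are correctly absorbed into the single Stieltjes integral $\int_0^t g(s,s)\,d\bar{\mu}_s$.
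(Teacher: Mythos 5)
Your overall strategy coincides with the paper's: substitute the dynamics \eqref{g1} of $g$ into $-\log G(t,T)=\int_0^t\int_0^T\ind{u>t}g(t,u)\,\mu(ds,du)$, interchange the $ds$- and $dW$-integrals with the integration against $\mu$, and collect exactly the three contributions you describe. (The paper organises the bookkeeping by applying the product rule to $\Ind_{[0,u)}(t)g(t,u)$, which packages the variation of $g$ and the crossing of the lower endpoint $u=t$ into a single identity, and then identifies $\int_0^t\int_0^T\Ind_{(s,t]}(u)\,g(u,u)\,\mu(ds,du)=\int_0^tg(u,u)\,d\bar{\mu}_u$ via Assumption \ref{ass:rnd_meas}(i).)

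The gap lies in the step you yourself single out as the main obstacle. You propose to justify the interchange of $\int\cdot\,dW_v$ with $\int\cdot\,\mu(ds,du)$ by invoking \cite[Theorem IV.65]{Protter} or \cite[Proposition A.2]{BMKR} directly. Those results require the parametrising measure to be deterministic (or fixed at time $0$); here $\mu(ds,du)$ is a genuinely optional random measure whose mass is revealed progressively, and the identity actually needed is
\[
\int_0^t\int_0^T\int_s^t\Ind_{[0,u)}(v)\beta(v,u)\,dW_v\,\mu(ds,du)
=\int_0^t\int_0^v\int_0^T\Ind_{[0,u)}(v)\beta(v,u)\,\mu(ds,du)\,dW_v,
\]
in which the lower limit of the inner stochastic integral on the left depends on the first coordinate of $\mu$. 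This is not covered by the theorems you cite, and the paper devotes its appendix (Lemma \ref{lem:stoch_Fubini}) to proving precisely this variant: it splits $\mu$ into an atomic part $\sum_k\ind{T_k\leq\T}\delta_{T_k}(ds)F_k(du)$, treated by passing to the shifted filtration $\cF^k_t=\cF_{(T_k+t)\wedge\T}$, in which $F_k$ becomes $\cF^k_0$-measurable so that the classical theorem applies, followed by a dominated-convergence argument for the infinite sum; and a continuous part $K(s;du)\,dA_s$, treated via a time change reducing $dA_s$ to Lebesgue measure. Without an argument of this kind, your proof is incomplete at its central step.
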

\begin{proof}
We first show that the stochastic integral $\int_0^{\cdot}\bar{\beta}(s,T)dW_s$ is well-defined, for every $T\in[0,\T]$. To this effect,
H\"older's inequality and Assumptions \ref{ass:rnd_meas} and \ref{ass:processes} imply that, for every $T\in[0,\T]$,
\begin{align*}
\int_0^T\|\bar{\beta}(s,T)\|^2ds
&= \int_0^T\biggl\|\int_{(s,T]}\beta(s,u)\mu_s(du)\biggr\|^2ds
\leq \int_0^T\left(\mu_s([0,T])\int_0^T\|\beta(s,u)\|^2\mu_s(du)\right)ds	\\
& \leq \mu_T([0,T])\int_0^T\int_0^T\|\beta(s,u)\|^2\mu_s(du)ds < +\infty\text{ a.s.}
\end{align*}
thus proving the well-posedness of the stochastic integral $\int_0^{\cdot}\bar{\beta}(s,T)dW_s$.
In turn, since the term $G(t,T)$ is a.s. finite for every $0\leq t\leq T\leq \T$, 
the decomposition \eqref{eq:semimg_repr_G} implies that the term $\int_0^t\int_{(s,T]}g(s,u)\mu(ds,du)=Y_t^{(T)}$ is a.s. finite and well-defined as a finite variation process.

Observe that, by the definition of $\mu_t(du)$,
\begin{equation}	\label{eq:dyn_1}
-\log G(t,T)
= \int_0^t\int_{(t,T]}g(t,u)\mu(ds,du)
= \int_0^t\int_0^T\ind{u>t}g(t,u)\mu(ds,du).
\end{equation}
The product rule, together with equation \eqref{g1} and the continuity of $g$, yields that
\begin{align}
\Ind_{[0,u)}(t)g(t,u)
&= g(0,u) + \int_0^t \Ind_{[0,u)}(v)\, dg(v,u)+ \int_0^t g(v,u)d\bigl(\Ind_{[0,u)}(v)\bigr) \notag\\
&= g(0,u)  + \int_0^t\Ind_{[0,u)}(v)\alpha(v,u)dv + \int_0^t\Ind_{[0,u)}(v)\beta(v,u) dW_v - g(u,u)\ind{u\leq t},
\label{eq:dyn_2}
\end{align}
where both integrals are well-defined by Assumption \ref{ass:processes}.
Equations \eqref{eq:dyn_1}-\eqref{eq:dyn_2} imply that
\begin{align}
-\log G(t,T) &= \int_0^t\int_0^T g(0,u) \mu(ds,du) + \int_0^t\int_0^T \int_0^t\Ind_{[0,u)}(v)\alpha(v,u)dv \mu(ds,du) \notag\\
& + \int_0^t\int_0^T \int_0^t\Ind_{[0,u)}(v)\beta(v,u)dW_v\, \mu(ds,du) - \int_0^t\int_0^T g(u,u)\ind{u\leq t}\mu(ds,du) \notag\\ & =: (1) + (2) + (3)+(4).
\label{eq:dyn_3}
\end{align}
Due to part (ii) of Assumption \ref{ass:processes}, we can apply for each $\omega\in\Omega$ the classical Fubini theorem to the term $(2)$, so that
\begin{align*}
(2) &= 
\int_0^t\int_0^T \int_0^s\Ind_{[0,u)}(v)\alpha(v,u)dv \mu(ds,du)
+ \int_0^t\int_0^T \int_s^t\Ind_{[0,u)}(v)\alpha(v,u)dv \mu(ds,du)	\\
&= \int_0^t\int_0^T \int_0^s\Ind_{[0,u)}(v)\alpha(v,u) \, dv \mu(ds,du)
+ \int_0^t \int_0^v \int_0^T\Ind_{[0,u)}(v)\alpha(v,u) \mu(ds,du)\, dv.
\end{align*}
As shown in Lemma \ref{lem:stoch_Fubini} in the appendix, Assumptions \ref{ass:rnd_meas} and \ref{ass:processes} allow to perform an analogous change of the order of integration in the term (3), so that
\[
(3) = \int_0^t\int_0^T \int_0^s\Ind_{[0,u)}(v)\beta(v,u) \, dW_v \mu(ds,du)
+ \int_0^t \int_0^v \int_0^T\Ind_{[0,u)}(v)\beta(v,u) \mu(ds,du)\, dW_v.
\]
Note also that
\begin{align*}
\int_0^t \int_0^v \int_0^T\Ind_{[0,u)}(v)\alpha(v,u) \mu(ds,du)\, dv 
&= \int_0^t \int_{(v,T]}\alpha(v,u) \mu_v(du)\, dv
= \int_0^t \bar \alpha(v,T) dv,	\\
\int_0^t \int_0^v \int_0^T\Ind_{[0,u)}(v)\beta(v,u) \mu(ds,du)\, dW_v 
&= \int_0^t \int_{(v,T]}\beta(v,u) \mu_v(du)\, dW_v
= \int_0^t \bar \beta(v,T) dW_v,
\end{align*}
where both integrals are well-defined by Assumption \ref{ass:processes}.
Moreover, due to equation \eqref{eq:dyn_2}, it holds that
\[
\int_0^s\Ind_{[0,u)}(v)\alpha(v,u)  dv + \int_0^s\Ind_{[0,u)}(v)\beta(v,u) dW_v  
= \Ind_{[0,u)}(s)g(s,u) - g(0,u) +  g(u,u) \ind{u \le s},
\]
so that equation \eqref{eq:dyn_3} can be rewritten as
\begin{align*}
-\log G(t,T)
&= \int_0^t \bar \alpha(v,T) dv + \int_0^t \bar \beta(v,T) dW_v	\\
&\quad+ \int_0^t\int_0^T \Ind_{[0,u)}(s) g(s,u)\,\mu(ds,du) -  \int_0^t\int_0^T \Ind_{(s,t]}(u) g(u,u) \mu(ds,du).
\end{align*}
Finally, part (i) of Assumption \ref{ass:rnd_meas} and the definition of the process $\bar{\mu}$ imply that
\[
\int_0^t\int_0^T \Ind_{(s,t]}(u) g(u,u) \mu(ds,du) 
= \int_0^t\int_0^t  g(u,u) \mu(ds,du)	
= \int_0^{\T}\int_0^t  g(u,u) \mu(ds,du)
= \int_0^t g(u,u)\,d\bar{\mu}_u.
\]
As already remarked at the end of Section \ref{sec:prelim}, the process $\int_0^{\cdot}g(u,u)\,d\bar{\mu}_u$ is predictable and of finite variation. This implies the semimartingale property of the process $(\log G(t,T))_{0\leq t\leq T}$, for every $T\in[0,\T]$.
\end{proof}

For each $T\in[0,\T]$, let us define the process $(X^{(T)}_t)_{0\leq t\leq T}$ by
\begin{equation}	\label{eq:XtT}	\begin{aligned}
X^{(T)}_t &:= 
\log\bigl(F(t,T)\bigr) + \log\bigl(G(t,T)\bigr)	\\
&= \int_0^tf(s,s)ds-\int_0^t\bar{a}(s,T)ds-\int_0^t\bar{b}(s,T)dW_s	\\
&\quad+\int_0^tg(s,s)d\bar{\mu}_s - \int_0^t\bar{\alpha}(s,T)ds - \int_0^t\bar{\beta}(s,T)dW_s -  \int_0^t\int_{(s,T]} g(s,u)\mu(ds,du),
\end{aligned}	\end{equation}
so that $P(t,T)=(1-H_t)\exp(X^{(T)}_t)$.
In the following lemma, we give an alternative representation of the defaultable bond price $P(t,T)$ as a stochastic exponential.

\begin{lemma}	\label{lem:stoch_exp}
Suppose that Assumptions \ref{ass:rnd_meas} and \ref{ass:processes} hold. 
Then, for each $0\leq t\leq T\leq \T$, the credit risky bond price $P(t,T)$ can be represented as
\begin{equation}	\label{eq:stoch_exp_1}
P(t,T) = \mathcal{E}\bigl(\widetilde{X}^{(T)}-H-[\widetilde{X}^{(T)},H]\bigr)_t,
\end{equation}
where, for each $T\in[0,\T]$, the process $(\widetilde{X}^{(T)}_t)_{0\leq t\leq T}$ is defined as
\begin{align}
\widetilde{X}^{(T)}_t
&:= X^{(T)}_t + \frac{1}{2}\int_0^t\|\bar{b}(s,T)+\bar{\beta}(s,T)\|^2ds	\notag\\
&\quad + \sum_{0<s\leq t}\left(e^{-\int_s^Tg(s,u)\mu(\{s\}\times du)+g(s,s)\Delta\bar{\mu}_s}-1+\int_s^Tg(s,u)\mu(\{s\}\times du)-g(s,s)\Delta\bar{\mu}_s\right).
\label{eq:stoch_exp_2}
\end{align}
\end{lemma}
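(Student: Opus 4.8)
The plan is to start from the factorized form $P(t,T)=(1-H_t)\exp(X^{(T)}_t)$, where $X^{(T)}$ is the semimartingale in \eqref{eq:XtT} gathering the HJM representation \eqref{eq:F} and Lemma~\ref{lem:compute_G}, and then to treat the two factors separately: convert the ordinary exponential $\exp(X^{(T)})$ into a Dol\'eans--Dade exponential, rewrite the survival factor $1-H$ as a stochastic exponential, and finally merge them through Yor's product formula. First I would read off the canonical decomposition of $X^{(T)}$ from \eqref{eq:XtT}. Its continuous local martingale part is $-\int_0^{\cdot}(\bar{b}(s,T)+\bar{\beta}(s,T))\,dW_s$, so that $[X^{(T),c}]_t=\int_0^t\|\bar{b}(s,T)+\bar{\beta}(s,T)\|^2\,ds$; its predictable finite-variation part collects the remaining $ds$- and continuous $d\bar{\mu}$-integrals; and, since $F(\cdot,T)$ has continuous paths, its jumps stem only from $\log G(\cdot,T)$, namely
\[
\Delta X^{(T)}_s=g(s,s)\,\Delta\bar{\mu}_s-\int_{(s,T]}g(s,u)\,\mu(\{s\}\times du)=g(s,s)\,\Delta\bar{\mu}_s-\Delta Y^{(T)}_s .
\]

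Second, I would invoke the standard identity relating the ordinary and the stochastic exponential of a semimartingale (see, e.g., \cite{JacodShiryaev}): for a semimartingale $X$ with $X_0=0$ one has $\exp(X)=\mathcal{E}(\widetilde{X})$, where
\[
\widetilde{X}_t=X_t+\tfrac{1}{2}[X^{c}]_t+\sum_{0<s\le t}\bigl(e^{\Delta X_s}-1-\Delta X_s\bigr).
\]
Substituting the decomposition of $X^{(T)}$ just obtained, the bracket contributes $\tfrac{1}{2}\int_0^t\|\bar{b}(s,T)+\bar{\beta}(s,T)\|^2\,ds$, while each summand becomes $e^{-\int_s^Tg(s,u)\mu(\{s\}\times du)+g(s,s)\Delta\bar{\mu}_s}-1+\int_s^Tg(s,u)\mu(\{s\}\times du)-g(s,s)\Delta\bar{\mu}_s$. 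This is precisely the expression \eqref{eq:stoch_exp_2} defining $\widetilde{X}^{(T)}$, so that $\exp(X^{(T)})=\mathcal{E}(\widetilde{X}^{(T)})$.

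Third, I would observe that $1-H_t=\mathcal{E}(-H)_t$: since $H$ is increasing and purely discontinuous with a single unit jump at $\tau$, the Dol\'eans--Dade formula gives $\mathcal{E}(-H)_t=\prod_{0<s\le t}(1-\Delta H_s)=\ind{\tau>t}=1-H_t$. Combining the two factors via Yor's product formula $\mathcal{E}(U)\,\mathcal{E}(V)=\mathcal{E}(U+V+[U,V])$ with $U=\widetilde{X}^{(T)}$ and $V=-H$ yields
\[
P(t,T)=\mathcal{E}(\widetilde{X}^{(T)})_t\,\mathcal{E}(-H)_t=\mathcal{E}\bigl(\widetilde{X}^{(T)}-H-[\widetilde{X}^{(T)},H]\bigr)_t,
\]
which is the claimed representation \eqref{eq:stoch_exp_1}.

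The step I expect to be the main obstacle is the well-posedness of $\widetilde{X}^{(T)}$ as a genuine semimartingale, i.e.\ the a.s.\ convergence of the jump series $\sum_{0<s\le t}(e^{\Delta X^{(T)}_s}-1-\Delta X^{(T)}_s)$ in the presence of the possibly infinitely many jumps carried by $\mu$. Here I would use that $|e^{x}-1-x|\le C\,x^2$ for $x$ ranging in a bounded set, together with the fact that a semimartingale has only finitely many jumps exceeding any fixed level and satisfies $\sum_{s\le t}(\Delta X^{(T)}_s)^2<+\infty$: the large-jump part is then a finite sum, and the small-jump part is dominated by the summable quadratic-variation tail. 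The integrability needed to guarantee that $Y^{(T)}$ is a genuine finite-variation process and to control these sums is provided by Lemma~\ref{lem:compute_G} and by the local integrability in part (ii) of Assumption~\ref{ass:rnd_meas}. A secondary bookkeeping point is the initial value: as $\mathcal{E}(\cdot)_0=1$, the argument uses the representation \eqref{eq:XtT} with the normalization $X^{(T)}_0=0$ already adopted in \eqref{eq:F} and \eqref{eq:semimg_repr_G}.
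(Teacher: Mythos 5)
Your proposal is correct and follows essentially the same route as the paper: factorize $P(t,T)=(1-H_t)\exp(X^{(T)}_t)$, convert the ordinary exponential into a stochastic exponential via the identity $\exp(X)=\mathcal{E}(X+\tfrac12[X^c]+\sum(e^{\Delta X}-1-\Delta X))$ (the paper cites \cite[Theorem II.8.10]{JacodShiryaev} for this), note $1-H=\mathcal{E}(-H)$, and conclude by Yor's formula. Your closing discussion of the a.s.\ convergence of the jump series is a sound variant of the justification the paper gives immediately after the lemma via the finite-variation property of $Y^{(T)}$ and $\int_0^{\cdot}g(s,s)\,d\bar{\mu}_s$.
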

\begin{proof}
Since the process $(H_t)_{0\leq t\leq \T}$ is a single jump process with jump size equal to one, it follows that, by the definition of stochastic exponential,
\[
1-H_t = e^{H_0-H_t}\prod_{0<s\leq t}(1-\Delta H_s)e^{\Delta H_s} = \mathcal{E}(-H)_t,
\qquad\text{ for all }t\in[0,\T].
\]
\cite[Theorem II.8.10]{JacodShiryaev} implies that $\exp(X^{(T)}_t)=\mathcal{E}(\widetilde{X}^{(T)})_t$, for all $0\leq t\leq T\leq \T$, where  $\widetilde{X}^{(T)}$ is defined as in \eqref{eq:stoch_exp_2}. Representation \eqref{eq:stoch_exp_1} then follows by Yor's formula (see \cite[\textsection~II.8.19]{JacodShiryaev}).
\end{proof}

Our next goal consists in developing a more tractable representation of the process appearing in the stochastic exponential in \eqref{eq:stoch_exp_1}. To this effect, let us analyze in more detail the jumps of the semimartingale $\widetilde{X}^{(T)}-H-[\widetilde{X}^{(T)},H]$:
\begin{align}	
& \Delta\bigl(\widetilde{X}^{(T)}-H-[\widetilde{X}^{(T)},H]\bigr)_t
= \Delta\widetilde{X}^{(T)}_t-\Delta H_t-\Delta H_t\Delta\widetilde{X}^{(T)}_t	\notag\\
&\quad =  
e^{-\int_t^Tg(t,u)\mu(\{t\}\times du)+g(t,t)\Delta\bar{\mu}_t}-1 -\Delta H_t	 - \Delta H_t\left(e^{-\int_t^Tg(t,u)\mu(\{t\}\times du)+g(t,t)\Delta\bar{\mu}_t}-1\right)	\notag\\
&\quad =
e^{g(t,t)\Delta\bar{\mu}_t}\left(e^{-\int_t^Tg(t,u)\mu(\{t\}\times du)}-1\right)(1-\Delta H_t)
+ \left(e^{g(t,t)\Delta\bar{\mu}_t}-1\right)(1-\Delta H_t)-\Delta H_t.	\label{eq:733}
\end{align}
Let us rewrite this last expression in a more compact way by using the notation introduced in Section \ref{sec:main_result}. 
To this end, for each $T\in[0,\T]$, we make use of the processes $Y^{(T)}=\int_0^{\cdot}\int_0^Tg(s,u)\mu(ds,du)$ and $H$ and of the corresponding jump measure $\mu^{(Y^{(T)},H)}$, so that
\begin{equation}	\label{eq:W1_jumps}
\sum_{0<s\leq t}e^{g(s,s)\Delta\bar{\mu}_s}\left(e^{-\int_s^Tg(s,u)\mu(\{s\}\times du)}-1\right)(1-\Delta H_s)
= (\Psi * \mu^{(Y^{(T)},H)})_t,
\end{equation}
with $\Psi(\omega;s,y,z) = e^{g(\omega;s,s)\Delta\bar{\mu}_s(\omega)}\left(e^{-y}-1\right)(1-z)$. Note that the function $\Psi$ is $\mathcal{P}\otimes\mathcal{B}(\mathbb{R}\times[0,1])$-measurable and $\Psi * \mu^{(Y^{(T)},H)}$ makes sense as an integral with respect to the random measure $\mu^{(Y^{(T)},H)}$. Indeed, \eqref{eq:W1_jumps} is well-defined, since $H$ is a single jump process and
\begin{align}
& \sum_{0<s\leq t}e^{g(s,s)\Delta\bar{\mu}_s}\left|e^{-\int_s^Tg(s,u)\mu(\{s\}\times du)}-1\right|\notag\\
&
\leq \sum_{0<s\leq t}\left|e^{-\int_s^Tg(s,u)\mu(\{s\}\times du)+g(s,s)\Delta\bar{\mu}_s}-1\right|
+ \sum_{0<s\leq t}\left|e^{g(s,s)\Delta\bar{\mu}_s}-1\right|
<+\infty \text{ a.s. }
\label{eq:ineq_jumps}
\end{align}
In fact, the first sum appearing in \eqref{eq:ineq_jumps} is finite as a consequence of \eqref{eq:stoch_exp_2} together with the fact that the two processes $Y^{(T)}$ and $\int_0^{\cdot}g(s,s)d\bar{\mu}_s$ are of finite variation, which in turn implies that $\sum_{0<s\leq{\cdot}}\int_s^Tg(s,u)\mu(\{s\}\times du)$ and $\sum_{0<s\leq\cdot}g(s,s)\Delta\bar{\mu}_s$ are a.s. finite. Moreover, the process $\int_0^{\cdot}g(s,s)d\bar{\mu}_s$ is predictable and of finite variation, hence locally bounded (see  \cite[Lemma I.3.10]{JacodShiryaev}) and exponentially special, so that the second term appearing in \eqref{eq:ineq_jumps} is a.s. finite by \cite[Proposition II.8.26]{JacodShiryaev}.
This shows that the summations of the jump terms appearing in \eqref{eq:733} are well-defined and a.s. finite.
We have thus obtained the representation:
\[
\sum_{0<s\leq t}\Delta\bigl(\widetilde{X}^{(T)}-H-[\widetilde{X}^{(T)},H]\bigr)_s
= \sum_{0<s\leq t}\left(e^{g(s,s)\Delta\bar{\mu}_s}-1\right)(1-\Delta H_s) - H_t
+(\Psi * \mu^{(Y^{(T)},H)})_t.
\]
In turn, together with the definition of the process $\widetilde{X}^{(T)}$ (see \eqref{eq:stoch_exp_2}) and decomposition \eqref{eq:dec_pred_proc}, this implies that the semimartingale $\widetilde{X}^{(T)}-H-[\widetilde{X}^{(T)},H]$ defining the stochastic exponential \eqref{eq:stoch_exp_1} admits the following decomposition:
\begin{align}
\widetilde{X}^{(T)}_t-H_t-[\widetilde{X}^{(T)},H]_t & = 
\int_0^tf(s,s)ds
-\int_0^t\bar{a}(s,T)ds
- \int_0^t\bar{\alpha}(s,T)ds 
\notag\\
&\quad 
+ \frac{1}{2}\int_0^t\|\bar{b}(s,T)+\bar{\beta}(s,T)\|^2ds
+ \int_0^tg(s,s)m_sds 
+ \int_0^tg(s,s)d\nu_s	\notag\\
&\quad 
-\int_0^t\bar{b}(s,T)dW_s
- \int_0^t\bar{\beta}(s,T)dW_s 
-(g^{(T)} * \mu)^c_t  
\label{eq:stoch_exp_dec}\\
&\quad 
+ \sum_{0<s\leq t}\bigl(e^{g(s,s)\Delta\bar{\mu}_s}-1\bigr)(1-\Delta H_s)
-H_t
+(\Psi * \mu^{(Y^{(T)},H)})_t,
\notag
\end{align}
where $g^{(T)}(\omega;s,u):=\ind{u\leq T}g(\omega;s,u)$ and $(g^{(T)} * \mu)^c$ denotes the continuous part of the finite variation process $g^{(T)} * \mu=Y^{(T)}$.

We are now in a position to complete the proof Theorem \ref{thm:dc}.

\begin{proof}[Proof of Theorem \ref{thm:dc}]
Recall that, in view of Lemma \ref{lem:predictability}, the process $(g(t,t)\Delta\bar{\mu}_t)_{0\leq t\leq \T}$ is predictable and locally bounded, since $(g(t,t))_{0\leq t\leq \T}$ is continuous and $(\bar{\mu}_t)_{0\leq t\leq \T}$ is locally bounded, being a predictable process of finite variation. Hence, by compensating the process $H$ and using decomposition \eqref{eq:dec_comp}, it follows that
\begin{align*}
\sum_{0<s\leq t}\bigl(e^{g(s,s)\Delta\bar{\mu}_s}-1\bigr)\Delta H_s
&= \int_0^t\bigl(e^{g(s,s)\Delta\bar{\mu}_s}-1\bigr)dH_s	\\
&= \text{ (local martingale)}_t
+ \int_0^t\bigl(e^{g(s,s)\Delta\bar{\mu}_s}-1\bigr)dH^p_s	\\
&= \text{ (local martingale)}_t
+ \sum_{0<s\leq t}\bigl(e^{g(s,s)\Delta\bar{\mu}_s}-1\bigr)\Delta H^p_s.
\end{align*}
Recall that, for each $T\in[0,\T]$, the random measure $\mu^{p,(Y^{(T)},H)}$ denotes the compensator of $\mu^{(Y^{(T)},H)}$, in the sense of \cite[Theorem II.1.8]{JacodShiryaev}.
Hence, by relying on \eqref{eq:stoch_exp_dec} together with Lemma \ref{lem:dec_comp}, we obtain that
\begin{equation}	\label{eq:FV}	\begin{aligned}
\widetilde{X}^{(T)}_t-H_t-[\widetilde{X}^{(T)},H]_t & = 
\text{ (local martingale)}_t	\\
&\quad 
+ \int_0^tf(s,s)ds
-\int_0^t\bar{a}(s,T)ds
- \int_0^t\bar{\alpha}(s,T)ds \\
&\quad 
+ \frac{1}{2}\int_0^t\|\bar{b}(s,T)+\bar{\beta}(s,T)\|^2ds
+ \int_0^tg(s,s)m_sds
+ \int_0^tg(s,s)d\nu_s	\\
&\quad
- (g^{(T)} * \mu)^c_t
- \int_0^t h_s ds 
- \lambda_t \\
&\quad 
- \sum_{0<s\leq t}\Delta H^p_s
+ \sum_{0<s\leq t}(e^{g(s,s)\Delta\bar{\mu}_s}-1)(1-\Delta H^p_s)	\\
&\quad
+(\Psi * \mu^{p,(Y^{(T)},H)})_t,
\end{aligned}	\end{equation}
where we have used the fact that the finite variation process $(g^{(T)} * \mu)^c$ is predictable, being adapted and continuous.
Taking into account equation \eqref{eq:stoch_exp_1} and by \cite[Corollary I.3.16]{JacodShiryaev}, this implies that the discounted defaultable bond price $(P(t,T)/X^0_t)_{0\leq t\leq T}$ is a local martingale, for every $T\in[0,\T]$, if and only if the predictable finite variation part in \eqref{eq:FV} coincides with $\int_0^{\cdot}r_sds$. 
To this effect, let us 
analyze separately the absolutely continuous, singular and jump parts, for all $0\leq t\leq T\leq\T$. Beginning with the jump terms, it must hold that
\begin{equation}	\label{eq:proof_jumps}
-\Delta H^p_t + (e^{g(t,t)\Delta\bar{\mu}_t}-1)(1-\Delta H^p_t)
+ \Delta(\Psi * \mu^{p,(Y^{(T)},H)})_t = 0.
\end{equation}
Let $t=T$ and note that, in view of \cite[Proposition II.1.17]{JacodShiryaev}, it holds that
\[
\Delta(\Psi * \mu^{p,(Y^{(t)},H)})_t = \E\bigl[\Delta(\Psi * \mu^{(Y^{(t)},H)})_t \bigr|\cF_{t-}\bigr]
= e^{g(t,t)\Delta\bar{\mu}_t}\E\bigl[\bigl(e^{-\Delta Y^{(t)}_t}-1\bigr)(1-\Delta H_t)\bigr|\cF_{t-}\bigr] = 0,
\]
since $\Delta Y^{(t)}_t=0$, for all $t\in[0,T]$. 
Therefore, 
\[
-\Delta H^p_t + (e^{g(t,t)\Delta\bar{\mu}_t}-1)(1-\Delta H^p_t) = 0,
\]
which corresponds to condition {\em (ii)} of Theorem \ref{thm:dc}. In view of \eqref{eq:proof_jumps}, condition {\em (iii)} also follows.
Considering now the continuous singular terms of the finite variation part of  \eqref{eq:FV}, it must hold that
\[
 \int_0^tg(s,s)d\nu_s
- (g^{(T)} * \mu)^{{\rm sg}}_t
- \lambda_t 
+(\Psi * \mu^{p,(Y^{(T)},H)})^{{\rm sg}}_t = 0,
\]
for all $0\leq t\leq T\leq \T$, which yields condition {\em (v)}.
Finally, considering the densities of the absolutely continuous terms of the finite variation part of \eqref{eq:FV} and letting $t=T$, it must hold that
\begin{equation}	\label{eq:cond_short}
f(t,t)  + g(t,t)m_t - (g^{(t)} * \mu)^{{\rm ac}}_t
- h_t  + (\Psi * \mu^{p,(Y^{(t)},H)})^{{\rm ac}}_t = r_t,
\end{equation}
for Lebesgue-a.e.  $t\in[0,\T]$.
However, denoting by $A^{(T),{\rm ac}}$ the density of the absolutely continuous part of the predictable integrable process $A^{(T)}$ appearing in \eqref{eq:kernel1}, for $T\in[0,\T]$, it holds that 
\[
(\Psi * \mu^{p,(Y^{(t)},H)})^{{\rm ac}}_t = A^{(t),{\rm ac}}_t\int_{\R\times\{0,1\}}(e^{-y}-1)(1-z)K^{(t)}(t;dy,dz) = 0,
\]
since $\ind{y\neq0}K^{(t)}(t;dy,dz)=0$ for all $t\in[0,\T]$.
Moreover, by the same arguments used in the proof of \cite[Theorem II.1.8]{JacodShiryaev} (but with respect to the optional $\sigma$-field), it can be shown that there exist a kernel $N(\omega,t;du)$ from $(\Omega\times[0,\T],\cO)$ into $([0,\T],\cB([0,\T]))$ and a predictable integrable increasing process $D=(D_t)_{0\leq t\leq \T}$ such that $\mu(\omega;ds,du)=N(\omega,t;du)dD_s(\omega)$.
Letting $D^c=\int_0^{\cdot}D^{{\rm ac}}_sds+D^{{\rm sg}}$ be the decomposition of the continuous part of the process $D$ into an absolutely continuous part and a singular continuous part, it then follows that, for all $0\leq t\leq T\leq \T$,
\[
(g^{(T)} * \mu)^c_t
= \int_0^t\int_{(s,T]}g(s,u)N(s;du)D^{{\rm ac}}_sds
+ \int_0^t\int_{(s,T]}g(s,u)N(s;du)dD^{{\rm sg}}_s,
\]
so that $(g^{(t)} * \mu)_t^{{\rm ac}}=0$, for all $t\in[0,\T]$.
Therefore, we have shown that condition \eqref{eq:cond_short} reduces to
\[
f(t,t)  + g(t,t)m_t - h_t  = r_t,
\]
for Lebesgue-a.e. $t\in[0,\T]$, which corresponds to condition {\em (i)} in Theorem \ref{thm:dc}. 
Condition {\em (iv)} then follows by considering the remaining absolutely continuous terms and making use of condition {\em (i)}.
Conversely, it can be easily checked that conditions {\em (i)}-{\em (v)} of Theorem \ref{thm:dc} together imply that all the finite variation terms appearing in \eqref{eq:FV} vanish.
\end{proof}

The following simple lemma proves relation \eqref{eq:link_comp_meas}.

\begin{lemma}	\label{lem:rel_comp_meas}
Suppose that Assumptions \ref{ass:rnd_meas} and \ref{ass:processes} hold. 
Then, for every $T\in[0,\T]$, the compensating measure $\mu^{p,(Y^{(T)},H)}$ is related to the compensating measure $\mu^p$ as follows:
\[
\int_{\mathbb{R}}y\,\mu^{p,(Y^{(T)},H)}(\{t\}\times dy\times\{0,1\})
= \int_{(t,T]}g(t,u)\,\mu^p(\{t\}\times du),
\qquad\text{ for all } 0\leq t \leq T.
\]
\end{lemma}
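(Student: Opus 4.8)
The plan is to identify both sides of the claimed identity as the predictable projection of one and the same quantity, namely the jump $\Delta Y^{(T)}_t$ of the finite variation process $Y^{(T)}$. First I would record that, writing $g^{(T)}(\omega;s,u):=\ind{u\leq T}g(\omega;s,u)$, we have $Y^{(T)}=g^{(T)}*\mu$, so that, using part (i) of Assumption~\ref{ass:rnd_meas} (which forces $\mu(\{t\}\times du)$ to be supported on $\{u>t\}$),
\[
\Delta Y^{(T)}_t = \int_0^T g(t,u)\,\mu(\{t\}\times du) = \int_{(t,T]} g(t,u)\,\mu(\{t\}\times du),
\qquad\text{ for all } t\in[0,\T].
\]

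For the left-hand side, the map $(y,z)\mapsto y$ is $\cP\otimes\cB(\R\times\{0,1\})$-measurable (trivially, being independent of $(\omega,t,z)$), and integrating it against the jump measure $\mu^{(Y^{(T)},H)}$ reproduces the sum of jumps $\sum_{0<s\leq\cdot}\Delta Y^{(T)}_s$, whose compensator is $y*\mu^{p,(Y^{(T)},H)}$. Exactly as in the computation of $\Delta(\Psi * \mu^{p,(Y^{(t)},H)})_t$ carried out in the proof of Theorem~\ref{thm:dc}, \cite[Proposition~II.1.17]{JacodShiryaev} then yields
\[
\int_{\R}y\,\mu^{p,(Y^{(T)},H)}(\{t\}\times dy\times\{0,1\}) = \Delta\bigl(y*\mu^{p,(Y^{(T)},H)}\bigr)_t = \E\bigl[\Delta Y^{(T)}_t \bigm| \cF_{t-}\bigr].
\]
For the right-hand side, I would use that $\mu^p$ is the compensator of $\mu$, so that $g^{(T)}*\mu-g^{(T)}*\mu^p$ is a local martingale; since $g^{(T)}*\mu^p$ is predictable and of finite variation, its jump is the predictable projection of $\Delta(g^{(T)}*\mu)_t=\Delta Y^{(T)}_t$, whence
\[
\int_{(t,T]} g(t,u)\,\mu^p(\{t\}\times du) = \Delta\bigl(g^{(T)}*\mu^p\bigr)_t = \E\bigl[\Delta Y^{(T)}_t \bigm| \cF_{t-}\bigr].
\]
Comparing the two displays gives the assertion.

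The step I expect to be the main obstacle is justifying the integration of the unbounded mark $y$ against the compensators and applying the predictable-projection identity to the general, not necessarily integer-valued, measure $\mu^p$ on the right-hand side. Both of these are genuinely per-jump-time statements, so rather than establishing global local integrability I would subtract the two local martingales above to conclude that the predictable finite variation process $g^{(T)}*\mu^p - y*\mu^{p,(Y^{(T)},H)}$ differs from the continuous process $(Y^{(T)})^c$ by a local martingale, hence is itself continuous; its jumps therefore vanish and the two integrals agree at every $t$, the finiteness of each side being the quantity already recorded in \eqref{eq:link_comp_meas}.
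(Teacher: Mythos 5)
Your proof is correct and follows essentially the same route as the paper: both sides are identified with the predictable projection $\E[\Delta Y^{(T)}_t\,|\,\cF_{t-}]$ of the jump of $Y^{(T)}$, using Assumption~\ref{ass:rnd_meas}(i) to rewrite $\Delta Y^{(T)}_t$ as $\int_{(t,T]}g(t,u)\mu(\{t\}\times du)$ and the standard compensator identity (the paper cites \cite[\textsection~II.1.11]{JacodShiryaev} once for both sides, where you invoke Proposition~II.1.17 and a local-martingale argument, respectively). Your closing remark on integrability is a reasonable extra precaution but does not change the substance of the argument.
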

\begin{proof}
It suffices to remark that, in view of \cite[\textsection~II.1.11]{JacodShiryaev} together with the definition of $Y^{(T)}$ and the predictability of $g$,
\begin{align*}
\int_{(t,T]}g(t,u)\,\mu^p(\{t\}\times du)
&= \E\left[\int_{(t,T]}g(t,u)\,\mu(\{t\}\times du)\Bigr|\cF_{t-}\right]
=\E\bigl[\Delta Y^{(T)}_t\bigr|\cF_{t-}\bigr]	\\
&= \E\left[\int_{\mathbb{R}}y\,\mu^{(Y^{(T)},H)}(\{t\}\times dy\times\{0,1\})\Bigr|\cF_{t-}\right]	\\
&= \int_{\mathbb{R}}y\,\mu^{p,(Y^{(T)},H)}(\{t\}\times dy\times\{0,1\}).
\end{align*}
\end{proof}


\subsection{Proofs of the results of Section~\ref{sec:examples}}	\label{sec:proofs_cor}

\begin{proof}[Proof of Lemma \ref{lem:no_news_default}]
By definition of the process $Y^{(T)}$, the random set $\{\Delta Y^{(T)}\neq0\}$ is a subset of $\{(\omega,t)\in\Omega\times[0,T]:\mu(\omega;\{t\}\times[0,\T])>0\}$. Hence, condition \eqref{eq:no_news_default} implies that, up to an evanescent set, $\Delta Y^{(T)}\Delta H=0$, so that $\Psi * \mu^{(Y^{(T)},H)}=\Psi^{(1)} * \mu^{Y^{(T)}}$, using the notation introduced in \eqref{decomp:W}.
\end{proof}

\begin{proof}[Proof of Corollary \ref{cor:int_valued}]
Observe first that
\[
\bar{\mu}_t 
= \mu \bigl([0,\T]\times[0,t]\bigr)
= \mu\bigl([0,t]\times[0,t]\bigr) 
= \sum_{n\in\mathbb{N}}\ind{\tau_n\leq t,\sigma_n\leq t}
= \sum_{n\in\mathbb{N}}\ind{\tau_n\leq t}
\]
and, consequently, the decomposition \eqref{eq:dec_pred_proc} reduces to $\bar{\mu}_t=\sum_{0<s\leq t}\Delta\bar{\mu}_s$. 
This implies that condition {\em (i)} of Theorem \ref{thm:dc} reduces to condition {\em (i)} of the present corollary.  
Condition {\em (ii)} of the corollary directly follows from condition {\em (ii)} of Theorem \ref{thm:dc}.
Note also that
\[
(g^{(T)} * \mu)_t
= \sum_{n\in\mathbb{N}}g\left(\sigma_n,\tau_n\right)\ind{\tau_n\leq T}\ind{\sigma_n\leq t},
\]
so that the continuous part $(g^{(T)} * \mu)^c$ is null. 
Moreover, by the definition of the process $Y^{(T)}$,
\[
\Delta Y^{(T)}_t
= \int_0^Tg(t,u)\mu(\{t\}\times du)
= \Ind_{D}(t)\ind{\gamma_t\leq T}g(t,\gamma_t),
\]
where $D=\bigcup_{n\in\N}\dbra{\sigma_n}$ and $\gamma$ is an optional process such that $\tau_n=\gamma_{\sigma_n}$, for all $n\in\N$,
so that
\[
(\Psi^{(1)} * \mu^{Y^{(T)}})_t
= \sum_{0<s\leq t}e^{g(s,s)\Delta\bar{\mu}_s}\bigl(e^{-\Delta Y^{(T)}_s}-1\bigr)	\\
= \int_0^t\int_0^Te^{g(s,s)\Delta\bar{\mu}_s}\bigl(e^{-g(s,u)}-1\bigr)\mu(ds,du).
\]
Condition {\em (iii)} then follows from Lemma \ref{lem:no_news_default}, which implies that $\Psi * \mu^{p,(Y^{(T)},H)} = \Psi^{(1)} * \mu^{p,Y^{(T)}}$. 
Conditions {\em (iv)}-{\em (v)} of the corollary follow from conditions {\em (iv)}-{\em (v)} of Theorem~\ref{thm:dc}.
\end{proof}

\begin{proof}[Proof of Corollary \ref{cor:GS_case1}]
Under the present assumptions, the random measure $\mu(ds,du)$ is an integer-valued random measure. Since $\Q(\tau=\sigma_i)=0$, for all $i=1,\ldots,N$, condition \eqref{eq:no_news_default} holds, so that the assumptions of Corollary \ref{cor:int_valued} are satisfied. The process $(\bar{\mu}_t)_{0\leq t\leq \T}$ is given by  
\[
\bar{\mu}_t = \mu\bigl([0,\T]\times[0,t]\bigr) = \sum_{i=1}^N\ind{\tau_i\leq t}, 
\qquad\text{ for all } 0\leq t\leq \T,
\] 
so that conditions \emph{(i)-(ii)} of the present corollary follow from conditions \emph{(i)}-\emph{(ii)} of Corollary \ref{cor:int_valued}.
Condition {\em (iii)} of the present corollary corresponds to condition {\em (iv)} of Corollary \ref{cor:int_valued}, noting that $J^{{\rm ac}}_t=1$ and $F(t;du)=\xi_t(du)$, for all $t\in[0,\T]$.
Finally, under the present assumptions, conditions {\em (iii)} and {\em (v)} of Corollary \ref{cor:int_valued} are always satisfied.
\end{proof} 

\begin{proof}[Proof of Corollary \ref{cor:Merton}]
Note first that Assumption \ref{ass:rnd_meas} is clearly satisfied under the present assumptions.
Moreover, the process $(\bar{\mu}_t)_{0\leq t\leq \T}$ is simply given by
\[
\bar{\mu}_t = \mu\bigl([0,\T]\times[0,t]\bigr) 
= \mu\bigl(\{0\}\times[0,t]\bigr)
= \sum_{i=1}^{N}\ind{u_i\leq t},
\qquad \text{ for all }0\leq t\leq \T.
\]
Conditions {\em (i)}-\emph{(ii)} then follow from conditions {\em (i)}-\emph{(ii)} of Theorem \ref{thm:dc}. Moreover, for all $0\leq t\leq T\leq \T$, it holds that
\[
Y^{(T)}_t = (g^{(T)} * \mu)_t
= \int_0^Tg(0,u)\mu\bigl(\{0\}\times du\bigr)
= \sum_{i=1}^{N}g(0,u_i)\ind{u_i\leq T},
\] 
so that $Y^{(T)}_t=Y^{(T)}_0$, for all $0\leq t\leq T\leq \T$. 
In particular, $\Delta Y^{(T)}=0$, so that conditions {\em (iii)} and {\em (v)} of Theorem \ref{thm:dc} are automatically satisfied, since $\lambda=0$.
Condition {\em (iii)} of the corollary then immediately follows from condition {\em (iv)} of Theorem \ref{thm:dc}.
\end{proof}

\subsection{Proof of Theorem \ref{thm:dc2}} \label{sec:proof_recovery}

\begin{proof}[Proof of Theorem \ref{thm:dc2}]
In view of Lemma \ref{lem:compute_G}, credit risky bond prices admit the representation $P(t,T)=\cE(R)_t\,\exp(X^{(T)}_t)$, for all $0\leq t\leq T\leq \T$, where the process $X^{(T)}$ is defined as in \eqref{eq:XtT}. 
By the same arguments of Lemma \ref{lem:stoch_exp}, we have that
\[
P(t,T) = \cE\bigl(\widetilde{X}^{(T)}+R+[\widetilde{X}^{(T)},R]\bigr)_t.
\]
Note that $[\widetilde{X}^{(T)},R]=\sum_{0<s\leq \cdot}\Delta\widetilde{X}^{(T)}_s\Delta R_s$, since $R$ is of finite variation. 
Moreover, arguing similarly as in \eqref{eq:733}, it holds that
\begin{align*}
\Delta\widetilde{X}^{(T)}_t(1+\Delta R_t)
&= \bigl(e^{\Delta X^{(T)}_t}-1\bigr)(1+\Delta R_t)	\\
&= e^{g(t,t)\Delta\bar{\mu}_t}\left(e^{-\int_t^Tg(t,u)\mu(\{t\}\times du)}-1\right)\biggl(1+\int_{[-1,0]}x\mu^R(\{t\}\times dx)\biggr)	\\
&\quad
+ \left(e^{g(t,t)\Delta\bar{\mu}_t}-1\right)\biggl(1+\int_{[-1,0]}x\mu^R(\{t\}\times dx)\biggr).
\end{align*}
Hence,
\begin{align*}
\sum_{0<s\leq t}\Delta\widetilde{X}^{(T)}_s(1+\Delta R_s)
&= \bigl(\Psi * \mu^{(Y^{(T)},-R)}\bigr)_t
+ \sum_{0<s\leq t}\left(e^{g(s,s)\Delta\bar{\mu}_s}-1\right)\biggl(1+\int_{[-1,0]}x\mu^R(\{s\}\times dx)\biggr).
\end{align*}
Note that all the terms appearing in the last expression are a.s. finite, by the same arguments used after equation \eqref{eq:ineq_jumps} together with the fact that the process $R$ has bounded jumps.
Moreover,
\begin{align*}
&\sum_{0<s\leq t}\left(e^{g(s,s)\Delta\bar{\mu}_s}-1\right)\int_{[-1,0]}x\mu^R(\{s\}\times dx)	\\
&\qquad\qquad = (\text{local martingale})_t + \sum_{0<s\leq t}\left(e^{g(s,s)\Delta\bar{\mu}_s}-1\right)\int_{[-1,0]}x\mu^{p,R}(\{s\}\times dx).
\end{align*}
Hence, similarly as in the proof of Theorem \ref{thm:dc}, we obtain that
\begin{align*}
\widetilde{X}^{(T)}_t + R_t + [\widetilde{X}^{(T)},R]_t
&= (\text{local martingale})_t	\\
&\quad +\int_0^tf(s,s)ds
-\int_0^t\bar{a}(s,T)ds
-\int_0^t\bar{\alpha}(s,T)ds	
+\frac{1}{2}\int_0^t\bigl\|\bar{b}(s,T)+\bar{\beta}(s,T)\bigr\|^2ds	\\
&\quad
+\int_0^tg(s,s)m_sds
+\int_0^tg(s,s)d\nu_s
-(g^{(T)} * \mu)^c_t	\\
&\quad 
-C_t
+ \bigl(\Psi * \mu^{p,(Y^{(T)},-R)}\bigr)_t	\\
&\quad
+ \sum_{0<s\leq t}\left(e^{g(s,s)\Delta\bar{\mu}_s}-1\right)\biggl(1+\int_{[-1,0]}x\mu^{p,R}(\{s\}\times dx)\biggr).
\end{align*}
Conditions {\em (i)}-{\em (v)} then follow by a similar analysis as in the proof of Theorem \ref{thm:dc}.
\end{proof}

\appendix

\section{A variant of the stochastic Fubini theorem}

In this appendix, we show that Assumptions \ref{ass:rnd_meas} and \ref{ass:processes} imply that it is possible to interchange the order of integration in the term (3) appearing in equation \eqref{eq:dyn_3} in the proof of Lemma \ref{lem:compute_G}. 

As a preliminary, following \cite[Exercise 3.6]{Jacod}, we write the following unique decomposition of $\mu$:
\[
\mu(\omega;ds,du) = \mu^{(c)}(\omega;ds,du) + \mu^{(d)}(\omega;ds,du)
=\mu^{(c)}(\omega;ds,du) + 
\sum_{k=1}^{+\infty}
\ind{T_k(\omega)\leq\T}\delta_{T_k(\omega)}(ds)F_k(\omega;du),
\]
where $\mu^{(c)}$ is a random measure satisfying $\mu^{(c)}(\{t\}\times [0,\T])=0$ for all $t\in[0,\T]$, $(T_k)_{k\in\N}$ is a family of disjoint stopping times and, for every $k\in\N$, $F_k(\omega;du)$ is a kernel from $(\Omega,\cF_{T_k})$ into $([0,\T],\mathcal{B}([0,\T]))$ satisfying $\sup_{\omega\in\Omega}F_k(\omega;[0,\T])<+\infty$, for all $k\in\N$.
Furthermore, by the same arguments used in part (d) of the proof of \cite[Theorem II.1.8]{JacodShiryaev}, we can write
\[
\mu^{(c)}(\omega;ds,du) = K(\omega,s;du)dA_s(\omega),
\]
where $(A_t)_{0\leq t\leq \T}$ is an integrable increasing predictable process and $K(\omega,s;du)$ is a kernel from $(\Omega\times[0,\T],\mathcal{P})$ into $([0,\T],\mathcal{B}([0,\T]))$. Moreover, since $\mu^{(c)}(\{t\}\times [0,\T])=0$ for all $t\in[0,\T]$, the process $(A_t)_{0\leq t\leq \T}$ is continuous.
Summing up, we get the general decomposition
\begin{equation}	\label{eq:decomposition_mu}
\mu(\omega;ds,du) 
= K(\omega,s;du)dA_s(\omega) + \sum_{k=1}^{+\infty}\ind{T_k(\omega)\leq\T}\delta_{T_k(\omega)}(ds)F_k(\omega;du).
\end{equation}

\begin{lemma}	\label{lem:stoch_Fubini}
Suppose that Assumptions \ref{ass:rnd_meas} and \ref{ass:processes} hold. Then, for all $0\leq t\leq T\leq \T$, it holds that
\[
\int_0^t\int_0^T \int_s^t\Ind_{[0,u)}(v)\beta(v,u)dW_v\, \mu(ds,du)
=  \int_0^t \int_0^v \int_0^T\Ind_{[0,u)}(v)\beta(v,u) \mu(ds,du)\, dW_v.
\]
\end{lemma}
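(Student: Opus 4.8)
The plan is to prove the identity as a stochastic Fubini statement, interchanging the Brownian integral $dW_v$ with the random--measure integral $\mu(ds,du)$, and to do so by exploiting the decomposition \eqref{eq:decomposition_mu} of $\mu$ into its continuous part $\mu^{(c)}(ds,du)=K(s;du)\,dA_s$ (with $A$ continuous, predictable and increasing) and its discrete part $\mu^{(d)}(ds,du)=\sum_{k}\ind{T_k\le\T}\delta_{T_k}(ds)F_k(du)$. By linearity it suffices to establish the claimed identity separately for $\mu^{(c)}$ and for $\mu^{(d)}$ and then add the contributions. A first convenient step is to rewrite the inner stochastic integral as $\int_s^t(\cdot)\,dW_v=\int_0^t\Ind_{(s,t]}(v)\Ind_{[0,u)}(v)\beta(v,u)\,dW_v$, so that the problem becomes the interchange of a single $dW_v$ integral with $\mu(ds,du)$ applied to the parametrized integrand $(s,u)\mapsto\Ind_{(s,t]}(v)\Ind_{[0,u)}(v)\beta(v,u)$. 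Throughout, the integrability furnished by part (iii) of Assumption \ref{ass:processes}, namely $\int_0^\T\int_0^\T\|\beta(s,u)\|^2\mu_s(du)\,ds<+\infty$ a.s., will guarantee that all stochastic integrals are well-defined and that the hypotheses of the Fubini theorem are met.

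For the discrete part I would argue atom by atom. Fix $k$ and work on $\{T_k\le t\}$; since $W$ is continuous, replacing $\int_s^t$ by $\int_{T_k}^t$ and the condition $\{s<v\}$ by $\{T_k<v\}$ alters the integrands only on the $dW_v$-null set $\{v=T_k\}$, so the identity for $\mu^{(d)}$ reduces to
\[
\int_0^T\Bigl(\int_{T_k}^t\Ind_{[0,u)}(v)\beta(v,u)\,dW_v\Bigr)F_k(du)
=\int_{T_k}^t\Bigl(\int_0^T\Ind_{[0,u)}(v)\beta(v,u)\,F_k(du)\Bigr)dW_v .
\]
This is a stochastic Fubini statement for the finite random measure $F_k(du)$ against $dW_v$. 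As $F_k$ is $\cF_{T_k}$-measurable with $\sup_{\omega}F_k(\omega;[0,\T])<+\infty$ and the integrand is $\cO\otimes\cB([0,\T])$-measurable, one conditions on $\cF_{T_k}$ and applies the classical stochastic Fubini theorem (e.g.\ \cite[Theorem IV.65]{Protter}) on the finite measure space $([0,\T],\cB([0,\T]),F_k)$; the $L^2$-condition needed for its application, $\int_0^T\int_{T_k}^t\|\Ind_{[0,u)}(v)\beta(v,u)\|^2\,dv\,F_k(du)<+\infty$, follows from Assumption \ref{ass:processes}. Summing over $k$, with convergence ensured by the same integrability, yields the identity for $\mu^{(d)}$.

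For the continuous part, the measure $\mu^{(c)}(ds,du)=K(s;du)\,dA_s$ is, for almost every $\omega$, a genuine finite measure on the strip $\{(s,u):0\le s\le u\le T\}$, the finiteness following from $\mu^{(c)}([0,T]\times[0,T])\le\mu_T([0,T])<+\infty$ by part (ii) of Assumption \ref{ass:rnd_meas}; moreover $A$ is continuous, so the distinction between $\{s\le v\}$ and $\{s<v\}$ is again immaterial. Here I would invoke the stochastic Fubini theorem in the very form already used to derive \eqref{eq:F} (cf.\ \cite[Theorem IV.65]{Protter} or \cite[Proposition A.2]{BMKR}), now with the deterministic parameter measure replaced by $K(s;du)\,dA_s$; the predictability of $A$ and the measurability of the kernel $K$ guarantee the integrand is of the required product-measurable type, and the integrability hypothesis is once more supplied by Assumption \ref{ass:processes}. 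Adding the continuous and discrete contributions gives the asserted equality.

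The main obstacle I anticipate is purely technical bookkeeping rather than conceptual: verifying the joint measurability of the parametrized stochastic integral $(s,u)\mapsto\int_s^t\Ind_{[0,u)}(v)\beta(v,u)\,dW_v$ so that the outer $\mu(ds,du)$-integral is meaningful (this is itself part of the output of the parametrized Fubini theorem), and checking that the integrability hypotheses of the chosen Fubini theorem coincide exactly with those guaranteed by Assumption \ref{ass:processes}. Once the decomposition \eqref{eq:decomposition_mu} has reduced the statement to integration against finite random measures, no genuinely new difficulty arises beyond matching the limits of integration, which the continuity of both $W$ and $A$ renders harmless.
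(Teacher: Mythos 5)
Your overall architecture matches the paper's: split $\mu$ via \eqref{eq:decomposition_mu} into $\mu^{(c)}(ds,du)=K(s;du)\,dA_s$ and the atomic part, treat each piece by a stochastic Fubini argument, and add. Your treatment of the discrete part is essentially the paper's, though stated loosely: ``conditioning on $\cF_{T_k}$'' does not by itself make the random measure $F_k(du)$ admissible in \cite[Theorem IV.65]{Protter}, whose hypothesis is a deterministic parameter measure. The paper makes this rigorous by shifting to the filtration $\cF^k_t:=\cF_{(T_k+t)\wedge\T}$ and the Brownian motion $W^k_t:=W_{(T_k+t)\wedge\T}-W_{T_k\wedge\T}$, so that $F_k$ becomes $\cF^k_0$-measurable, and then observing that the proof of the Fubini theorem survives an $\cF_0$-measurable (rather than deterministic) measure. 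This is a repairable imprecision, not a conceptual error, since your key observation --- that $F_k$ is known before the integration interval $[T_k,t]$ begins --- is exactly the right one.

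The genuine gap is in the continuous part. You propose to ``invoke the stochastic Fubini theorem in the very form already used to derive \eqref{eq:F}, now with the deterministic parameter measure replaced by $K(s;du)\,dA_s$.'' No such off-the-shelf statement exists: both \cite[Theorem IV.65]{Protter} and \cite[Proposition A.2]{BMKR} require the parameter measure to be non-random, and $K(s;du)\,dA_s$ is a genuinely random, time-evolving measure that is not $\cF_0$-measurable, so neither the shift trick from the discrete part nor a naive substitution applies. An unrestricted Fubini interchange against an arbitrary random parameter measure is false in general (an anticipating measure would break the adaptedness of the swapped integrand), so the predictability of $K$ and $A$ must actually be \emph{used}, not merely cited. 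This is precisely where the paper's proof does real work: it first normalizes so that $A$ is continuous and strictly increasing with $K(s;[0,\T])\le 1$, then performs the time change $C_t=\inf\{s:A_s\ge t\}$ to rewrite $\int_0^t X^{t,T}_s\,dA_s$ as $\int_0^\infty \ind{C_s\le t}X^{t,T}_{C_s}\,ds$, i.e.\ as an integral against the \emph{deterministic} Lebesgue measure with the predictable integrand $L^{(T)}(v,s)=\ind{C_s\le v}\int_{(v,T]}\beta(v,u)K(C_s;du)$ (predictability coming from $C_s$ being a predictable time and $K(C_s;du)$ being $\cF_{C_s-}$-measurable), and only then applies \cite[Theorem IV.65]{Protter}. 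Without this reduction, or some substitute for it, your argument for the $\mu^{(c)}$ contribution does not go through.
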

\begin{proof}
We first consider the integral with respect to the purely discontinuous part $\mu^{(d)}$ of $\mu$:
\[
\int_0^t\int_0^T\int_s^t\Ind_{[0,u)}(v)\beta(v,u)dW_v\,\mu^{(d)}(ds,du)
= \sum_{k=1}^{+\infty}\ind{T_k\leq t}\int_0^T\int_{T_k}^t\Ind_{[0,u)}(v)\beta(v,u)dW_v\,F_k(du).
\]
For each $k\in\N$, let us define the filtration $\FF^k=(\cF^k_t)_{0\leq t\leq\T}$ by $\cF^k_t:=\cF_{(T_k+t)\wedge \T}$, for all $t\in[0,\T]$, and the stochastic process $W^k=(W^k_t)_{0\leq t\leq \T}$ by $W^k_t:=W_{(T_k+t)\wedge\T}-W_{T_k\wedge\T}$, for all $t\in[0,\T]$. The strong Markov property implies that $W^k$ is a Brownian motion in the filtration $\FF^k$, for each $k\in\N$. Therefore, for each $k\in\N$, we can write:
\begin{align*}
\ind{T_k\leq t}\int_0^T\int_{T_k}^t\Ind_{[0,u)}(v)\beta(v,u)dW_v\,F_k(du)
&= \ind{T_k\leq t}\int_0^T\int_{0}^{t-T_k}h^k(v,u)dW^k_v\,F_k(du),
\end{align*}
where the process $h^k(\cdot,u):=\Ind_{[0,u)}(\cdot+T_k)\beta(\cdot+T_k,u)$ is $\FF^k$-adapted and $\int_0^{t-T_k}h^k(v,u)dW^k_v$ makes sense as a Brownian stochastic integral parametrised by $u$ in the filtration $\FF^k$. Since the kernel $F_k(du)$ is $\cF^k_0$-measurable, we can apply the stochastic Fubini theorem of \cite[Theorem IV.65]{Protter}\footnote{
A careful inspection of the proof of \cite[Theorem IV.65]{Protter} reveals that the stochastic Fubini theorem holds true even if the measure $F_k(du)$ is not deterministic but only $\cF^k_0$-measurable, since $\sup_{\omega\in\Omega}F_k(\omega;[0,\T])<+\infty$.
The integrability condition $\int_0^{T-T_k}\int_0^T(h^{k,i}(v,u))^2F_k(du)dv<+\infty$ a.s., for all $i=1,\ldots,n$, required for the application of \cite[Theorem IV.65]{Protter}, is implied by the requirement $\int_0^{\T}\int_0^{\T}\|\beta(v,u)\|^2\mu_v(du)dv<+\infty$ a.s. appearing in Assumption \ref{ass:processes}.} in the filtration $\FF^k$ to obtain
\begin{align*}
\ind{T_k\leq t}\int_0^T\int_{0}^{t-T_k}h^k(v,u)dW^k_v\,F_k(du)
&= \ind{T_k\leq t}\int_{0}^{t-T_k}\int_0^Th^k(v,u)F_k(du)\,dW^k_v	\\
&= \ind{T_k\leq t}\int_{0}^{t}\ind{T_k\leq v}\int_0^T\Ind_{[0,u)}(v)\beta(v,u)F_k(du)\,dW_v\\
&= \int_{0}^{t}\int_0^T\ind{T_k\leq v}\Ind_{[0,u)}(v)\beta(v,u)F_k(du)\,dW_v.
\end{align*}
Applying this argument to every $k\in\N$, we get that
\begin{align*}
\int_0^t\int_0^T\int_s^t\Ind_{[0,u)}(v)\beta(v,u)dW_v\,\mu^{(d)}(ds,du)
&= \sum_{k=1}^{+\infty}\int_{0}^{t}\int_0^T\ind{T_k\leq v}\Ind_{[0,u)}(v)\beta(v,u)F_k(du)\,dW_v	\\
&=  \lim_{N\rightarrow+\infty}\int_{0}^{t}\ell^N(v,T) dW_v,
\end{align*}
where $\ell^N(\cdot,T):=\sum_{k=1}^{N}\int_0^T\ind{T_k\leq \cdot}\Ind_{[0,u)}(\cdot)\beta(\cdot,u)F_k(du)$ is an $\FF$-adapted process, integrable with respect to the Brownian motion $W$. 
For each fixed $v$ and $T$, the sequence $(\ell^N(v,T))_{N\in\N}$ converges pointwise to
\begin{align*}
\sum_{k=1}^{+\infty}\int_0^T\ind{T_k\leq v}\Ind_{[0,u)}(v)\beta(v,u)F_k(du)
&=\int_0^{v}\int_{(v,T]}\beta(v,u)\mu^{(d)}(ds,du)
=: \bar{\beta}^{(d)}(v,T).
\end{align*}
By Assumption \ref{ass:processes}, the process $\bar{\beta}^{(d)}(\cdot,T)$ is integrable with respect to $W$ (this follows exactly as in the first part of the proof of Lemma 4.1).
Moreover, it can be checked that 
\[
|\ell^{N,i}(v,T)| \vee  |\bar{\beta}^{(d),i}(v,T)|
\leq \int_{(v,T]}|\beta^i(v,u)|\mu^{(d)}([0,v]\times du),
\]
for each $i=1,\ldots,n$, and the process $(\int_{(v,T]}|\beta^i(v,u)|\mu^{(d)}([0,v]\times du))_{0\leq v\leq \T}$ is integrable with respect to $W^i$, for each $i=1,\ldots,n$. 
The dominated convergence theorem for stochastic integrals (see \cite[Theorem IV.32]{Protter}) implies that $\int_0^{\cdot}\ell^N(v,T)dW_v$ converges uniformly on compacts in probability to $\int_0^{\cdot}\bar{\beta}^{(d)}(v,T)dW_v$ as $N\rightarrow+\infty$. Putting together the above results, we have shown that
\begin{align*}
\int_0^t\int_0^T\int_s^t\Ind_{[0,u)}(v)\beta(v,u)dW_v\,\mu^{(d)}(ds,du)
&= \lim_{N\rightarrow+\infty}\int_{0}^{t}\ell^N(v,T) dW_v\\
&= \int_0^t\bar{\beta}^{(d)}(v,T)dW_v	\\
&= \int_0^t\int_0^v\int_0^T\Ind_{[0,u)}(v)\beta(v,u)\mu^{(d)}(ds,du)\,dW_v.
\end{align*}

It remains to perform an analogous interchange of the order of integration with respect to the continuous random measure $\mu^{(c)}$ in the term
\begin{equation}	\label{eq:cont_part}
\int_0^t\int_0^T\int_s^t\Ind_{[0,u)}(v)\beta(v,u)dW_v\,\mu^{(c)}(ds,du)
= \int_0^t\int_0^T\int_s^t\Ind_{[0,u)}(v)\beta(v,u)dW_v\,K(s;du)dA_s.
\end{equation}
As a preliminary, we can assume without loss of generality that the process $(A_t)_{0\leq t\leq\T}$ is strictly increasing. Indeed, if $\mu^{(c)}(ds,du)=K(s;du)dA_s$ is an arbitrary disintegration of the random measure $\mu^{(c)}$ into a predictable kernel $K$ and an increasing continuous process $A$, let us consider the unique decomposition (compare with the proof of Lemma 2.1)
\[
A_t
=\int_0^ta_sds + \int_0^t\Ind_N(s)dA_s
=:\int_0^ta_sds+A^{{\rm sg}}_t,
\] 
where $(a_t)_{0\leq t\leq \T}$ is a predictable non-negative integrable process and $N$ is a predictable set such that its sections have Lebesgue measure zero a.s. Let $A_t':=t+A^{{\rm sg}}_t$, for all $t\in[0,\T]$, and  $K'(t;du):=(a_t\Ind_{N^c}(t)+\Ind_{N}(t))K(t;du)$. It is clear that the process $(A'_t)_{0\leq t\leq \T}$ is continuous and strictly increasing and  $K'(t;du)$ is a kernel from $(\Omega\times[0,\T],\mathcal{P})$ into $([0,\T],\mathcal{B}([0,\T]))$.
Moreover, it holds that $K'(t;du)dA'_t=K(t;du)dA_t$.
We can furthermore assume that $K(s;[0,\T])\leq1$  for all $s\in[0,\T]$. Indeed, if $\mu^{(c)}(ds,du)=K(s;du)dA_s$ is a disintegration of the random measure $\mu^{(c)}$ with respect to a continuous and strictly increasing process $(A_t)_{0\leq t\leq \T}$, let
\[
\widetilde{K}(s;du) := \frac{K(s;du)}{K(s;[0,\T])+\epsilon}
\qquad\text{and}\qquad
\widetilde{A} := \int_0^{\cdot}\bigl(K(s;[0,\T])+\epsilon\bigr)dA_s,
\]
for some $\epsilon>0$.
Clearly, we have that $\widetilde{K}(s;[0,\T])\leq1$ for all $s\in[0,\T]$ and the process $(\widetilde{A}_t)_{0\leq t\leq \T}$ is continuous and strictly increasing  (since $(A_t)_{0\leq t\leq \T}$ is continuous and strictly increasing). Moreover, it holds that 
$
\widetilde{K}(s;du)d\widetilde{A}_s
= K(s;du)dA_s = \mu^{(c)}(ds,du)
$.
Summing up, in the disintegration $\mu^{(c)}(ds,du)=K(s;du)dA_s$ we can always assume that the kernel $K$ satisfies $K(s;[0,\T])\leq 1$ for all $s\in[0,\T]$ and that the process $(A_t)_{0\leq t\leq \T}$ is continuous and strictly increasing.
In the following, we will make use of these two properties.

Consider the inner integral in \eqref{eq:cont_part}: $\int_0^T\int_s^t\Ind_{[0,u)}(v)\beta(v,u)dW_v\,K(s;du)$, for a fixed $s\in[0,t]$. By the same arguments used in the first part of the proof, 
we can use the stochastic Fubini theorem (see \cite[Theorem IV.65]{Protter}\footnote{The integrability condition $\int_s^t\int_0^T\Ind_{[0,u)}(v)(\beta^{i}(v,u))^2K(s;du)dv<+\infty$ a.s., for each $0\leq s\leq t\leq T\leq\T$ and $i=1,\ldots,n$, required for the application of \cite[Theorem IV.65]{Protter} is implied by the stronger condition $\int_0^{\T}\int_0^{\T}\|\beta(v,u)\|^2\mu_v(du)dv<+\infty$ a.s. appearing in Assumption \ref{ass:processes}.}) to deduce that, for all $0\leq s\leq t\leq T\leq \T$,
\[
\int_0^T\int_s^t\Ind_{[0,u)}(v)\beta(v,u)dW_v\,K(s;du)
= \int_s^t\int_0^T\Ind_{[0,u)}(v)\beta(v,u)K(s;du)\,dW_v.
\]
Hence:
\[
\int_0^t\int_0^T\int_s^t\Ind_{[0,u)}(v)\beta(v,u)dW_v\,\mu^{(c)}(ds,du)
= \int_0^tX^{t,T}_s\,dA_s,
\]
where the continuous process $(X^{t,T}_s)_{0\leq s\leq t}$ is defined by $X^{t,T}_s:=\int_s^t\int_{(v,T]}\beta(v,u)K(s;du)\,dW_v$, for all $s\in[0,t]$. Since the process $(A_t)_{0\leq t\leq \T}$ is continuous and strictly increasing, the change of time process $(C_t)_{t\geq0}$ defined by $C_t:=\inf\{s\in[0,\T]:A_s\geq t\}$, for $t\in\R_+$, is continuous and strictly increasing, so that $A_{C_t}=t$ and $C_{A_t}=t$, for all $t\in[0,\T]$. Moreover, arguing similarly as in the proof of \cite[Lemma I.3.12]{JacodShiryaev}, we have that
\[
\int_0^tX^{t,T}_s\,dA_s 
= \int_0^{\infty}\ind{C_s\leq t}X^{t,T}_{C_s}ds,
\]
meaning that
\begin{align*}
\int_0^t\int_0^T\int_s^t\Ind_{[0,u)}(v)\beta(v,u)dW_v\,\mu^{(c)}(ds,du)
&= \int_0^{\infty}\ind{C_s\leq t}\int_{C_s}^t\int_{(v,T]}\beta(v,u)K(C_s;du)\,dW_v\,ds\\
&= \int_0^{\infty}\int_{0}^t\ind{C_s\leq v}\int_{(v,T]}\beta(v,u)K(C_s;du)\,dW_v\,ds\\
&= \int_0^{\infty}\int_{0}^tL(v,s)\,dW_v\,ds,
\end{align*}
where the process $(L^{(T)}(t,s))_{0\leq t\leq T}$ is defined by $L^{(T)}(t,s):=\ind{C_s\leq t}\int_{(t,T]}\beta(t,u)K(C_s;du)$, for every $t\in[0,T]$ and  $s\in\R_+$. Note that, since each $C_s$ is a predictable time (due to the continuity of $A$) and the kernel $K(C_s;du)$ if $\cF_{C_s-}$-measurable, the process $(L^{(T)}(t,s))_{0\leq t\leq T}$ is predictable.
At this point, the stochastic Fubini theorem of \cite[Theorem IV.65]{Protter} implies that
\begin{equation}	\label{eq:last_Fubini}
\int_0^{\infty}\int_{0}^tL^{(T)}(v,s)\,dW_v\,ds
= \int_{0}^t\int_0^{\infty}L^{(T)}(v,s)\,ds\,dW_v.
\end{equation}
The integrability requirement appearing in \cite[Theorem IV.65]{Protter} is satisfied. Indeed, using H\"older's inequality and recalling that $K(s;[0,\T])\leq1$ for all $s\in[0,\T]$, it holds that
\begin{align*}
\int_0^{\T}\int_0^{\infty}(L^{(T),i}(v,s))^2ds\,dv
&= \int_0^{\T}\int_0^{\infty}\biggl(\ind{C_s\leq v}\int_{(v,T]}\beta^i(v,u)K(C_s;du)\biggr)^2ds\,dv	\\
&\leq  \int_0^{\T}\int_0^{\infty}\ind{C_s\leq v}\int_{(v,T]}(\beta^i(v,u))^2K(C_s;du)ds\,dv	\\
&= \int_0^{\T}\int_0^v\int_{(v,T]}(\beta^i(v,u))^2K(s;du)dA_s\,dv\\
&= \int_0^{\T}\int_{(v,T]}(\beta^i(v,u))^2\mu^{(c)}([0,v]\times du)\,dv\\
&\leq \int_0^{\T}\int_0^{\T}\|\beta(v,u)\|^2\mu_v(du)dv < +\infty,
\end{align*}
for every $i=1,\ldots,n$.
Hence, continuing from \eqref{eq:last_Fubini}, we get that
\begin{align*}
\int_0^t\int_0^T\int_s^t\Ind_{[0,u)}(v)\beta(v,u)dW_v\,\mu^{(c)}(ds,du)
&= \int_{0}^t\int_0^{\infty}L^{(T)}(v,s)\,ds\,dW_v	\\
&= \int_{0}^t\int_0^{\infty}\ind{C_s\leq v}\int_{(v,T]}\beta(v,u)K(C_s;du)\,ds\,dW_v	\\
&= \int_{0}^t\int_0^v\int_{(v,T]}\beta(v,u)K(s;du)\,dA_s\,dW_v	\\
&= \int_{0}^t\int_0^v\int_0^T\Ind_{[0,u)}(v)\beta(v,u)\mu^{(c)}(ds,du)\,dW_v.
\end{align*}

Finally, by linearity of the integral, it holds that
\begin{align*}
&\int_0^t\int_0^T\int_s^t\Ind_{[0,u)}(v)\beta(v,u)dW_v\,\mu(ds,du)	\\
&= \int_0^t\int_0^T\int_s^t\Ind_{[0,u)}(v)\beta(v,u)dW_v\,\mu^{(c)}(ds,du)
+ \int_0^t\int_0^T\int_s^t\Ind_{[0,u)}(v)\beta(v,u)dW_v\,\mu^{(d)}(ds,du)	\\
&=  \int_{0}^t\int_0^v\int_0^T\Ind_{[0,u)}(v)\beta(v,u)\mu^{(c)}(ds,du)\,dW_v
+  \int_{0}^t\int_0^v\int_0^T\Ind_{[0,u)}(v)\beta(v,u)\mu^{(d)}(ds,du)\,dW_v	\\
&=  \int_{0}^t\biggl(\int_0^v\int_0^T\Ind_{[0,u)}(v)\beta(v,u)\mu^{(c)}(ds,du)+\int_0^v\int_0^T\Ind_{[0,u)}(v)\beta(v,u)\mu^{(d)}(ds,du)\biggr)dW_v\\
&= \int_{0}^t\int_0^v\int_0^T\Ind_{[0,u)}(v)\beta(v,u)\mu(ds,du)\,dW_v,
\end{align*}
thus completing the proof.
\end{proof}


\end{document}